\documentclass[11pt]{article}

\usepackage[english]{babel}

\usepackage[letterpaper,top=2cm,bottom=2cm,left=3cm,right=3cm,marginparwidth=1.75cm]{geometry}

\usepackage{amsmath}
\usepackage{graphicx}
\usepackage[colorlinks=true, allcolors=blue]{hyperref}
\usepackage{setspace}
\usepackage{natbib}
\usepackage{booktabs}
\usepackage{multirow}
\usepackage{array}
\usepackage{amsthm}
\usepackage{amssymb}
\usepackage{caption} 
\usepackage[perpage]{footmisc}
\setcitestyle{authoryear,round}
\setstretch{1.1}

\newcolumntype{L}[1]{>{\raggedright\arraybackslash}p{#1}}
\newcolumntype{C}[1]{>{\centering\arraybackslash}p{#1}}
\newtheorem{theorem}{Theorem}  
\newtheorem{proposition}{Proposition}  
\newtheorem{Hypothesis}{Hypothesis}

\title{How Big Data Dilutes Cognitive Resources, Interferes with Rational Decision-making and Affects Wealth Distribution ?}
\author{Yongheng Hu\footnote{School of International Business, Zhejiang International Studies University, Liuhe Road, Hangzhou 310023, China. Correspondence to: Yongheng Hu (22030101043@st.zisu.edu.cn). As a new working paper, parts of this article have had been publicly reported in \textit{The Academic Forum of the Institute of New Structural Economics of Peking University in 2025}. \textit{The International Conference on Basic Sciences (ICBS) in 2025}. I am very grateful to Dr.Mingyi Yang of Washington State University for his guidance and many helpful discussions. This working paper is incomplete and still being improved, all comments and opinions about the article are welcome. Of course all remaining omissions and errors in statement or technique are mine.}}

\begin{document}
\maketitle

\begin{abstract}
Big data has exponentially dilated consumption demand and speed, but can they all be converted to utility? We argue about the measures of consumption and utility acquisition in CRRA utility function under the condition of big data interaction, we indicate its weakness, i.e., irrational consumption does not lead to the acquisition of utility. We consider that big data, which is different from macro and micro economic signals, formed by general information entropy, affects agents' rational cognition, which makes a part of their consumption ineffective. We preliminarily propose the theory that how dilution mechanism driven by big data will affect agents' cognitive resources. Based on theoretical and empirical analysis, we construct the Consumption Adjustment Weight Function (CAWF) of agents interacting with big data and further apply it to a model of firm wealth distribution with financial frictions, we get analytical solutions according to the Mean Field Game (MFG) and find: Lower financial friction increases the average wealth of firms but also leads to greater wealth inequality. When agents convert effective consumption into utility, which is a weight of total consumption, the average wealth of firms increases with the weight increasing. Meanwhile, wealth inequality follows a U-shaped trend, and it will be the lowest level when the weight approaches to 0.5. In conclusion, we try to provide a new complementary hypothesis to refine the “Lucas Critique” according to the cognitive resources as endowments involved in the decision-making of agents.

\textbf{Key Words}: Big Data Interaction, Data Value, Information Uncertainty, Rational Decision-making, Consumption Adjustment, Financial Friction, Wealth Distribution

\textbf{JEL Codes}: C02, D3, D81, D91, E7
\end{abstract}

\newpage

\section{Introduction}
“The medium is the information” (Marshall McLuhan) profoundly reveals the decisive influence of the form of the medium itself on social development and human cognition, whose importance far exceeds the specific content carried by the medium. In the era of digital economy, information is “materialized” into data\footnote{Big data in this article is regarded as an aggregate of information in the broad sense, the value and scale of big data are determined by the uncertainty of information (information entropy) and the volume of information, respectively. In reality, it can be understood that big data comes from “information particles”, i.e., “Token”, the content and scale of Token determine the two properties of big data.}, which becomes structured data like “hot media” and unstructured data like “cold media”. These two different types of data shift the material basis of knowledge and information production from traditional discourse practices to quantifiable and tradable data resources, completing the dataization transformation of knowledge, information, and factors of production. \citep{jones2020nonrivalry}. Therefore, the most interesting issue of this paper is to try to update the individual's information interaction behavior from the traditional face-to-face information exchange and belief updating to the demand for big data consisting of more micro information quantified by general information entropy, to construct a structural model of the agent's consumption decision-making with the interaction of big data, and to explore how the agent's consumption decision-making changes as he or she interacts more deeply and persistently with big data.

Our core assumptions: (\textbf{A}) The uncertainty of information determines the value of information, and numerous pieces of information with different values constitute big data and form data values. (\textbf{B}) The degree of rationality of an agent is determined by the level of cognitive resources the agent possesses\footnote{We consider that agents' cognitive resources endowments determine their rational states, and for each rational state, the agents will undergo a certain belief transition.}. (\textbf{C}) The agent's interaction with big data is an interaction between “data scale” and “data value”: Data scale affects the agent's cognitive resources, which determines the validity of consumption adjustment (i.e., only the consumption adjustments made in a fully rational situation will be converted into 100$\%$ incremental utility). Data value affects the direction of the agent's consumption adjustment (i.e., whether the agent chooses to increase or decrease consumption is determined by the level of data value). (\textbf{D}) The agent's interaction with big data will generate a consumption-to-utility weight, i.e., in the interaction with big data, the agent's consumption is not completely effective, only a certain weight of consumption is effective and can be converted to utility, and this weight will evolve and converge to a fixed non-zero value as the agents continue to interact with big data\footnote{This can be easily explained by platform economy: Agents browse through various types of networks, e-commerce, video and other platforms, accepting the scale and different value of the data, affecting their cognitive resources, and then they make consumption decisions after the cognitive resources have been affected, their decisions will be irrational, and can't be fully converted into utility.}.

The literature about data as information affecting the economy: \citet{farboodi2021model} and \citet{farboodi2019big} view data as an information resource that reduces production uncertainty, and that firms use data elements to acquire forward-looking knowledge that improves the accuracy of their predictions of optimal production techniques and increases productivity levels. E-commerce online platforms use consumer information materialized by data to change the “gloss” of product quality, causing consumers to misjudge the true quality of the product and inducing unwanted consumption behavior \citep{acemoglu2025big}. Moreover, platforms may not only use “big data” pricing mechanisms to make profits but may also redistribute information in the form of “filter bubbles” that target heterogeneous personal data to maximize platform engagement \citep{acemoglu2024model}. Then, \citet{ding2024consumer} argues that data-transformed consumer benefits, such as digital vouchers or shopping subsidies, have a stimulative effect on consumption growth that stems from increased consumer spending in the targeted category, rather than crowding out consumption spending in other categories. Some studies have incorporated information uncertainty into the “efficiency-equity” research framework, pointing out that information frictions link data generation and economic activity, mainly in the following ways: Micro-level uncertainty creates resource mismatches in the macro-system, and makes macro total factor productivity is endogenous to the data collection behaviors of micro-subjects (\citealp{farboodi2021model}; \citealp{david2016information}; \citealp{benhabib2016endogenous}), which can be attributed to the widespread use of dataization mobile communication tools, whose availability of data allows for more random and rapid changes in individual behavior \citep{fabregas2025digital}, thus leads to the inability of traditional data selection mechanisms to accurately identify imperfect information and decision-making errors, as \citet{gans2025ai} points out that AI's analytical and decision-making capabilities are excellent in data-rich domains but less trustworthy in judgment-intensive data environments. Therefore, \citet{caplin2025data} introduces new forms of data to identify agents' preferences, beliefs, etc. by constructing data engineering models. Notably, \citet{jones2025how} and \citet{jones2024the} explore the possibility that big data-driven AI technologies may pose a threat to human survival while promoting economic growth at the level of heterogeneous agent utility acquisition. However, as mentioned in the opening section, he simply attributes the factors affecting utility acquisition to the heterogeneity of agents' risk aversion coefficients, ignoring the relationship between consumption and effective consumption (consumption that delivers utility) in the presence of big data (technology like “AI”) interactions.

Other thought-provoking literature on the impact of information on individual decision-making includes \citet{handel2018frictions} analyze “frictions” and “mental gaps” in the use of information, revealing their impact on economic decision-making. \citet{epley2016mechanics} explored the mechanism of motivated reasoning and analyzed how people adjust information to maintain belief consistency, and \citet{benabou2016mindful} proposed the framework of “economics of beliefs” to analyze the production and consumption of beliefs and their intrinsic values. \citet{gino2016motivated} study motivated Bayesian behavior, revealing how individuals balance between moral sense and selfish behavior. \citet{grubb2015overconfident} reveals consumers' overpurchasing and choice mistakes due to overconfidence in the market. \citet{barberis2013thirty} systematically reviews the application of prospect theory in behavioral economics.

Further, studies on deviations from rational expectations of complete information are also relevant to our paper, e.g., \citet{bordalo2020overreaction}, \citet{coibion2015information}, \citet{coibion2012what}, \citet{carroll2003macroeconomic}, \citet{mankiw2003disagreement}. There is also a literature on constructing quantitative models based on the spread of information, e.g., \citet{carroll2020sticky}, \citet{mackowiak2015business}, \citet{woodford2013macroeconomic}, \citet{mankiw2007sticky}, and applying them to expectancy inference \citep{adam2019stock}, measuring confidence fluctuations \citep{angeletos2018quantifying} and ambiguity (\citealp{baqaee2020asymmetric}; \citealp{bianchi2018uncertainty}; \citealp{bidder2012robust}). At the technical perception, the most classic research could be traced back to \citet{Lucas}, after that, \citet{bhandari2025survey}, \citet{hansen2016sets}, \citet{strzalecki2011axiomatic}, and \citet{hansen2001a}, \citet{hansen2001b} have been more sophisticated in their research on decision theory and subjective belief updating models. They generally place the research problem in a dynamic analytical framework, arguing that agents update beliefs through subjective probabilistic distortions that deviate from rational expectations. By exogenizing the belief distortion parameter (e.g., setting it as an AR process), constructing a belief distortion operator to measure the degree of belief distortion, and further defining the subjective probability distortion measure, calculating the discounted value of the future expectation, and placing it into a recursive equation to measure the continuation utility. The basic core is to set the belief distortion parameter exogenously, but its impact on the economy is endogenous, i.e., the continuation utility of agents is affected by economic factors, such as unemployment and inflation, which lead to changes in the belief distortion parameter, thus further amplifying the economic fluctuations under the influence of the belief distortion parameter. DSGE models are usually popular in such research frameworks.

In contrast, our study is based on a static framework, and we have a central exogenous assumption, we need big data interactions to occur: Our big data interaction environment is optionally decided by the agent, and the “consumption-adjusted weights” as defined in this paper exist only when the agent's behavior of interacting with big data exists. Therefore, it can be understood that the theory of our article is not parallel to the traditional decision theory and subjective belief updating model, but is a further continuation, i.e., we try to analyze: when each dynamic time node arrives (implying that it is a static situation at that node), the agent's subjective beliefs finish updating, and the big data interactions have occurred for a period, then, agent makes consumption decision at this time node, how much of the utility is available according to the consumption decision? 

This would imply that: Firstly, big data comes from more micro information elements (like unstructured data) that are not capable of having accurate value judgments, which are not similar to macroeconomic information (like structured data)\footnote{Macroeconomic fluctuations can be directly represented by the quantification of an economic variable and in this way influence the agent's beliefs, pessimistic or optimistic.}, they (big data) are generally disseminated through online media and digital platforms forming data elements that affect the cognitive resources of agents only at the time they have received the big data. Since the elements that drive belief updating come more from macro information, for the agent, the former can actively choose whether to accept it or not, while the latter's acceptance of macro information is passive and unavailable to the agent due to the objective existence of macro-economic dynamics. Secondly, when the agent undergoes a big data interaction, its cognitive resources are affected, which affects the weight of its effective consumption over the total consumption, and when the agent stops big data interaction, the change of its cognitive resource level will also stop, at which time the weight of effective consumption to total consumption will be fixed. Throughout the process, the agent's subjective belief updating will be accompanied by macroeconomic dynamics all the time, meanwhile, if there is a big data interaction, then our theory will explain how much consumption is effective, i.e., providing utility. If there are no big data interactions, then it will be useful to use traditional decision theory and subjective belief updating models\footnote{In other words, our theory would be more applicable to people who are data-preferential, i.e., agents who have long relied on online platforms to assist in their consumption behaviors.}. Finally, it is also important to point out that the “big data” in our paper is different from the “big data from and applied to agents”. As \citet{acemoglu2025big} talked in their paper: Platforms obtain and integrate big data through the online transaction behavior of users to form the “gloss” of products and accordingly choose products that maximize the platform's revenue rather than truly matching the users' needs, which reduces the users' welfare. In contrast, our measurement of data value is based on a more general perspective, i.e., data value comes from, and only comes from the uncertainty of information (information entropy). It does not come from macroeconomic information as previously explored, i.e., the value of the data is not measured in terms of fluctuations in economic variables. Literature related to this point can be traced back to the idea of “Economy of Knowledge”\citep{hayek1945use}, which takes the price system as the dissemination mechanism of information. Based on this theory, \citet{sims2006rational} and \citet{sims2003implications} began his research about the economic implications of rational inattention. Then, some theoretical studies (\citealp{angeletos2025inattentive}; \citealp{hebert2023information}) and experimental researches (\citealp{pomatto2023cost}; \citealp{dean2023experimental}; \citealp{caplin2022rationally}; \citealp{hebert2021neighborhood}) have expanded the theory and given insightful conclusions. Among them, \citet{pomatto2023cost} have proposed the maximum log-likelihood ratio information acquisition cost (LLR) function, and provided a more concise and elegant axiomatic formulation of the information cost structure. In their research framework, there is a cost to acquire information, which is different from our article. In normal information economics framework, scholars generally believe that information is scarce and valuable, and that the collection, analysis, learning and utilization of information incur costs(\citealp{biglaiser2025information}, \citealp{chatterjee2025bargaining}; \citealp{vong2025reputation}; \citealp{gentzkow2025ideological}). However, the big data formed based on information in our article does not require material costs because it does not come from any macro or micro signals of economics. The only cost incurred is the “cognitive” cost. Hence, we treat cognitive resources as a natural attribute of an agent that accumulates through experience and learning of himself, i.e., cognitive resource endowment. And as for information, according to \citet{hayek1945use}, they make it dependent on price, which implies that information has the “entity” of an economic variable. In our discussion, we set the aggregation of information becomes big data, and the value of big data comes and only comes from the uncertainty of information, i.e., general information entropy. That means it doesn't come from whether the data is an accurate, error-free and complete measure of macro market or product quality, i.e., it is not a kind of knowledge in “Economy of Knowledge”. We are more interested in how big data, more generally defined, affects the cognitive resources of the agent, and what proportion of consumption decisions made by the agent based on cognitive resources alone provide utility, i.e., the agent does not have an expectation of utility to the consumption decision ex ante, demand arises and consumption occurs only at the moment after the agent interacting with the big data for a period. In other words, we will study the weighting relationship between consumption and effective consumption due to big data interactions, rather than the recursive relationship between expected utility and current utility formed by the agent through information acquisition and belief updating.

Firstly, we explore the dynamics of agents' cognitive resources affected by the continuous time and increasing scale of big data through constructing a differential dynamics system. Secondly, we define data value variable by introducing general information entropy into the measurement system. Then, based on prospect theory and empirical analysis, we obtain the amount and direction of consumption adjustment of agents with different rationality, based on this, we establish the “Consumption Adjustment Weight Function” (CAWF) under the big data interaction of agents. Finally, we use empirical analysis to demonstrate the relationship between uncertainty and financial friction, and apply the CAWF to the model of firms' wealth distribution with financial friction to explore the economic influence of big data interaction on wealth distribution according to the \textit{Mean Field Game} (MFG).

\section{The Prerequisite Theory: Big Data and Dilution}
In this section, we will set up and prove the antecedent basic theory of the dilution of agents' cognitive resources by big data interactions, and we point out that since the process of accumulating data possesses the character of continuous time, and the result of the accumulation possesses the character of scale, then when an agent chooses to interact with big data, his or her cognitive resources, which have originally remained at a certain level, should also be affected by the dynamics of time and the scale of the data at the same time.

\subsection{Dilution with Continuous Time}

\textbf{Differential Equation}: Considering the cognitive resource dilution and learning recovery mechanism in big data interactions, the dynamics of the cognition retention coefficient $r^c$ after an agent chooses a big data interaction is as follows:
\begin{align}
\frac{\partial r^c}{\partial t}=\underbrace{-\lambda^c r(t) s^c}_{\text{\textit{Dilution item}}}+\underbrace{v^c r(t)(1-r(t))}_{\textit{Recovery item}\label{(1)}}
\end{align}

The “\textit{Dilution item}” represents the dilution of the individual's cognitive level by the big data interaction, which is proportional ($\lambda^c \in (0,+\infty)$) to the degree of big data interaction $s^c \in (0,+\infty)$ and the current cognitive level $r(t)$. The “\textit{Recovery item}” represents the individual's thinking and learning during the big data interaction, thus restoring a certain degree of cognitive level. Logistic-type moderators are used to ensure that the cognitive retention level $r\in[0,1]$, prevents the cognitive level from exceeding the reasonable range: In the case of higher cognitive level $r(t) \rightarrow 1$ or lower $r(t) \rightarrow 0$, the individual's rational recovery efficiency is slow. $v^c \in (0,+\infty)$ is the rate of recovery that reflects the strength of individual self-correction.

\begin{theorem}
\textit{Once an agent starts interacting with big data, regardless of the depth of its interaction with big data, the agent's cognitive resources will always continue to decrease and converge over time, with the level of convergence influenced only by the relative magnitude of the agent's cognitive resource dilution and recovery}.
\end{theorem}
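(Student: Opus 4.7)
The plan is to recognize equation (1) as a logistic-type Bernoulli ODE and reduce it to a linear equation by an explicit substitution, from which both monotonicity and the limit can be read off directly. Writing $a := v^c - \lambda^c s^c$ and $b := v^c > 0$, equation (1) collapses to $\dot r = a\,r - b\,r^{2}$. The substitution $u = 1/r$ linearizes this to $\dot u = -a u + b$, whose closed-form solution in each regime ($a>0$, $a=0$, $a<0$) is standard, and inverting gives $r(t)$ explicitly.

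Second, I would take the natural initial condition $r(0) = 1$ (full cognitive retention at the moment the interaction begins), so $u(0)=1$, and compute the long-run behavior in each regime. If $a > 0$ (recovery dominates dilution, $v^c > \lambda^c s^c$), then $u(t) \to b/a$ and hence $r(t) \to a/b = 1 - \lambda^c s^c / v^c \in (0,1)$. If $a < 0$, then $u(t) \to +\infty$ and $r(t) \to 0$; the borderline $a=0$ gives $u(t) = 1 + v^c t$ and again $r(t) \to 0$. In every regime the limit depends only on the ratio $\lambda^c s^c / v^c$, which is exactly the ``relative magnitude of dilution and recovery'' invoked in the theorem.

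Third, to show that the trajectory genuinely ``continues to decrease'' (rather than oscillating or rebounding) I would combine $\dot r(0) = -\lambda^c s^c < 0$ with the phase-line structure of the scalar autonomous ODE: its only equilibria are $r=0$ and, when $a>0$, $r = a/b < 1$, and solutions of a scalar autonomous ODE cannot cross equilibria, so the trajectory launched from $r(0)=1$ is trapped in $(\max\{0,\, a/b\}, 1]$ on which $\dot r < 0$ uniformly. An equivalent and cleaner route is to check directly from $\dot u = -au + b$ that $u$ is strictly increasing on $[0,\infty)$ in every regime (evaluate $\dot u$ on $u \ge 1$, where $\dot u(0) = \lambda^c s^c > 0$), so $r = 1/u$ is strictly decreasing.

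The step I expect to require the most care is matching the qualitative ``regardless of the depth $s^c$'' phrasing to the quantitative dichotomy produced by the sign of $a$: for small $s^c$ the positive equilibrium $1 - \lambda^c s^c / v^c$ lies arbitrarily close to $1$, so the decline, while strictly present, is quantitatively tiny, and I would flag this so the universality claim is not misread as a uniform lower bound on the size of the decline. The critical case $a=0$ is the only other mild technicality, since the auxiliary linear ODE degenerates, but the explicit solution $u(t) = 1 + v^c t$ handles it transparently.
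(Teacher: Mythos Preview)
Your approach is essentially the paper's: both linearize the Bernoulli ODE via the reciprocal substitution $u=r^{-1}$ (the paper writes $r_1$ for your $u$) and solve the resulting first-order linear equation. Two differences are worth flagging. First, the paper keeps a general initial value $r(0)=r_0$ rather than your $r(0)=1$; this matters because, as the paper itself observes from its Figure~1, when recovery dominates and $r_0$ lies below the interior equilibrium the trajectory actually \emph{increases} toward that fixed point, so the ``always decrease'' clause is literally false without a lower bound on $r_0$. The paper patches this at the end by restricting to ``rational agents'' with $r_0>0.2$, whereas your choice $r_0=1$ sidesteps the issue automatically. Second, the paper supplies no analytic monotonicity argument: after deriving the closed form (your formula and theirs coincide once you match notation) it plots the solution for several parameter values and reads the conclusion off the figure. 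Your phase-line argument---$\dot u>0$ on $\{u\ge 1\}$ since $\dot u(0)=\lambda^c s^c>0$ and $u$ cannot cross its unique equilibrium, hence $r=1/u$ strictly decreasing---is an actual proof and a genuine improvement over the paper's treatment.
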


\begin{proof}
Let ${r}_{1}=r(t)^{-1}$, we get: 
\[\begin{gathered}r^c=r(t)=\frac{1}{r_1}\\\dot{r}(t)=-r(t)^2\dot{r}_{1}\\\dot{r}(t)+r(t)[\lambda^c s^c-v^c]=vr(t)^2\\\dot{r}_{1}-r_{1}[\lambda^c s^c-v^c]=-v^c\end{gathered}\]

Set the integration factor $\mu^c(t)$, and then we solve the equation:
\[\begin{gathered}\mu^c(t)=e^{\int-[\lambda^c s^c-v^c]dt}=e^{v^c t}e^{-\lambda^c\int s^c dt}\\\dot{r}_{1}\mu^c(t)-[\lambda^c s^c-v^c]r_{1}\mu^c(t)=-v^c\mu^c(t)\end{gathered}\]

According to FOC:
\[\begin{gathered}\frac{\partial}{\partial t}[\mu^c(t)r_{1}]=-v^c\mu^c(t)\end{gathered}\]

Solving it, we get:
\[\begin{gathered}\frac{\partial}{\partial t}[e^{v^ct}e^{-\lambda^c\int s^c dt}r_{1}]=-v^c e^{v^ct}e^{-\lambda^c\int s^c dt}\\\int_0^t\frac{\partial}{\partial\tau}\left[e^{v^c\tau}e^{-\lambda^c\int_0^\tau s^c(\varepsilon)d\varepsilon}r_{1}\right]d\tau=-v\int_0^te^{v^c\tau}e^{-\lambda^c\int_0^\tau s^c(\varepsilon)d\varepsilon}d\tau\end{gathered}\]

Then, we get $r_1$:
\[\begin{gathered}r_{1}=e^{-v^c\tau}e^{\lambda^c\int_0^ts^c(\varepsilon)d\varepsilon}\left[r_{1}(0)-v^c\int_0^t e^{v^c\tau}e^{-\lambda^c\int_0^\tau s^c(\varepsilon)d\varepsilon}d\tau\right]\\r_{1}=r_{1}(0)e^{-v^c\tau}e^{\lambda^c\int_{0}^{t}s^c(\varepsilon)d\varepsilon}-v^c e^{-v^c\tau}e^{\lambda^c\int_{0}^{t}s^c(\varepsilon)d\varepsilon}\int_{0}^{t}e^{v^c\tau}e^{-\lambda^c\int_{0}^{\tau}s^c(\varepsilon)d\varepsilon}d\tau\end{gathered}\]

Setting the initial cognitive level $r(0)$ to a constant value $r_0$, and the big data interaction level $s^c(\varepsilon)$ also to a constant value $s_0$. We get:
\begin{align}
r(t)=\frac{r_0e^{\int_0^t[v^c-\lambda^c s^c(\varepsilon)]d\varepsilon}}{1+r_0v^c\int_0^te^{\int_0^t[v^c-\lambda^c s^c(\varepsilon)]d\varepsilon}d\tau}=\frac{r_0e^{(v^c-\lambda^c s_0)t}}{1+\left(\frac{r_0v^c}{v^c-\lambda^c s_0}\right)(e^{(v^c-\lambda^c s_0)t}-1)}\label{(2)}
\end{align}

When $\frac{\partial r^c}{\partial t}=0$, then $\lambda^cs_0=v^c$, we get: 
\begin{align}
r^c=r^*=\frac{r_0}{1+r_0v^ct}\label{(3)}
\end{align}

Figure 1 illustrates the dynamics of function (2): 
\begin{figure}[htbp]
\centering
\includegraphics[width=11cm]{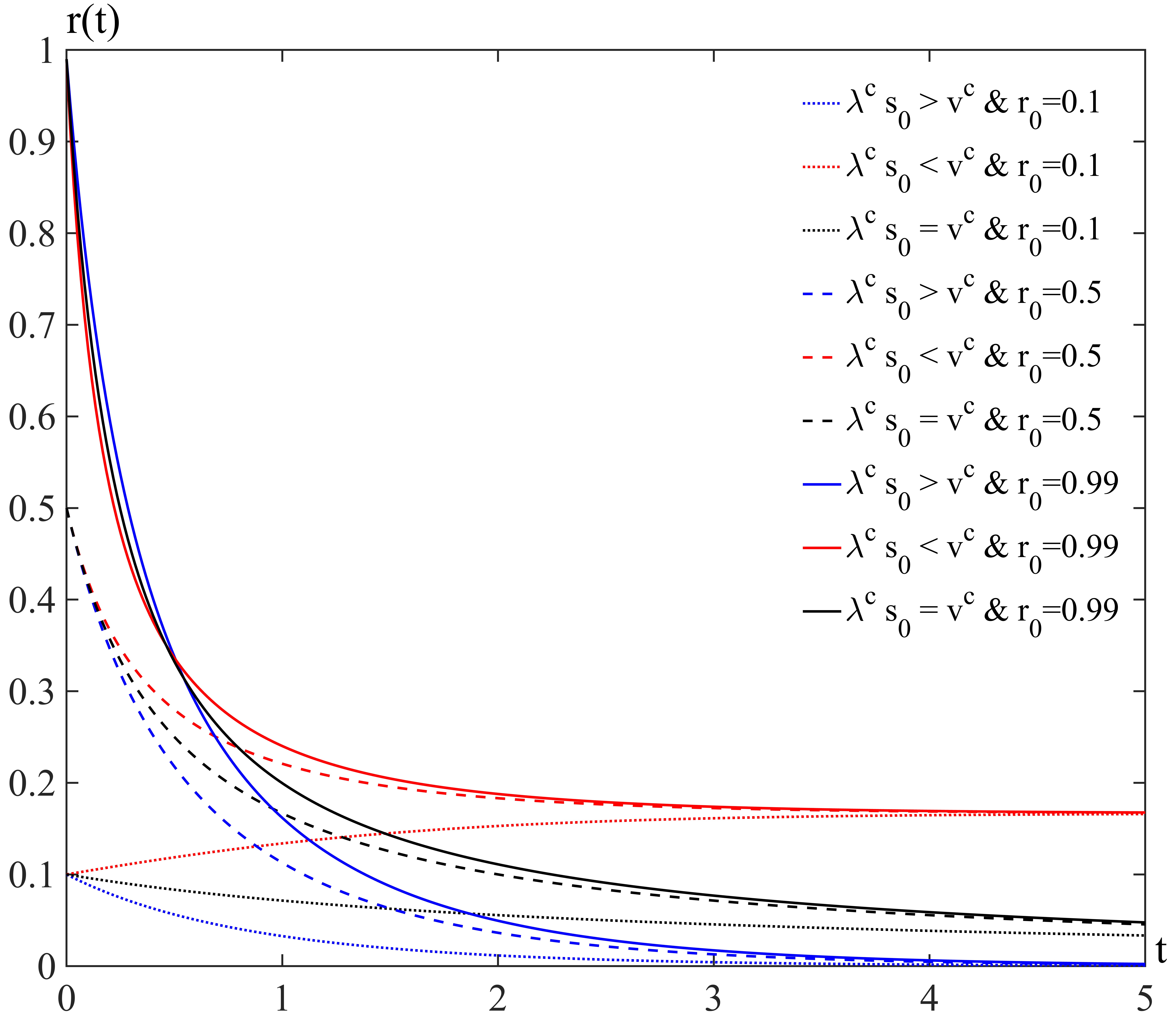}
\caption{\label{fig:F1}Cognition Retention Dynamic with Continuous Time}
\end{figure}

Figure 1 exhibits the dynamic trend of cognition retention level $r(t)$ with time when the initial cognition level $r_0$, as well as the dilution efficiency and recovery efficiency coefficient gap $\lambda^c s_0-v^c$ are both different. Where the blue curve represents $\lambda^c s_0-v^c>0$, at which $\lambda^cs_0=3,v^c=2$. The red curve represents $\lambda^c s_0-v^c<0$, at which $\lambda^c s_0=5,v^c=6$. The black curve represents $\lambda^c s_0-v^c=0$, at which $\lambda^c s_0=4,v^c=4$. The solid line represents the high initial cognition level $r_0=0.99$, the dashed line represents medium initial cognition level $r_0=0.5$, and the dotted line represents low initial cognition level $r_0=0.1$.

Hence, for any individual, the cognition retention level changes and eventually converges after big data interaction, and the level of convergence is only related to $\lambda^c s_0-v^c$. When $\lambda^c s_0-v^c>0$, the final cognition retention level is the highest. When the individual's initial cognition level $r_0>0.16$, which is in the middle or high level, the individual cognition retention level are gradually reduced. And when the individual's initial cognition level $r_0<0.16$ is low, there is a slight increasing trend of r(t) in the case of $\lambda^c s_0-v^c>0$. For the rest of the cases $\lambda^c s_0-v^c=0$ and $\lambda^c s_0-v^c<0$, $r(t)$ is gradually decreasing. 

Ultimately, we can conclude that for a rational agent (i.e., with a high or normal level of initial cognitive resources, $r _0> 0.2$), the agent's cognitive resources gradually decrease and equilibrate over time as long as the agent chooses to interact with big data, regardless of the size of the big data it receives (i.e., the degree of data interaction is constant).
\end{proof}

\subsection{Dilution with the Increasing Scale of Data}
\textbf{Cognitive Resource}: Based on the conclusions of the cognitive resource retention analysis in 2.1, we formally introduce the agent's cognitive resources $R^c=R(t)$ in this section\footnote {Cognitive retention is the endowment state of the agent's cognitive resources: The higher the cognitive retention, $r(t)\rightarrow1$, the more cognitive resources $R^c$ the agent has, and the lower the cognitive retention, $r\rightarrow0$, the less cognitive resources $R^c$ the agent has.}, the following six assumptions are proposed as the prerequisites of our model: 

(\textbf{A}) Finite Cognitive Resource: The agent has a maximum cognitive resource $R_{max}=R_0$ when making decisions independently (Does not exist or before big data interactions). (\textbf{B}) Linear Dilution: When there is interaction behavior between the agent and big data, the agent's cognitive resources are linearly diluted with each increase in the size of the big data interaction. (\textbf{C}) Nonlinear Marginal Load: The cognitive load induced by big data per unit size decreases with the overall size of big data increasing. (\textbf{D}) Linear Recovery: The agent has the ability to learn to recover cognitive resources while interacting with big data, and the recovery term follows a linear gradient flow, i.e., the rate (intensity) of recovery is proportional to the degree of deviation from the initial cognitive resources\footnote {For the cognitive resources $R(t)$, we emphasize that it serves as a resource that the agent can actually allocate based on the cognitive retention state, and assume that the lower an agent possesses cognitive resources are, the faster their recovery rate (intensity) will be, i.e., a linear gradient flow. This can also be derived from the “Recovery” item of function (1): $r(t)=R(t)/R_0, recovery=r(t)(1-r(t))=(R(t)/R_0)(1-(R(t)/R_0))=(R(t)/R_0)- (R(t)/R_0)^2=(R(t)(R_0-R(t)))/(R_0)^2 \varpropto (R_0-R(t))$. Therefore, $(R_0-R(t))$ is the main recovery item of the cognitive resource dynamic.}. (\textbf{E}) Homogeneous Interaction: Big data exists without any economic variable entities, therefore, big data is homogeneous, and agents' cognitive resources allocate to big data are evenly distributed. (\textbf{F}) Algorithmic Intelligentsia: We assume big data as algorithm-driven intelligentsia, and the intelligentsia can get the equivalent cognitive resources by learning through algorithm after diluting the cognitive resources of the agent. Therefore, the agent interacts with big data, which is essentially an agent having cognitive resources $R_i^c=R_i(t)$ interacting with an intelligent agent having cognitive resources $R_j^c=R_j(t)$. According to the homogeneous interaction assumption\footnote {There are $n_{number}\rightarrow\infty$ agents in the society that have homogenized cognitive resources $R_i^c$, and at the same time evenly allocate cognitive resources to big data of scale $n_{data} \rightarrow \infty$, then the intelligent agents will have $(R_i^c/n_{data})*n_{number}=R_j^c \approx R_i^c$ cognitive resources.}, we set $R_i^c \approx R_j^c$.

Then, the cognitive resources available to agent $i$ under the condition of discrete time $\Delta t \neq 0$ are:
\begin{align}
{R_i(t)=R_0-\sum\Theta_{ij}R_j(t-\Delta t)\label{(4)}}
\end{align}

$\Theta_{ij}\in[0,1]$ is the proportion of cognitive resources that agent $i$ interacts with big data $j$ and thus allocates to big data $j$, which will lead to a decrease in cognitive resources for agent $i$. Define $\sigma^c\in (0,1)$ as the cognitive load generated by the unit size of big data, according to the assumption, there is a marginal decreasing effect of big data size on the cognitive resource load of the agent, so the proportion of cognitive resource allocation $\Theta_{ij}$ is as follows:

\[\Theta_{ij}=\frac{\sigma^c}{(n-1)^{\gamma^c}}\]

$\gamma^c\in (0,1)$ controls the degree of nonlinearity of the big data scale effect. Since $R_i(t) \approx R_j(t)$, the cognitive resources of agent $i$ in steady state are as follows:
\[\lim_{t\to\infty}R_i(t)=R_0-\left[\frac{\sigma^c}{(n-1)^{\gamma^c}}\right]\left[(n-1)\lim_{t\to\infty}R_i(t)\right]=R_0-\sigma^c(n-1)^{1-\gamma^c}\lim_{t\to\infty}R_i(t)\]

When $\Delta t \rightarrow 0$, considering the linear gradient flow of recovery item, the complete cognitive resource dynamics according to the assumptions is as follows:
\[\begin{gathered}\frac{\partial R_i(t)}{\partial t}=\mu^c(R_0-R_i(t))-\eta^c\sigma^c(n-1)^{1-\gamma^c}R_i(t)\\\frac{\partial R_i(t)}{\partial t}+\left[\mu^c+\eta^c\sigma^c(n-1)^{1-\gamma^c}\right]R_i(t)=\mu^c R_0\\R_{i}(t)=\frac{\mu^c R_{0}}{\mu^c+\eta^c\sigma^c(n-1)^{1-\gamma^c}}+Ce^{-[\mu^c+\eta^c\sigma^c(n-1)^{1-\gamma^c}]t}\\\lim_{t\to\infty}R_i(t)=R_i^*=\frac{\mu^c R_0}{\mu^c+\eta^c\sigma^c(n-1)^{1-\gamma^c}}\end{gathered}\]

$\mu^c$ is the recovery rate of cognitive resource when consider the recovery item, $\eta^c$ is the dilution factor of cognitive resource. If the cognitive resource of agent $i$ in the steady state is $R_{i}^*$, and the dilution effect of big data interaction on cognitive resource reaches 50\% or more is specified as significant dilution, then for the cognitive retention level $r(t)$, we solve the threshold scale $n^*$ of big data:
\[\begin{gathered}1-r(t)=1-\frac{R_i^*}{R_0}=\frac{\eta^c\sigma^c(n-1)^{1-\gamma^c}}{\mu^c+\eta^c\sigma^c(n-1)^{1-\gamma^c}}\\1-r(t)>\frac{1}{2}\Rightarrow n^*=1+\left(\frac{\mu^c}{\eta^c\sigma^c}\right)^{\frac{1}{1-\gamma^c}}\end{gathered}\]
\textbf{Stochastic Process}: In reality, due to the objective existence of various uncertainties, the generation of social information and the dissemination of data elements are often unpredictable, which are similar to stochastic processes. Therefore, this paper sets the generation of big data to follow the standard Brownian Motion $W(t), dW(t)=\epsilon(t)\sqrt{dt}, \epsilon(t)\thicksim(0,1)$, which measures how big data affects the dynamics of an individual's cognitive resources under the consideration of exogenous uncertainty factors. Based on the analysis and conclusions of 2.1 Cognitive Retention Level and 2.2 Cognitive Resource Dynamics, we further set: (\textbf{A}) Big Data Dilution is Stable: This implies that there is a leverage between the agent's initial cognitive resource $R_0$ and instant cognitive resource $R_i(t)$, i.e., $R_0=\theta^cR_i(t), \theta^c>1$. For example: An agent with initial cognitive resource of $R_0$ starts to interact with big data at moment t, and the cognitive resource decreases from $R_0$ to $R_t$, and $R_0>R_t$. When the agent interacts with big data at moment $t+\delta$ again, then the agent's initial cognitive resource in this interaction is $R_t$. When $\delta\rightarrow0$, we set the leverage of the agent's initial cognitive resource in each big data interaction to be $\theta^c=R_0/R_i(t)>1$. (\textbf{B}) Endogenous Big Data Dilution Intensity: The impact of big data on cognitive resources follows the standard Brownian Motion $dW(t)$. According to the conclusion of the cognitive retention analysis (Figure 1), it can be seen that the dilution intensity of big data on agents' cognitive resources is embodied in the gap between the initial cognitive resources $R_0$ and the equilibrium cognitive resources $R_i^*$ (i.e., the lower the equilibrium cognitive retention $r^*$, the greater the dilution intensity of big data). Hence, we set the dilution intensity of big data for the agent's cognitive resources to be endogenous, i.e., the big data shock is $\psi^c(R_0-R_i^*)dW(t)$, $\psi^c>0$ is the fluctuation coefficient of standard Brownian Motion. Accordingly, Theorem 2 is formulated:

\begin{theorem}
\textit{When an agent interacts with big data, regardless of the level of cognitive resource recovery and dilution, the expectation value of agent's cognitive resource distribution becomes lower with the increasing scale of big data.}
\end{theorem}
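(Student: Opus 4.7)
The plan is to combine the deterministic gradient-flow dynamics derived in Section 2.2 with the endogenous Brownian shock $\psi^c(R_0 - R_i^*)\,dW(t)$ into a single Itô stochastic differential equation for $R_i(t)$ and then study the first moment of the resulting diffusion. Explicitly, I would write
$$dR_i(t) = \left[\mu^c(R_0 - R_i(t)) - \eta^c\sigma^c(n-1)^{1-\gamma^c}R_i(t)\right]dt + \psi^c(R_0 - R_i^*)\,dW(t),$$
noting that the diffusion coefficient is deterministic once the steady state $R_i^*(n)=\mu^cR_0/[\mu^c+\eta^c\sigma^c(n-1)^{1-\gamma^c}]$ from the preceding subsection is fixed, so the Itô integral contributes zero to the expectation by the standard martingale property.

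Next, taking expectations on both sides and exchanging $\mathbb{E}$ with $d/dt$ (justified because the diffusion coefficient is bounded and deterministic) collapses the SDE to a linear first-order ODE
$$\frac{d\,\mathbb{E}[R_i(t)]}{dt} + \Lambda(n)\,\mathbb{E}[R_i(t)] = \mu^cR_0,\qquad \Lambda(n) := \mu^c + \eta^c\sigma^c(n-1)^{1-\gamma^c},$$
whose unique stationary mean coincides with $R_i^*(n)$. An integrating factor of the kind already used for Theorem 1 yields the explicit trajectory $\mathbb{E}[R_i(t)] = R_i^*(n) + (R_0 - R_i^*(n))e^{-\Lambda(n)t}$. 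The leverage assumption $R_0 = \theta^c R_i(t)$ is then taken in expectation at each refresh instant, which rescales the forcing and leaves the long-run mean proportional to $R_i^*(n)$ up to the factor $\theta^c$; this keeps the comparative-statics structure intact.

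To establish the monotonicity claimed by Theorem 2 I would differentiate $R_i^*(n)$ with respect to $n$ and obtain
$$\frac{d R_i^*(n)}{dn} = -\frac{\mu^cR_0\,\eta^c\sigma^c(1-\gamma^c)(n-1)^{-\gamma^c}}{\left[\mu^c+\eta^c\sigma^c(n-1)^{1-\gamma^c}\right]^{2}} < 0,$$
which is strictly negative for every admissible configuration ($n>1$, $\gamma^c\in(0,1)$, and all other constants positive). The same sign propagates to the transient mean, because $\Lambda(n)$ enters monotonically into both the forcing and the relaxation rate of the closed-form trajectory. Hence $\mathbb{E}[R_i(t)]$ is decreasing in $n$ for every $t\geq 0$, independently of the particular magnitudes of the recovery rate $\mu^c$ and the dilution factor $\eta^c\sigma^c$, which is exactly the statement of the theorem.

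The main obstacle I expect is reconciling the leverage assumption with the stochastic dynamics in a self-consistent way: treating $R_0$ as a true constant makes the calculation immediate, but the authors require that the ``initial'' resource is refreshed to $\theta^c R_i(t)$ at every interaction, which introduces a state-dependent component in the drift. The cleanest resolution is to impose the leverage relation in expectation at the refresh instant and then propagate the linear ODE forward, but one must verify that the modified drift still admits a positive, finite stationary mean --- in particular, that $\mu^c(\theta^c-1) < \eta^c\sigma^c(n-1)^{1-\gamma^c}$ on the relevant range of $n$ --- so that the monotonicity argument holds globally rather than only in a neighborhood of the unperturbed steady state.
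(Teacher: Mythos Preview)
Your argument is correct as a self-contained proof of the monotonicity of the mean, but it follows a genuinely different route from the paper. The paper does \emph{not} take expectations of the original SDE and reduce to a linear ODE for $\mathbb{E}[R_i(t)]$. Instead, it substitutes the leverage relation $R_0=\theta^c R_i(t)$ \emph{directly} into both drift and diffusion, which converts the dynamics into a geometric Brownian motion $dR_i/R_i=\Phi^c\,dt+\Omega^c\,dW(t)$ with $\Phi^c=\mu^c(\theta^c-1)-\eta^c\sigma^c(n-1)^{1-\gamma^c}$ and $\Omega^c=\psi^c\theta^c\bigl[1-\mu^c/(\mu^c+\eta^c\sigma^c(n-1)^{1-\gamma^c})\bigr]$. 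It then passes to $x_i^c=\log R_i(t)$ via It\^o, introduces an additional ``death and reinsertion at zero'' mechanism at rate $\beta^c$ (the $-\beta^c P+\beta^c\delta$ terms in the KFE), solves the resulting stationary Kolmogorov forward equation in closed form to obtain a two-sided exponential density, and finally establishes the claim that the expectation falls with $n$ by \emph{numerical simulation} of that density for two parameter regimes ($\mu^c\gtrless\eta^c$).

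What each approach buys: your comparative-statics derivative $dR_i^*/dn<0$ is an honest analytical proof of the theorem as literally stated, and it sidesteps both the GBM transformation and the reinsertion device; it is more elementary and more general, since it holds for every $t$ and every admissible parameter set without recourse to numerics. The paper's route, by contrast, delivers the entire stationary \emph{distribution} (not just its mean) and exhibits the left-skew shift in Figures~2--3, but its final monotonicity conclusion rests on inspection of those figures rather than on a sign computation. Note also that your handling of the leverage assumption (imposing it only in expectation at refresh instants) is weaker than the paper's, which treats $R_0=\theta^c R_i(t)$ as an identity in the SDE itself; this is why the paper ends up with multiplicative noise and you do not. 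Your caveat about needing $\mu^c(\theta^c-1)<\eta^c\sigma^c(n-1)^{1-\gamma^c}$ for a finite stationary mean is precisely the condition the paper never states but implicitly relies on when its simulated densities are integrable.
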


\begin{proof}
The complete agent's cognitive resource dynamics process is given as follows:
\begin{align}
dR_i(t)=\left[\mu^c(R_0-R_i(t))-\eta^c\sigma^c(n-1)^{1-\gamma^c}R_i(t)\right]dt+\psi^c(R_0-R_i^*)dW(t)\label{(5)}
\end{align}

Substituting $R_0=\theta^cR_i(t)$, we get:
\[\frac{dR_i(t)}{R_i(t)}=\left[\mu^c(\theta^c-1)-\eta^c\sigma^c(n-1)^{1-\gamma^c}\right]dt+\psi^c\left[\theta^c-\frac{\mu^c\theta^c}{\mu^c+\eta^c\sigma^c(n-1)^{1-\gamma^c}}\right]dW(t)\]

Let $\Phi^c=\mu^c(\theta^c-1)-\eta^c\sigma^c(n-1)^{1-\gamma^c},\Omega^c=\psi^c\left[\theta^c-\frac{\mu^c\theta^c}{\mu^c+\eta^c\sigma^c(n-1)^{1-\gamma^c}}\right]$, we get:
\begin{align}
\frac{dR_i(t)}{R_i(t)}=\Phi^cdt+\Omega^cdW(t)\label{(6)}
\end{align}

According to \textit{Ito's Lemma}, the cognitive resource dynamics $dR_i(t)$ satisfies the standard \textit{Geometric Brownian Motion} (GBM), therefore, let $x_i^c=log(R_i(t))$, where $x_i^c$ represents the logarithm of the cognitive resource $R_i(t)$ for the agent $i$, the stochastic process could be written as:
\begin{align}
dx_i^c=\left(\Phi^c-\frac{1}{2}(\Omega^c)^2\right)dt+\Omega^cdW(t)\label{(7)}
\end{align}

Let $\Sigma^c=\Phi^c-\frac{1}{2}(\Omega^c)^2$, set the probability density function of $x_i^c=log(R_i(t))$ to be $P(x_i^c)$ and satisfy the following \textit{Kolmogorov Forward Equation} (KFE):
\[\begin{gathered}0=-\Sigma^c\frac{\partial P(x_i^c)}{\partial x_i^c}+\frac{1}{2}(\Omega^c)^2\left(\frac{\partial^2P(x_i^c)}{\partial(x_i^c)^2}\right)-\beta^cP(x_i^c)+\beta^c\delta(x_i^c)\\0=-\Sigma^cP^{(1)}(x_i^c)+\frac{1}{2}(\Omega^c)^2P^{(2)}(x_i^c)-\beta^cP(x_i^c)+\beta^c\delta(x_i^c)\end{gathered}\]

Where $\Delta P(x_i^c)$ denotes the proportion of cognitive resources lost by the agent as a result of its involvement in big data interactions, which is subtracted from the distribution of cognitive resources at the rate of $\beta^c$. $\delta(x_i^c)$ is a Dirac function with a function value equal to zero at all points except zero, and its integral over the entire domain of definition is equal to one. $\beta^c \delta(x_i^c)$ represents the cognitive resources that have just been diluted by big data, and after re-entering the distribution of cognitive resources at the rate of $\beta^c$, its effect on the cognitive influence of an agent is 0. This is more in line with the characteristics of the cognitive resources “diluted” by big data, i.e., the agent still possesses the diluted cognitive resources, but this part of the cognitive resources is vague and ineffective, the agent cannot make rational decisions based on this part of cognitive resources. The analysis form of $P(x_i^c)$ can be obtained by solving KFE:
\[\begin{gathered}P(x_i^c)=A\exp\left(\frac{\Sigma^c+\sqrt{(\Sigma^c)^2+2\beta^c(\Omega^c)^2}}{(\Omega^c)^2}x_i^c\right),x_i^c<0\\P(x_i^c)=A\exp\left(\frac{\Sigma^c-\sqrt{(\Sigma^c)^2+2\beta^c(\Omega^c)^2}}{(\Omega^c)^2}x_i^c\right),x_i^c>0\end{gathered}\]

Normalizing $P(x_i^c)$ and solving it:
\[\begin{gathered}A\left[\int_{-\infty}^0\exp\left(\frac{\Sigma^c+\sqrt{(\Sigma^c)^2+2\beta^c(\Omega^c)^2}}{(\Omega^c)^2}x_i^c\right)dx_i^c+\int_0^\infty \exp\left(\frac{\Sigma^c-\sqrt{(\Sigma^c)^2+2\beta^c(\Omega^c)^2}}{(\Omega^c)^2}x_i^c\right)dx_i^c\right]=1\\A=\frac{\beta^c}{\sqrt{(\Sigma^c)^2+2\beta^c(\Omega^c)^2}}\end{gathered}\]

Hence:
\begin{align}
P(x_i^c)=\frac{\beta^c}{\sqrt{(\Sigma^c)^2+2\beta^c(\Omega^c)^2}}\exp\left(\frac{\Sigma^c+\sqrt{(\Sigma^c)^2+2\beta^c(\Omega^c)^2}}{(\Omega^c)^2}x_i^c\right),x_i^c<0\label{8}\\P(x_i^c)=\frac{\beta^c}{\sqrt{(\Sigma^c)^2+2\beta^c(\Omega^c)^2}}\exp\left(\frac{\Sigma^c-\sqrt{(\Sigma^c)^2+2\beta^c(\Omega^c)^2}}{(\Omega^c)^2}x_i^c\right),x_i^c>0\label{(9)}
\end{align}

In this part, according to functions (8) and (9) above, the parameters are set as follows: $\sigma^c=0.4,\gamma^c=0.4,\psi^c=0.4,\beta^c=0.8,\theta^c=2$. Two sets of $\mu^c,\eta^c$ are set up: when $\mu^c>\eta^c,\mu^c=2,\eta^c=1$, when $\mu^c<\eta^c,\mu^c=2,\eta^c=2.1$. The big data scale variables $n=10,15,20,25$. The simulation results are shown in Figure 2 and Figure 3:

From Figure 2, when the cognitive resource recovery effect of the agent is stronger than the dilution effect, i.e., $\mu^c>\eta^c$, with the increase of the big data size $n$, the cognitive resource distribution of the agent will be gradually transitioned from a right-skewed distribution to a normal distribution and eventually stabilized to a left-skewed distribution, which means that the cognitive resource expectation value of the agent will be gradually reduced. 

\begin{figure}[htbp]
\centering
\includegraphics[width=11cm]{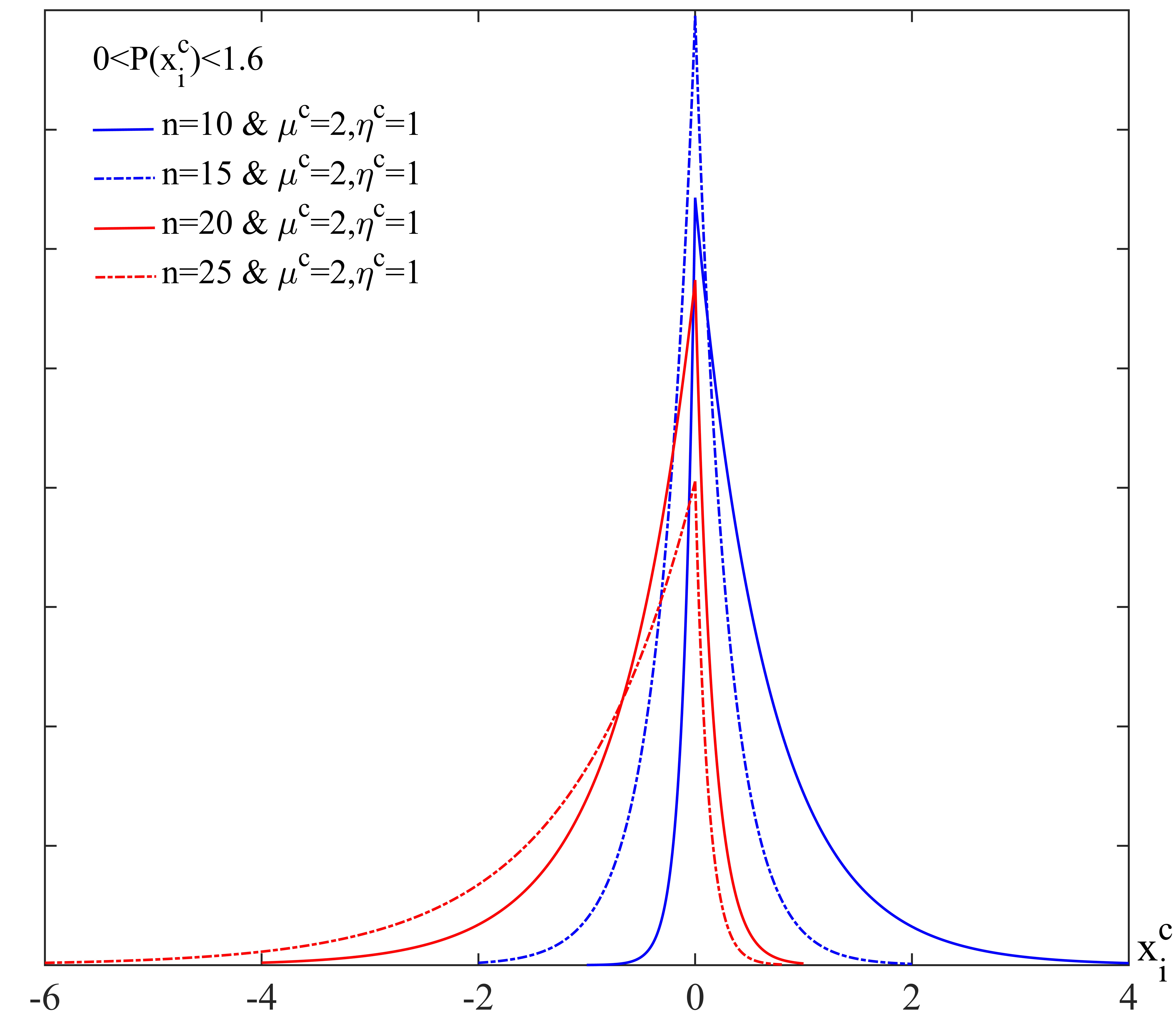}
\caption{\label{fig:F2}Cognitive Resource Distribution with Increasing $n$ and $\mu^c>\eta^c$}
\end{figure}
\begin{figure}[htbp]
\centering
\includegraphics[width=11cm]{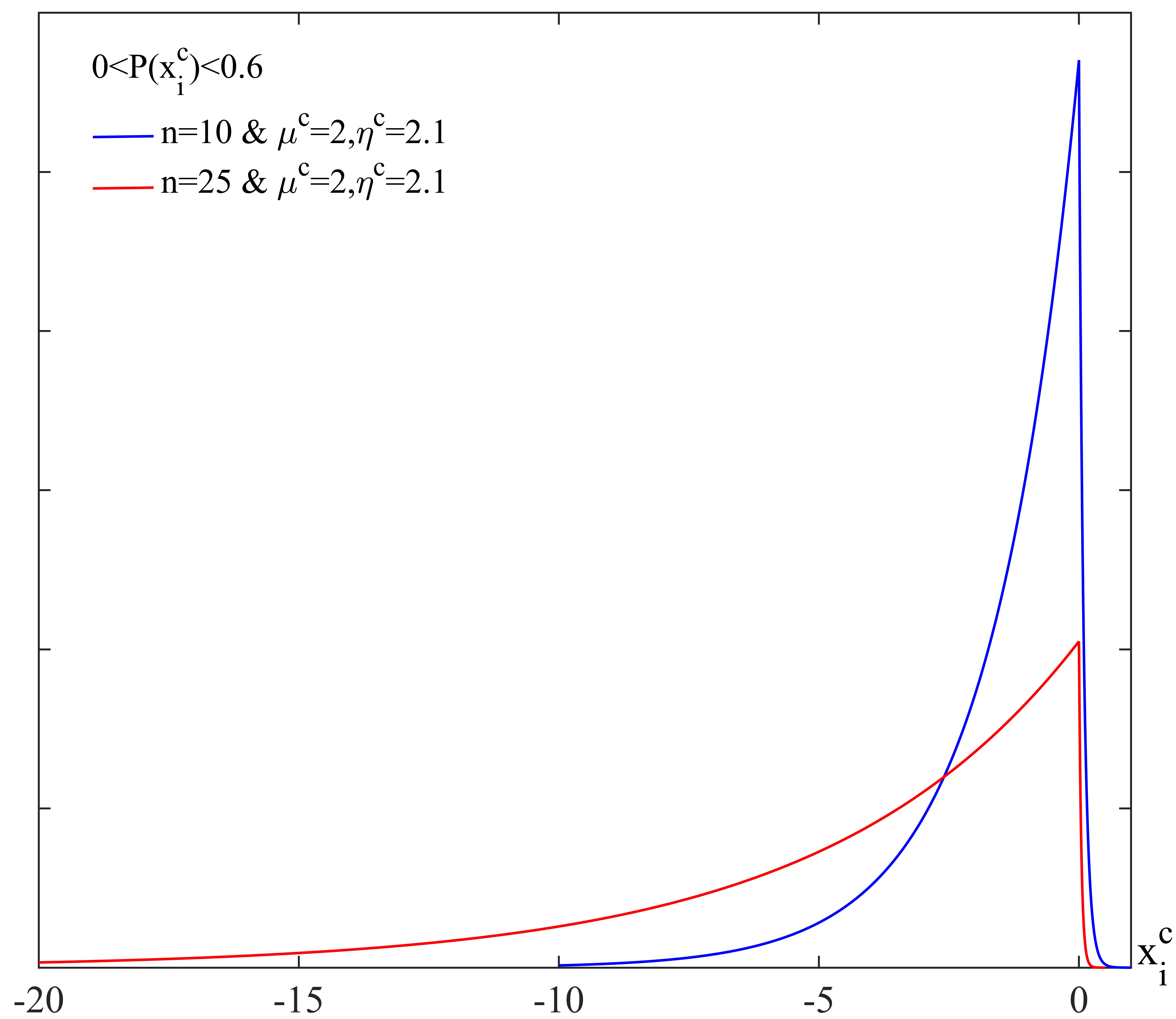}
\caption{\label{fig:F3}Cognitive Resource Distribution with Increasing $n$ and $\mu^c<\eta^c$}
\end{figure}

\newpage

Then, as can be seen from Figure 3, when the agent's cognitive resource recovery effect is weaker than the dilution effect, i.e., $\mu^c<\eta^c$, the agent's cognitive resource distribution is a stable left-skewed distribution, and with the increase of the big data size $n$, the peak and expectation value of this cognitive resource distribution both decrease\footnote{ The right tail of the agent's cognitive resource distribution in this case is weakly affected by increasing $n$, so the simulation results are presented only for $n=10$ and $n=25$ to clearly present the conclusion.}.

Therefore, according to the figures, regardless of the relative magnitude of $\mu^c$ and $\eta^c$, when the scale n of the agent's interaction with big data increases, the agent's cognitive resource distribution will be left-skewed eventually, and the expectation value of cognitive resources decreases.
\end{proof}

\subsection{Information Entropy and the Value of Big Data}
\textbf{Information Value and Data Value}: for this part, we will use a general variable that measures the uncertainty of information, i.e., information entropy, to measure the value of information and construct a measurement system of data value based on the value of information, and take the data value as an important variable for subsequent analysis. Our main ideas are as follows:

(\textbf{A}) The Space of Information: We represent the value of information in terms of three-dimensional vectors from the mathematical level, and this setting can be referred to \citet{angeletos2025inattentive} and \citet{caplin2022rationally}, who put the information variable or the state of decision problem in the three-dimensional space $\mathbb{R}^3$ for further analysis in their articles, and our paper refers to this. However, unlike them, we do not intend to use a particular three-dimensional vector as a proxy for the value of information, we are going to measure the value of information in the general case by setting up a special information value dimension, i.e., the three axes of 3D coordinates, and calculating the information value vector in the general case, i.e., an arbitrary vector in 3D space. Meanwhile, this definition also fits people's perception of the value of information. When people assess the value of information, they subconsciously consider it from multiple dimensions and weigh the importance of each dimension according to the specific situation. We believe that the value of information mainly comes from three dimensions, i.e. \textit{Timeliness}, \textit{Accuracy} and \textit{Relevance}, and under different application scenarios or user needs, the focus of information value is different, so the weights of different types of information in each value dimension vary greatly, showing diversity and complexity, which are similar to vectors in any direction in a three-dimensional space. Accordingly, we analogize the general value of information as a vector of arbitrary directions and the specific value as a vector coinciding with the coordinate axis.

(\textbf{B}) Information Values and Eigenvalues: We encode information vectors in $\mathbb{R}^3$ space with matrices, and the matrix elements encode the various types of attributes of information and their interrelationships. People use information to assist decision-making, which is actually a linear transformation of subjective decision-making to make decision-making more objective and rationalized\footnote{After encoding subjective decision-making, the eigenvalues of information vectors are used to perform a linear transformation of decision-making vectors to migrate the decision-making vectors in the direction of rationality, and the significance or effect of migration is up to the value of information.}. The eigenvalues of the matrix have good stability and invariance, which make it a reliable index to measure the value of information: Among various linear transformations, such as similar transformation, contractual transformation, etc., the eigenvalues of the matrix remain unchanged or have a specific transformation law, and this invariance makes the eigenvalues able to accurately capture the essential characteristics of the information, and are not affected by the change of the external manifestation of the information, so that the eigenvalues of the information matrix can accurately reflect the strength and characteristics of the linear transformation represented by the matrix in a specific direction, which is the main embodiment of the value of the information. And Eigenvalue Decomposition (EVD) is an effective tool for exploring the intrinsic structure of information, which reveals the distribution of information contained in the matrix in the direction of different eigenvectors, and the magnitude of the eigenvalue quantifies the relative importance of the information in the corresponding direction, so in the complex and changing information environment, the eigenvalue can stably measure the core value of the information, and provide a unified and reliable measurement standard for the comparison, analysis and integration of information. In terms of mathematical principles, eigenvalues can effectively extract the core features and inner structure of information matrix to quantify the value of information. Taking Principal Component Analysis (PCA) as an example, the size of eigenvalues in PCA reflects the richness of information carried by the components, i.e., the information value.

(\textbf{C}) Data Value: Data elements enable efficient, fast and decentralized dissemination of information by materializing diverse information. Now we set data as consisting of multiple independent pieces of information $\mathbb{D}\subseteq\{x_1,x_2,x_3...x_n\}$, and according to the information paradigm theory of \citet{stiglitz2000contributions}, the economic value of information stems from its ability to reduce decision uncertainty and make decisions more rational. Assuming that the decision maker faces a state space $\mathcal{S}$ with prior beliefs of a probability distribution $P(\omega)$, and that information $x_i$ updates beliefs through a posterior distribution $P(\omega|x_i)$, the value of information $x_i$ is reflected in the gain in the agent's expected utility, as $V_{\sigma_t}(x_i)=\mathbb{E}[U(P(\omega|x_i))]-\mathbb{E}[U(P(\omega))]$. $V_{\sigma_t}(x_i)$ demonstrates the additivity of the information value, so the total utility gain that the agent obtains by demanding the data elements is the sum of the values of each independent information that the data element $\mathbb{D}$ can provide, which is defined in this paper as the data value.

However, different information sources $x_i,i=\{1,2,3...n\}$ cannot be directly summed up due to the difference in magnitude, and the marginal utility of information aggregation may diminish with the increase in data size \citep{varian2004economics}, referring to the logic of standardized pricing of goods in general equilibrium theory \citep{arrow1954existence}, in order to build a unified data circulation mechanism, and to avoid overestimation of the value of large sample datasets due to simple summing, the normalization function $\phi(V)$ is used to achieve “dimensionless” and “scale-neutrality” of the accumulation of information value. In addition, linear normalization may ignore synergistic or antagonistic effects between information \citep{agrawal2019economics}, with synergistic effects reflecting the superadditivity of information, i.e., the combined value of information is more than the sum of the individuals, and antagonistic effects reflecting the increasing marginal cost of redundancy of information, with highly correlated information incurring an additional cleansing cost that reduces net utility. Neglecting these two kinds of utility will lead to inaccurate assessment of the total data value, therefore, in this paper, we will introduce the interaction strength of information based on linear normalization of data value to measure the gain and loss of data value caused by information synergy and antagonistic effect, respectively.

Now, based on the above theoretical analysis, we will specifically measure the information value and construct data value variables.

\begin{theorem}
\textit{The data value variable $D_t\in(0,1)$ measures the whole informational value state of society. When the information entropy of society is high, the data value is low and $D_t\rightarrow0$, when the information entropy of society is low, the data value is high and $D_t\rightarrow1$.}
\end{theorem}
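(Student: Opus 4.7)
The plan is to build $D_t$ in three stages that mirror the preceding bullets (A), (B), (C) of the discussion on information value, aggregation, and normalization, and then read off the two limiting behaviors directly from the explicit closed form.

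First, for each information piece $x_i\in\mathbb{D}$, I would encode the three-dimensional value vector as a $3\times 3$ matrix $M_i$ whose axes correspond to Timeliness, Accuracy, and Relevance, and extract its eigenvalues $\{\lambda_i^{(1)},\lambda_i^{(2)},\lambda_i^{(3)}\}$ via EVD. Following the paper's rationale that eigenvalues capture the matrix's invariants under linear transformation, the scalar information value $V(x_i)$ is defined as a symmetric function of these eigenvalues (for instance the Frobenius norm $\sqrt{\sum_k (\lambda_i^{(k)})^2}$ or the magnitude of the principal eigenvalue), which encodes the intrinsic decision-relevant content of $x_i$ in a way that is stable under reparameterization.

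Second, I would aggregate across the $n$ pieces of $\mathbb{D}$ by forming a raw data value $\tilde V(\mathbb{D}) = \sum_{i=1}^n V(x_i) + \sum_{i\neq j}\omega_{ij}V(x_i)V(x_j)$, where $\omega_{ij}>0$ captures synergy (superadditivity of information) and $\omega_{ij}<0$ captures redundancy/antagonism. I would then evaluate the general information entropy $H_t = -\sum_i p_i\log p_i$ on the normalized weights $p_i = V(x_i)/\sum_k V(x_k)$, so that high $H_t$ corresponds to a diffuse, undifferentiated information pool (no eigen-direction dominates) while low $H_t$ corresponds to a concentrated, decisive pool. Finally, I would apply the normalization $\phi$ anticipated in point (C) to map $\tilde V(\mathbb{D})$ into $(0,1)$ monotonically in entropy; the simplest consistent choices are the exponential form $D_t = \exp(-\alpha H_t)$ or the rational form $D_t = 1/(1+\beta H_t)$ with $\alpha,\beta>0$, either of which immediately yields $D_t\in(0,1)$ together with $D_t\to 1$ as $H_t\to 0$ and $D_t\to 0$ as $H_t\to\infty$, which is exactly the content of the theorem.

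The main obstacle will be the consistency step tying the eigenvalue aggregate $\tilde V(\mathbb{D})$ to the entropy $H_t$: I must verify that the interaction term $\omega_{ij}$ cannot reverse the comparative statics of $D_t$ with respect to $H_t$, and that the normalization $\phi$ can be constructed to simultaneously satisfy scale-neutrality and dimensionlessness as required by the Arrow-style pricing analogy invoked earlier. A secondary technical point is justifying why $V(x_i)$, obtained from a generally complex-valued EVD, can be aggregated as a real nonnegative scalar; I would resolve this by insisting on symmetric positive semi-definite encoding of $M_i$, which guarantees real nonnegative eigenvalues and hence a well-defined, order-preserving map from entropy to data value.
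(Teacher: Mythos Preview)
Your construction reverses the logical direction the paper takes, and in doing so you end up proving a statement about the wrong entropy. In the paper, ``information entropy of society'' is the differential entropy $\sigma_t = -\int p(x)\ln p(x)\,dx$ of the information-generating random variable itself; the construction then maps $\sigma_t$ \emph{directly and monotonically} to an information value $V_{\sigma_t}(x)=1-2\sigma_t/\sigma_{\max}\in[-1,1]$, aggregates via the linear normalization $\phi(V)=\tfrac12(V+1)$ plus interaction terms, and finally compresses the sum through a sigmoid $D_t = (1+\exp(-\sum D(\sigma_t)))^{-1}$. The limits $D_t\to 0,1$ follow because $\sum D(\sigma_t)\to\pm\infty$ as $\sigma_t$ sweeps its range. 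Your $H_t=-\sum_i p_i\log p_i$ with $p_i=V(x_i)/\sum_k V(x_k)$ is instead the Shannon entropy of the \emph{value-weight distribution across pieces}: it measures how evenly the $V(x_i)$ are spread, not how uncertain the underlying information is. A society in which every piece of information is equally worthless would maximize your $H_t$, but so would one in which every piece is equally valuable; conversely a single dominant $V(x_i)$ drives your $H_t\to 0$ regardless of whether that piece carries high or low source entropy. So the comparative static you derive is not the one the theorem asserts.

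There is also a quantitative failure of the limits: for fixed $n$, your $H_t\in[0,\log n]$, so $D_t=\exp(-\alpha H_t)\in[n^{-\alpha},1]$ never approaches $0$. The paper avoids this by letting the aggregate run over an unbounded range and squashing with the sigmoid. On the eigenvalue side, note that the paper does not use $3\times 3$ encodings with Frobenius norms; it builds $2\times 2$ Pauli-type matrices $M_x,M_y,M_z$ and shows that the combined matrix $A_d$ in any direction $d=(a,b,c)$ with $a^2+b^2+c^2=1$ has eigenvalues exactly $\pm 1$. That universal $\pm 1$ result is what justifies fixing the target interval of $V_{\sigma_t}$ to $[-1,1]$ before normalization. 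Your symmetric-function-of-eigenvalues step gives a nonnegative scalar of uncontrolled scale, which is why you then need an ad hoc entropy of the weights; the paper's route makes that detour unnecessary.
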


\begin{proof}
Let $(\mathcal{S},\mathcal{F},\mathbb{P})$ be a complete probability space, where $\mathcal{S}$ is a sample space representing the set of all possible trial outcomes. $\mathcal{F}$ is a $\sigma-$ algebra on $\mathcal{S}$, which is a non-empty set class satisfying the closure of pairwise complementary and countable concatenation operations for defining measurable events. $\mathbb{P}:\mathcal{F}\rightarrow[0,1]$ is a probability measure assigning a probability value to each measurable event satisfying $\mathbb{P}(\mathcal{S})=1$, and for a sequence of mutually exclusive sequence of events $\{A_n\}_{n=1}^{\infty}\subseteq\mathcal{F}$, there exists $\mathbb{P}(\bigcup_{n=1}^{\infty}A_n)=\sum_{n=1}^{\infty}\mathbb{P}(A_n)$. Let $X:\mathcal{S}\rightarrow\mathbb{R}$ be a continuous-type random variable defined on this probability space with $\{\mathit{s}\in\mathcal{S}:X(\mathit{s})\in\mathcal{B}\}\subseteq\mathcal{F}$ for an arbitrary \textit{Borel} set $\mathcal{B}(\mathbb{R})$.The probability density function of the continuous-type random variable $X$, denoted $p_{X}:\mathbb{R}\rightarrow[0,+\infty)$, is a non-negative measurable function with respect to the \textit{Lebesgue} measure $\lambda$ satisfying the following conditions:
\[\begin{gathered}x\in\mathbb{R},p_X(x)\geq0,\lambda(\{x\in\mathbb{R}{:}p_X(x)\leq0\})=0\\\int_\mathbb{R}p_X(x)d\lambda(x)=1\\\mathbb{P}(X\in\mathbb{R})=1\\\mathbb{P}(X\in\mathcal{B}(\mathbb{R}))=\int_\mathbb{R}p_X(x)d\lambda(x)\end{gathered}\]

Denote the probability density of a continuous random variable $X$ as $p_X(x),x\in\mathbb{R}$ and the information entropy of $X$ is $h(X)$ as follows:
\begin{align}
h(X)=-\int_\mathbb{R}p_X(x)lnp_X(x)d\lambda(x)\label{10}
\end{align}

Now, according to function (10), we define the generation of information $x$ is a continuous random variable whose probability density is given by $p(x),x\in\mathbb{R}^+$, then the information entropy is $H(x)=\sigma_t=-\int_{\mathbb{R}}p(x)lnp(x)dx,\sigma_t\in(-\infty,+\infty)$ measures the degree of chaos of the information, and the larger the value indicates that the worse the quality of the information and the higher the uncertainty. Considering the information entropy as a set of vectors of information after combination, $\vec{\sigma}\in\mathbb{R}^3$, and $\mathbb{R}^3$ consists of three dimensions of \textit{Timeliness}, \textit{Accuracy} and \textit{Relevance}. It is defined that when the information entropy is high, the effective value that this combination of information vectors can provide is low, as measured by the vector operator $\vec{\sigma}_-$, and when the information entropy is low, the effective value that this combination of information vectors can provide is high, as measured by the vector operator $\vec{\sigma}_+$. There are three two-dimensional subspaces within the three-dimensional space $\mathbb{R}^3$ i.e. $\mathbb{R}_{xy}^2,\mathbb{R}_{yz}^2,\mathbb{R}_{xz}^2$. Before considering the value of $\vec{\sigma}$ in any direction in the three-dimensional space, analyze the value of $\vec{\sigma}$ in a particular direction within each two-dimensional subspace. Define the second-order matrix $A_i(i=x,y,z)$ as the value matrix of $\vec{\sigma}$ in a particular direction, $\vec{\sigma}_{i\pm}(i=x,y,z)$ as the vector operator of the value of $\vec{\sigma}$ in a particular direction, $\pm|\vec{\sigma}_{i-value}|(i=x,y,z)$ is the eigenvalue of $\vec{\sigma}$ in a particular direction, measuring the value of $\vec{\sigma}$.
\[\begin{gathered}A_x\vec{\sigma}_{x+}=+|\vec{\sigma}_{x-value}|\vec{\sigma}_{x+}\\A_x\vec{\sigma}_{x-}=-|\vec{\sigma}_{x-value}|\vec{\sigma}_{x-}\end{gathered}\]

$\vec{\sigma}_{x+}$ is orthogonal to $\vec{\sigma}_{x-}$, and the vector is described by a unit matrix consisting of the two simplest components:
\[\vec{\sigma}_{x+}=(1,0)^T,\vec{\sigma}_{x-}=(0,1)^T\]

Then, we get:
\[A_{x}=[\vec{\sigma}_{x+}\quad\vec{\sigma}_{x-}]\begin{bmatrix}+|\vec{\sigma}_{x-value}|&0\\0&-|\vec{\sigma}_{x-value}|\end{bmatrix}[\vec{\sigma}_{x+}\quad\vec{\sigma}_{x-}]^{-1}\]

That is:
\[A_{x}=\begin{bmatrix}1&0\\0&1\end{bmatrix}\begin{bmatrix}+|\vec{\sigma}_{x-value}|&0\\0&-|\vec{\sigma}_{x-value}|\end{bmatrix}\begin{bmatrix}1&0\\0&1\end{bmatrix}\]

Since the three two-dimensional subspaces $\mathbb{R}_{xy}^2,\mathbb{R}_{yz}^2,\mathbb{R}_{xz}^2$ exist in $\mathbb{R}^3$ have orthogonal relationship with each other, when the space $x$, space $y$, and space $z$ are orthogonal to each other, we can get the vector operator in the other space by orthogonal decomposition of the vector base in one space.

\[\begin{gathered}\vec{\sigma}_{y+}=\frac{\sqrt{2}}{2}\vec{\sigma}_{x+}+\frac{\sqrt{2}}{2}\vec{\sigma}_{x-}=(\frac{\sqrt{2}}{2},\frac{\sqrt{2}}{2})^T\\\vec{\sigma}_{y-}=-\frac{\sqrt{2}}{2}\vec{\sigma}_{x+}+\frac{\sqrt{2}}{2}\vec{\sigma}_{x-}=(-\frac{\sqrt{2}}{2},\frac{\sqrt{2}}{2})^T\\A_y=[\vec{\sigma}_{y+}\quad\vec{\sigma}_{y-}]\begin{bmatrix}+|\vec{\sigma}_{y-value}|&0\\0&-|\vec{\sigma}_{y-value}|\end{bmatrix}[\vec{\sigma}_{y+}\quad\vec{\sigma}_{y-}]^{-1}\\A_y=\begin{bmatrix}\frac{\sqrt2}{2}&-\frac{\sqrt2}{2}\\\frac{\sqrt2}{2}&\frac{\sqrt2}{2}\end{bmatrix}\begin{bmatrix}+|\vec{\sigma}_{y-value}|&0\\0&-|\vec{\sigma}_{y-value}|\end{bmatrix}\begin{bmatrix}\frac{\sqrt2}{2}&\frac{\sqrt2}{2}\\-\frac{\sqrt2}{2}&\frac{\sqrt2}{2}\end{bmatrix}\end{gathered}\]

Then:
\[\begin{gathered}\vec{\sigma}_{z+}=\frac{\sqrt{2}}{2}\vec{\sigma}_{x+}+\frac{\sqrt{2}i}{2}\vec{\sigma}_{x-}=(\frac{\sqrt{2}}{2},\frac{\sqrt{2}i}{2})^T\\\vec{\sigma}_{z-}=-\frac{\sqrt{2}}{2}\vec{\sigma}_{x+}+\frac{\sqrt{2}i}{2}\vec{\sigma}_{x-}=(-\frac{\sqrt{2}}{2},\frac{\sqrt{2}i}{2})^T\\A_z=[\vec{\sigma}_{z+}\quad\vec{\sigma}_{z-}]\begin{bmatrix}+|\vec{\sigma}_{z-value}|&0\\0&-|\vec{\sigma}_{z-value}|\end{bmatrix}[\vec{\sigma}_{z+}\quad\vec{\sigma}_{z-}]^{-1}\\A_{z}=\begin{bmatrix}\frac{\sqrt{2}}{2}&-\frac{\sqrt{2}}{2}\\\frac{\sqrt{2}i}{2}&\frac{\sqrt{2}i}{2}\end{bmatrix}\begin{bmatrix}+|\vec{\sigma}_{z-value}|&0\\0&-|\vec{\sigma}_{z-value}|\end{bmatrix}\begin{bmatrix}\frac{\sqrt{2}}{2}&-\frac{\sqrt{2}i}{2}\\-\frac{\sqrt{2}}{2}&-\frac{\sqrt{2}i}{2}\end{bmatrix}\end{gathered}\]

Now, we get:
\[\begin{gathered}A_{x}=|\vec{\sigma}_{x-value}|\begin{bmatrix}1&0\\0&-1\end{bmatrix},A_y=|\vec{\sigma}_{y-value}|\begin{bmatrix}0&1\\1&0\end{bmatrix},A_z=|\vec{\sigma}_{z-value}|\begin{bmatrix}0&i\\-i&0\end{bmatrix}\end{gathered}\]

Perform a linear combination of $A_x,A_y,A_z$ to analyze the value $\vec{\sigma}_{d-value}$ of the information on any direction $d=(a,b,c)$ within $\mathbb{R}^3$ ($d$ needs to satisfy the normalization requirement):
\[\begin{gathered}A_{d}\vec{\sigma}_{d+}=+|\vec{\sigma}_{d-value}|\vec{\sigma}_{d+}\\A_d\vec{\sigma}_{d-}=-|\vec{\sigma}_{d-value}|\vec{\sigma}_{d-}\end{gathered}\]

Then:
\[\begin{gathered}M_x=\begin{bmatrix}1&0\\0&-1\end{bmatrix},M_y=\begin{bmatrix}0&1\\1&0\end{bmatrix},M_z=\begin{bmatrix}0&i\\-i&0\end{bmatrix}\\a^2+b^2+c^2=1\\A_{d}=|\vec{\sigma}_{d-value}|(aM_{x}+bM_{y}+cM_{z})=|\vec{\sigma}_{d-value}|\begin{bmatrix}c&a-bi\\a+bi&-c\end{bmatrix}\\|\vec{\sigma}_{d-value}|=|0\pm\sqrt{a^2+b^2+c^2}|=|\pm1|\end{gathered}\]

Hence:
\[\begin{gathered}A_d\vec{\sigma}_{d+}=\{+[1]\}\vec{\sigma}_{d+}\\A_d\vec{\sigma}_{d-}=\{-[1]\}\vec{\sigma}_{d-}\end{gathered}\]

It follows that a combination of information in any direction $\vec{\sigma}\in\mathbb{R}^3$ when measured by the vector operator $\vec{\sigma}_{d-}$ has a value $-|\vec{\sigma}_{d-value}|=-1$, and when measured by the vector operator $\vec{\sigma}_{d+}$ has a value $+|\vec{\sigma}_{d-value}|=+1$. Thus, the information entropy $\sigma_t\in(-\infty,+\infty)$ represents a combination of a set of information, and when the information entropy is larger $\sigma_t\rightarrow+\infty$(Extreme uncertainty, $\sigma_t$ degenerates to the theoretical maximum entropy value), at which point that combination of information provides very little effective value and is measured by “$-1$”. When the information entropy is smaller $\sigma_t\rightarrow-\infty$(Extreme certainty, $\sigma_t$ degraded to 0, similar to the \textit{Dirac Delta} function), at this time the information combination can provide the effective value is extremely high, with “$+1$” to measure. In reality, however, the distribution of a random variable $x\in\mathcal{X}$ cannot completely converge to the \textit{Dirac Delta} function, and its support set $\mathcal{X}$ is restricted to the interval of minimum width $\varepsilon>0$. That is, when $x$ obeys the uniform distribution $x\sim U[a,a+\varepsilon]$, its differential entropy $H(x)=\sigma_{min}=log\varepsilon$. Since both negative entropy and zero entropy characterize the case of extreme certainty of the information, in this paper, we set $\varepsilon=1$, then when the theoretical entropy of the information is small $\vec{\sigma}_t\rightarrow-\infty$, $\sigma_t$ degenerates into $\sigma_{min}=0$, which restricts that $H(x)\geq0$. In addition, for the fixed support set $\mathcal{X}$, when $x$ obeys the Gaussian distribution $x\sim \mathcal{N}(\mu,v^2)$, the maximum entropy $H(x)=\sigma_{max}$ is reached, and the Lagrange equation is constructed by using the conditions $\int p(x)dx=1$ and $\int(x-\mu)^2p(x)dx=v^2$, then, use the calculus of variations, the maximum entropy $H(x)=\sigma_{max}=\frac{1}{2}ln(2\pi ev^2)$ can be found, so that, when the theoretically greater information entropy $\sigma_t\rightarrow+\infty$, $\sigma_t$ degenerates to $\sigma_{max}$. Accordingly, normalizing the differential entropy to the interval $[\sigma_{min},\sigma_{max}]$ and mapping it to $V_{\sigma_t }\in[-1,+1]$, we get the information value function $V_{\sigma_t }$:
\[\begin{gathered}V_{\sigma_t}(x)=1-2\frac{\sigma_t(x)}{\sigma_{max}}\\\lim_{\sigma_t\to-\infty}V_{\sigma_t}(x)=V_{\sigma_t}(\sigma_t(x)=\sigma_{min})=V_{max}=+1\\\operatorname*{lim}_{\sigma_{t}\to+\infty}V_{\sigma_{t}}(x)=V_{\sigma_{t}}(\sigma_{t}(x)=\sigma_{max})=V_{min}=-1\end{gathered}\]

Data elements $\mathbb{D}\subseteq\{x_1,x_2,x_3...x_n\}$, information value $V_{\sigma_t}(x_i)\in[-1,1]$, and a linear normalization function $\phi(V_{\sigma_t})=\frac{1}{2}(V_{\sigma_t}+1)$ mapping information value $V_{\sigma_t}(x_i)$ to data value $D(\sigma_t)$, with synergistic coefficients $a_{ij}\geq0$, and antagonistic coefficients $b_{ij}\geq0$. Describing the information combinations $\{x_i,x_j\}\subset\mathbb{D}(i,j\in\{1,2,3...n\},i\neq j)$ of the interactions, then the joint value addition of information $\Delta_a=a_{ij}\phi(V_{\sigma_t}(x_i))\phi(V_{\sigma_t} (x_j))$ and the joint value loss $\Delta_b=b_{ij}\phi(V_{\sigma_t}(x_i))\phi(V_{\sigma_t}(x_j))$, setting the interaction strength coefficient $J\geq0$, the data value $D(\sigma_t)$ is as follows:
\[D(\sigma_t)=\frac{1}{n}\sum_{i=1}^n\phi(V_{\sigma_t}(x_i))+J\left(\frac{\sum_{i<j}(a_{ij}-b_{ij})\phi(V_{\sigma_t}(x_i))\phi(V_{\sigma_t}(x_j))}{n(n-1)/2}\right)\]

Since the data value variable symbolizes the whole value state of the information accumulating with time in society, we set data value variable $D_t$ is the sum of $D(\sigma_t)$, i.e., $D_t=\sum D(\sigma_t)$. And the output range of $\sum D(\sigma_t)$ is compressed by the \textit{Sigmoid} function: 

\[\mathbb{T}(D(\sigma_t))=\frac{1}{1+\exp(-(\sum D(\sigma_t)))}\]

Then we get:
\[D_t=\mathbb{T}(D(\sigma_t))\in(0,1)\begin{cases}\lim_{\sum D(\sigma_t)\to+\infty}D_t\to1\\\lim_{\sum D(\sigma_t)\to-\infty}D_t\to0\end{cases}\]
\end{proof}

Theorem 1, Theorem2 and Theorem3 reveal and conclude the prerequisites of our paper: (\textbf{A}) Big data value variable is measured by $(0,1)$ in our analysis framework. (\textbf{B}) The rationality of agents is decided by the cognitive resources they have. The big data, which accumulates with continuous time, will profoundly dilute the cognitive resources of agents and decrease the probability that they make rational decisions if the big data interactions of agents exist. 

\section{The Model: Consumption and Utility Acquisition}
For the third section, we specifically analyze and prove the amount and direction of agent's consumption adjustment when the agent interacts with big data, and construct a consumption adjustment weight function (CAWF) based on the obtained conclusions. Applying the CAWF we can find that: When a big data interaction exists, the agent's cognitive resources are diluted, and the decision-making of this agent will be irrational. This ultimately results in the fact that the agent's effective consumption, which is able to acquire utility, will become a weight of total consumption.

\subsection{The Amount of Consumption Adjustment with Uncertainty}

\textbf{Prospect Theory}: According to \citet{kai1979prospect}, most people are risk-averse when they are faced with a profit situation, preferring small definite returns. Whereas, when they are faced with a loss situation, they are risk-averse, wishing to avoid the loss as much as possible. Based on this, we set the information that the agent can receive with the data elements as a binary state variable $w\in\{0,1\}$, and the agent's consumption decision $s\in\mathcal{S}$ is determined by the likelihood function $p(s|w)$.

We assume that there are two types of agents, one type of agent whose cognitive resources have not been diluted by big data, so its consumption decision is rational, and the other type of agent whose cognitive resources have been significantly diluted by big data for a long period of time, so its consumption decision is irrational. The rational agent can accurately judge the direction of consumption adjustment and the exact amount of adjustment according to the information, which is defined as “Bayesian Agent”, and its consumption decision is given by the set $s=(s_d,s_q)$. Where $s_d$ determines the direction of consumption adjustment, whether it is an increase or a decrease, and $s_q\in\mathbb{R}$ determines the exact amount of change in consumption, the magnitude of $s_q$ is determined by the information uncertainty. The Bayesian agent's decision about the amount of consumption change is rational: When information uncertainty is high (information entropy $\sigma_t$ is large), the difference between $p(s|w=1)$ and $p(s|w=0)$ is small, and the agent's change in consumption will be small and unbiased. When the information uncertainty is low (the information entropy $\sigma_t$ is low), the difference between $p(s|w=1)$ and $p(s|w=0)$ is very large, and the agent's decision based on the information is relatively certain about its own benefit, i.e., it is rational, so its change in consumption will increase. Referring to \citet{augenblick2025overinference}, set $f(x)=ln(\frac{x}{1-x})$, the Bayesian agent's consumption is $c(s)$, the baseline consumption is $c_0$, and the consumption adjustment term is $\mathbb{S}(s)$, which is determined by the uncertainty of information:
\[\begin{gathered}\underbrace{f(c(s))}_{Posterior\ Consume}=\underbrace{f(c_0)}_{Prior\ Consume}\quad\underbrace{\pm}_{s_d}\quad\underbrace{S(s)}_{s_q|s_d}\\\mathbb{S}(s)=\left|ln\left(\frac{p(s|w=1)}{p(s|w=0)}\right)\right|\propto\frac{1}{\sigma_t}\\p(s|w=1)+p(s|w=0)=1,p(s|w=1)\geq p(s|w=0)\end{gathered}\]

On the contrary, although irrational agents also have the awareness of judging the economic situation based on the information of the current data and thus assisting the consumption decision, they are not rational enough. They can judge whether the consumption environment is good or bad based on the information, therefore, they can be aware of whether consumption needs to be increased or decreased, but since they do not inquire more deeply into the truth or falsity of the information, i.e., subjectively, the benefit they gain through the adjustment amount decision they make based on any piece of information is equiprobable, so the irrational agent does not know exactly how much consumption should be adjusted. In this paper, we define an irrational agent as a non-Bayesian Agent, this type of agent knows how to adjust the direction of consumption based on the information but does not know the exact amount that their consumption should be adjusted. The consumption adjustment decision of this agent is $\widehat{\mathbb{S}}(\hat{s})$, which is given by the set $\hat{s}=(s_d,s_n),s_n\in\mathbb{R}$. And satisfies three types of conditions: (\textbf{A}) $\mathbb{E}[s_n|\mathbb{S}]=\mathbb{S}$, indicating that the non-Bayesian agent's estimate of $s_n$ fluctuates in a statistically significant way around the true consumption adjustment, $\mathbb{S}$, and its expectation is equal to the true consumption adjustment $\mathbb{S}$. (\textbf{B}) $P(\mathbb{S}|s_n)\neq1$, i.e., the adjustment to consumption, $\mathbb{S}$, is not degenerate in any case, and we cannot infer the complete true adjustment to consumption accurately by estimating $s_n$. (\textbf{C}) $\mathbb{S}_2>\mathbb{S}_1,\frac{\partial}{\partial s_n}\left(\frac{p(s_n|s_d,\mathbb{S}=\mathbb{S}_2)}{p(s_n|s_d,\mathbb{S}=\mathbb{S}_1}\right)>0$, i.e., there is an ordered correspondence between the magnitude of the adjustment of consumption, $s_n$, and the magnitude of the true adjustment of consumption, $\mathbb{S}$, such that the higher the estimation of $s_n$ is, the larger the adjustment of consumption is, and vice versa. Accordingly, we propose Theorem 4.

\begin{theorem}
\textit{Compared to rational agents, irrational agents will decide to underestimate the amount of consumption adjustment when information uncertainty is low (lower information entropy). And they will decide to overestimate the amount of consumption adjustment when information uncertainty is high (higher information entropy).}
\end{theorem}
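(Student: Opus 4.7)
The plan is to exploit conditions (A)--(C) on the non-Bayesian agent's signal $s_n$ to establish a shrinkage relationship between the irrational agent's consumption-adjustment decision $\widehat{\mathbb{S}}$ and the true Bayesian-rational adjustment $\mathbb{S}$, and then translate ``shrinkage toward a central tendency'' into the under/over-estimation pattern claimed. The intuition is classical regression to the mean: because $s_n$ is an unbiased but noisy signal of $\mathbb{S}$ with a non-degenerate posterior, any decision rule consistent with the information must pull extreme realizations of $\mathbb{S}$ toward the prior mean, so extreme-high $\mathbb{S}$ yields systematic underestimation and extreme-low $\mathbb{S}$ yields systematic overestimation.

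First I would invoke the monotone-likelihood-ratio condition (C) together with Milgrom's (1981) theorem to conclude that the posterior $P(\mathbb{S} \mid s_n, s_d)$ is first-order stochastically increasing in $s_n$, so $\mathbb{E}[\mathbb{S} \mid s_n]$ is a monotone increasing function of $s_n$. By (A) and the law of iterated expectations, $\mathbb{E}[\mathbb{E}[\mathbb{S} \mid s_n]] = \mathbb{E}[\mathbb{S}]$, while the law of total variance combined with the non-degeneracy in (B) yields the strict shrinkage inequality $\operatorname{Var}(\mathbb{E}[\mathbb{S} \mid s_n]) < \operatorname{Var}(\mathbb{S})$. Writing the non-Bayesian decision as an implicit shrinkage mapping $\widehat{\mathbb{S}}(s_n) = \alpha s_n + (1-\alpha)\bar{\mathbb{S}}$, with $\alpha \in (0,1)$ and $\bar{\mathbb{S}} = \mathbb{E}[\mathbb{S}]$ --- a form pinned down explicitly under a Gaussian conjugate specification $s_n \mid \mathbb{S} \sim \mathcal{N}(\mathbb{S}, \tau^2)$, and more generally following the log-odds machinery of \citet{augenblick2025overinference} already cited in the setup --- iterating expectations gives
\[
\mathbb{E}[\widehat{\mathbb{S}} \mid \mathbb{S}] - \mathbb{S} \;=\; (1-\alpha)\bigl(\bar{\mathbb{S}} - \mathbb{S}\bigr),
\]
so the conditional bias carries the opposite sign from $\mathbb{S} - \bar{\mathbb{S}}$.

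To close the argument I would translate with the already-established identification $\mathbb{S} \propto 1/\sigma_t$: when $\sigma_t$ is small the signal is informative and $\mathbb{S}$ lies above $\bar{\mathbb{S}}$, making the bias negative and delivering underestimation; when $\sigma_t$ is large the signal is weak and $\mathbb{S}$ lies below $\bar{\mathbb{S}}$, flipping the bias positive and delivering overestimation. The hardest step will be pinning down the shrinkage form of $\widehat{\mathbb{S}}(s_n)$ from (A)--(C) alone, since those conditions constrain the statistical structure of the signal but leave genuine freedom in how the cognitively-constrained agent maps $s_n$ into a chosen adjustment. I would handle this either by specializing to the Gaussian conjugate model, where $\alpha = \sigma_0^2/(\sigma_0^2 + \tau^2)$ has a clean closed form, or by arguing structurally that any decision rule respecting (A)--(C) and the cognitive-dilution conclusion of Theorem~2 must admit a shrinkage representation of this type; a secondary concern is tying $\bar{\mathbb{S}}$ to the endogenous steady-state cognitive resource $R_i^*$ so that both halves of the sign-flip actually activate over the empirically relevant range of $\mathbb{S}$.
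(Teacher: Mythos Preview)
Your proposal is correct and follows essentially the same shrinkage-to-the-prior-mean argument as the paper, including the Gaussian conjugate specialization and the Augenblick--Rabin citation. The one detail to adjust is that the paper works in \emph{log} space throughout (assuming $\ln\mathbb{S}\sim\mathcal{N}(\mu_{\mathbb{S}},\sigma_{\mathbb{S}}^2)$ and $\ln s_n\mid\mathbb{S}\sim\mathcal{N}(\ln\mathbb{S}-\sigma_n^2/2,\sigma_n^2)$), which yields the multiplicative form $\widehat{\mathbb{S}}=\mu^b\,\mathbb{S}^{\beta^b}$ with $\beta^b=\sigma_{\mathbb{S}}^2/(\sigma_{\mathbb{S}}^2+\sigma_n^2)\in(0,1)$ in place of your additive $\alpha s_n+(1-\alpha)\bar{\mathbb{S}}$; your abstract MLR and total-variance preliminaries and the proposed link to $R_i^*$ are not used in the paper's argument.
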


\begin{proof}
In this paper, we argue that, depending on the individual cognitive resources subject to different dilution effects of big data of different scales, the consumption adjustment of non-Bayesian agents is heterogeneous, therefore, they do not always update the exact amount of their consumption adjustment to a certain size of $s_n$, they adjust the value of $s_n$ at any time according to the scale of big data and the generation of information which they have received. Referring to the findings of \citet{augenblick2025overinference} and \citet{chambers2012updating}, it is set that a non-Bayesian agent's estimation of the economic situation before acquiring any information in the present can generate a subjective decision $\widehat{\mathbb{S}}_0$ about consumption, and when the agent understands after acquiring the information whether it is advisable to reduce or increase consumption, his decision is updated to $\widehat{\mathbb{S}}(s_d)$ firstly. Since this type of agents are irrational in their judgment of consumption and cannot accurately perceive what consumption $s_n$ should be adjusted to, hence, the true consumption adjustment decision, $\widehat{\mathbb{S}}(\hat{s})$, which consists of the non-Bayesian agent's actual change in the direction and the exact amount of the consumption, will between $\widehat{\mathbb{S}}(s_d)$ and $s_n$:
\begin{align}
\widehat{\mathbb{S}}(\hat{s})=\varepsilon s_n+(1-\varepsilon)\widehat{\mathbb{S}}\left(s_d\right),\varepsilon\in(0,1)\label{(11)}
\end{align}

Therefore, when receiving the same type of information $s=s_0$ at the social level, the Bayesian agent will take the error-free consumption adjustment decision $s=(s_d,s_q)$, whose complete consumption adjustment magnitude is $\mathbb{S}(s)=\left|ln\left(\frac{p(s|w=1)}{p(s|w=0)}\right)\right|$, which stems from the type of agent's accurate and rational judgment of the consumption adjustment magnitude $s_q$. On the other hand, non-Bayesian agents take the consumption adjustment decision $\hat{s}=(s_d,s_n)$, whose consumption adjustment magnitude needs to satisfy function $(11)$, which is due to the fact that the consumption adjustment term $s_n$ of this type of agent will be endogenous to the cognitive resources that are subjected to the dilution of big data. In this paper, we set that the Bayesian agent's consumption adjustment size $\mathbb{S}(s)$ is the criterion for judging whether the consumption decision is rational or not, and the non-Bayesian agent's consumption adjustment size will always deviate from $\mathbb{S}(s)$, which in turn produces over-adjustment or under-adjustment decisions of the consumption size, $\widehat{\mathbb{S}}(\hat{s})$, i.e., $s_n$ always fluctuates around $s_q$. Specifically: when receiving information $s=s_0$, if $\mathbb{E}[\widehat{\mathbb{S}}(\hat{s})|s]>\mathbb{S}(s)$, the non-Bayesian agent overestimates the size of adjustment to consumption, and if $\mathbb{E}[\widehat{\mathbb{S}}(\hat{s})|s]<\mathbb{S}(s)$, the non-Bayesian agent underestimates the size of adjustment to consumption.

Setting that both the true consumption adjustment S and the estimate of the consumption adjustment $s_n$ obey a lognormal distribution when the message $s=s_0$ is received, and the estimate $s_n$ fluctuates around the true consumption adjustment S, i.e., $\mathbb{E}[s_n|\mathbb{S}]=\mathbb{S}$, then:
\[\begin{gathered}ln\mathbb{S}{\sim}\mathcal{N}(\mu_{\mathbb{S}},\sigma_{\mathbb{S}}^{2})\\lns_n{\sim}\mathcal{N}\left(ln\mathbb{S}-\frac{\sigma_n^2}{2},\sigma_n^2\right)\\ln\widehat{\mathbb{S}}(\hat{s}){\sim}\mathcal{N}\left(\mathbb{E}[ln\widehat{\mathbb{S}}(\hat{s})|s],\mathrm{Var}[ln\widehat{\mathbb{S}}(\hat{s})|s]\right)\end{gathered}\]

And:
\[\begin{gathered}ln\mathbb{E}[s_n]=\mathbb{E}[lns_n]+\frac{\sigma_n^2}{2}\\\widehat{\mathbb{S}}(s_d)=\mathbb{E}[\mathbb{S}|s_d]=\exp\left(\mu_\mathbb{S}+\frac{\sigma_\mathbb{S}^2}{2}\right)\end{gathered}\]

We get:
\[\begin{gathered}ln\widehat{\mathbb{S}}(\hat{s})=\left(\frac{\sigma_{\mathbb{S}}^{2}}{\sigma_{\mathbb{S}}^{2}+\sigma_{n}^{2}}ln(s_{n})+\left(1-\frac{\sigma_{\mathbb{S}}^{2}}{\sigma_{\mathbb{S}}^{2}+\sigma_{n}^{2}}\right)ln(\widehat{\mathbb{S}}(s_{d}))\right)\\ln\widehat{\mathbb{S}}(\hat{s})=\left(\underbrace{\frac{\sigma_{\mathbb{S}}^{2}}{\sigma_{\mathbb{S}}^{2}+\sigma_{n}^{2}}}_{Weight\ on\ estimate}\times\underbrace{\left(lns_{n}+\frac{\sigma_{n}^{2}}{2}\right)}_{LN\ Adjusted\ estimate}+\underbrace{\left(1-\frac{\sigma_{\mathbb{S}}^{2}}{\sigma_{\mathbb{S}}^{2}+\sigma_{n}^{2}}\right)}_{Weight\ on\ prior}\times\underbrace{ln\exp\left(\mu_{\mathbb{S}}+\frac{\sigma_{\mathbb{S}}^{2}}{2}\right)}_{LN\ Adjusted\ prior}\right)\end{gathered}\]

We set $b\in\{Bayes,non-Bayes\},\beta^b=\frac{\sigma_{\mathbb{S}}^2}{\sigma_{\mathbb{S}}^2+\sigma_n^2}$, then: 
\[ln\widehat{\mathbb{S}}(\hat{s})=\left(\beta^b\left(lns_n+\frac{\sigma_n^2}{2}\right)+(1-\beta^b)ln\exp\left(\mu_{\mathbb{S}}+\frac{\sigma_{\mathbb{S}}^2}{2}\right)\right)\]

Hence:
\[\begin{gathered}\mathbb{E}\left[ln\widehat{\mathbb{S}}(\widehat{s})|s\right]=\beta^{b}\mathbb{E}\left[\left(lns_n+\frac{\sigma_{n}^{2}}{2}\right)|s\right]+(1-\beta^{b})\mathbb{E}\left[\left(ln\exp\left(\mu_{\mathbb{S}}+\frac{\sigma_{\mathbb{S}}^{2}}{2}\right)\right)|s\right]\\\mathbb{E}[ln\widehat{\mathbb{S}}(\hat{s})|s]=\beta^b\left(ln\mathbb{S}-\frac{\sigma_n^2}{2}+\frac{\sigma_n^2}{2}\right)+(1-\beta^b)\left(ln\exp\left(\mu_\mathbb{S}+\frac{\sigma_\mathbb{S}^2}{2}\right)\right)\\\mathbb{E}[ln\widehat{\mathbb{S}}(\hat{s})|s]=\left(\beta^b(ln\mathbb{S})+(1-\beta^b)\left(ln\exp\left(\mu_\mathbb{S}+\frac{\sigma_\mathbb{S}^2}{2}\right)\right)\right)\end{gathered}\]

Given:
\[\begin{gathered}\mathrm{Var}[ln\widehat{\mathbb{S}}(\hat{s})|s]=(\beta^b)^2\mathrm{Var}[lns_n|s]=(\beta^b)^2\sigma_n^2\\ln\widehat{\mathbb{S}}(\hat{s})\thicksim\mathcal{N}\left(\left(\beta^b(ln\mathbb{S})+(1-\beta^b)\left(ln\exp\left(\mu_\mathbb{S}+\frac{\sigma_\mathbb{S}^2}{2}\right)\right)\right),(\beta^b)^2\sigma_n^2\right)\end{gathered}\]

Hence:
\[\begin{gathered}\widehat{\mathbb{S}}(\hat{s})=\mathbb{E}[\widehat{\mathbb{S}}(\hat{s})|s]=\exp\left(\left(\beta^b(ln\mathbb{S})+(1-\beta^b)\left(ln\exp\left(\mu_\mathbb{S}+\frac{\sigma_\mathbb{S}^2}{2}\right)\right)\right)+\frac{(\beta^b)^2\sigma_n^2}{2}\right)\\\widehat{\mathbb{S}}(\hat{s})=\exp\left(\beta^b(ln\mathbb{S})\right)\times\exp\left((1-\beta^b)\left(ln\exp\left(\mu_\mathbb{S}+\frac{\sigma_\mathbb{S}^2}{2}\right)\right)\right)\times\exp\left(\frac{(\beta^b)^2\sigma_n^2}{2}\right)\\\widehat{\mathbb{S}}(\hat{s})=\mathbb{S}^{\beta^b}\times\left(\exp\left(\mu_\mathbb{S}+\frac{\sigma_\mathbb{S}^2}{2}\right)\right)^{(1-\beta^b)}\times\exp\left(\frac{(\beta^b)^2\sigma_n^2}{2}\right)\end{gathered}\]

Let $\mu^b=\left(\exp\left(\mu_\mathbb{S}+\frac{\sigma_\mathbb{S}^2}{2}\right)\right)^{(1-\beta^b)}\times\exp\left(\frac{(\beta^b)^2\sigma_n^2}{2}\right)$, then: 
\[\begin{gathered}\widehat{\mathbb{S}}(\hat{s})=\mu^b\mathbb{S}^{\beta^b}\\ln(\widehat{\mathbb{S}}(\hat{s}))=\beta^bln(\mathbb{S})+ln(\mu^b)\end{gathered}\]

For Bayesian agents, since they always make rational decisions that make consumption adjustments rational and efficient, so the consumption adjustments from their decisions are unbiased and the variance of the estimation is 0 with $\beta^{Bayes}=1,\mu^{Bayes}=0$:

\[ln(\mathbb{S}(\hat{s}))=ln(\mathbb{S})\]

Since the magnitude of consumption adjustment of non-Bayesian agents will always deviate from $\mathbb{S}(s)$, i.e., if $0<t_1<1$, $ln(\widehat{\mathbb{S}}(\hat{s}))_{low}=ln(t_1\times\mathbb{S}(s))=ln(\mathbb{S})+ln(t_1)<ln(\mathbb{S}(s))$, at this time, the non-Bayesian agent underestimates the consumption that should be adjusted. If $1<t_2$, $ln(\widehat{\mathbb{S}}(\hat{s}))_{high}=ln(t_2\times\mathbb{S}(s))=ln(\mathbb{S})+ln(t_2)>ln(\mathbb{S}(s))$, at this time the non-Bayesian agent overestimates the consumption that should be adjusted. 

Accordingly, we set the parameters $\beta^{non-Bayes}=0.8,\mu^{non-Bayes}=0.9$, and the simulation is shown in Figure 4:
\begin{figure}[htbp]
\centering
\includegraphics[width=11cm]{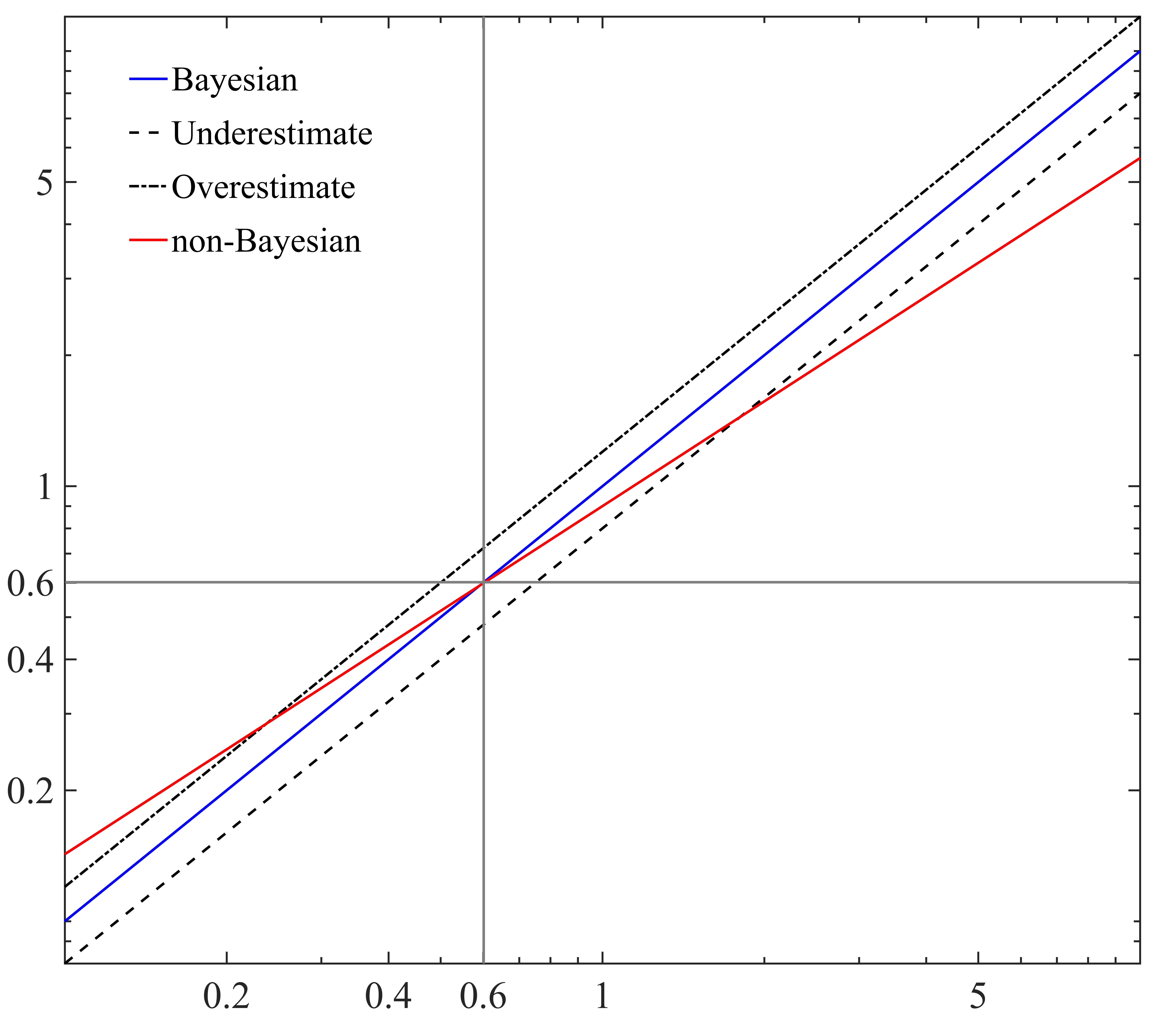}
\caption{\label{fig:F4}Consumption Adjustment Amount with Different Uncertainty}
\end{figure}

In Figure 4, the vertical axis is $ln(\widehat{\mathbb{S}}(\hat{s}))$ and the horizontal axis is $ln(\mathbb{S}(s))$. Therefore, when $0<ln(\mathbb{S})<0.6$, the non-Bayesian agent's consumption adjustment $ln(\widehat{\mathbb{S}}(\hat{s}))$ will be higher than $ln(\mathbb{S}(s))$, which approaches towards the $ln(\widehat{\mathbb{S}}(\hat{s}))_{high}$. And when $ln(\mathbb{S}(s))>0.6$, the consumption adjustment $ln(\widehat{\mathbb{S}}(\hat{s}))$ will be lower than $ln(\mathbb{S}(s))$, which approaches towards the $ln(\widehat{\mathbb{S}}(\hat{s}))_{low}$.

To summarize: $\mathbb{S}(s)=\left|ln\left(\frac{p(s|w=1)}{p(s|w=0)}\right)\right|$ is measured by the magnitude of information uncertainty, when the difference between $p(s|w=1)$ and $p(s|w=0)$ is small, the degree of information uncertainty is high, the information entropy is larger, at this time, the value of $ln(\mathbb{S}(s))$ is small, and the non-Bayesian agent tends to choose $ln(\widehat{\mathbb{S}}(\hat{s}))_{high}$, overestimate the consumption adjustment. When the difference between $p(s|w=1)$ and $p(s|w=0)$ is large, the degree of information uncertainty is low, the information entropy is low, at this time, the value of $ln(\mathbb{S}(s))$ is large, the non-Bayesian agent tends to choose $ln(\widehat{\mathbb{S}}(\hat{s}))_{low}$, underestimate the consumption adjustment.
\end{proof}

\subsection{The Direction of Consumption Adjustment with Tax Rate}
\textbf{Entrepreneur and Worker Model}: In this section, we are going to introduce government behavior. We assume that government tax revenue reflects the economic emotion and situation of society. When the economic situation remains prosperous, economic uncertainty is low, and tax revenue is low. Conversely, when the economic situation is poor and economic uncertainty is high, tax revenue is high. We further assume that the government acts rationally, meaning that it adjusts tax rates to ensure relatively stable tax revenue regardless of whether the economic environment is favorable or unfavorable. Therefore, government employees' incomes are stable, and they receive wages from total tax revenue to consume $C_{Gov}$. For entrepreneur investment agents, we assume that within a continuous phase $t$, agents will be impacted by big data interactions, at the beginning of this phase, agents' cognitive resources have not yet been diluted, and their decisions will be rational, making them Bayesian agents with a consumption level of $C_{Bayes}$. At the end of the phase, the agent's cognitive resources are diluted to equilibrium, with a high degree of dilution, leading to irrational decisions as a non-Bayesian agent, with a consumption level of $C_{non-Bayes}$.

Referring to the studies of \citet{pastor2020political} and \citet{pastor2016income}, we assume that the above agents have similar preferences for final consumption $U_t$:
\[U_t(C_{b,t})=\frac{(C_{b,t})^{1-\gamma_b}}{1-\gamma_b}\]

Among these, $C_{b,t}$ represents the consumption level of agent $b$ at stage $t$, where $b\in\{Bayes,non-Bayes\}$, and $\gamma_b$ denotes the risk aversion coefficient. When $\gamma_b>1$, it corresponds to CRRA utility, when $\gamma_b=1$, it corresponds to Log utility. Defining Bayesian and non-Bayesian agents due to the varying degrees to which their cognitive resources are diluted, agents exhibit heterogeneity in their cognitive abilities. Agent $b$ is endowed with a cognitive ability level $\mu_b$, which follows $\mu_b{\sim}\mathcal{N}(0,\sigma_{\mu}^2)$. Thus, agents with higher cognitive abilities can make rational decisions, thereby increasing the output from their investments and achieving higher consumption levels.

The agent obtains output $Y_{b,t+1}$ through investment:
\begin{align}
Y_{b,t+1}=e^{\mu_b+\varepsilon_{t+1}+\varepsilon_{b,t+1}}G_t\label{12}
\end{align}

For an agent $b$, all shocks are independent and identically distributed. $\varepsilon_{t+1}{\sim}\mathcal{N}(-\frac{1}{2}\sigma^2,\sigma^2)$ is the aggregate shock. $\varepsilon_{b,t+1}{\sim}\mathcal{N}(-\frac{1}{2}\sigma_{1}^2,\sigma_{1}^2)$ is the heterogeneous shock. Therefore, we can get that $\mathbb{E}_t[e^{\varepsilon_{t+1}}]=\mathbb{E}_t[e^{\varepsilon_{b,t+1}}]=1$. $G_t$ is the government's contribution to output. Each agent holds assets $Y_{b,t+1}(1-\tau_t)$ at the beginning of the investment period, where $\tau_t$ is the tax rate. Agents can sell a portion of their assets to other agents and use the proceeds to purchase two types of financial assets: Shares of other agents and risk-free bonds. Bonds mature at the end of period $t$, with a net supply of zero. Each agent must retain ownership of at least a small portion $\theta_c\in(0,1)$ of their assets due to investment risk considerations. This friction results in market incompleteness. In a real fiscal redistribution system, investment agents are net contributors or net taxpayers, while government employees are net beneficiaries. Government employees include not only government staff but also retirees living on social security, recipients of disability or unemployment benefits, and others.

For different economic conditions and environments, government departments impose different tax rates $\tau_t$ on investor returns, and tax revenues are redistributed to government staff to ensure that the government operates on a balanced budget. Therefore, we assume that: (\textbf{A}) When economic uncertainty is high, the tax rate is high, $\tau_t=\tau^H$, when economic uncertainty is low, the tax rate is low, $\tau_t=\tau^L$, where $\tau^H>\tau^L$. (\textbf{B}) In the macroeconomic issues discussed in this paper, only investment agents (Bayesian agents $C_{Bayes}$ and non-Bayesian agents $C_{non-Bayes}$) and government staff ($C_{Gov}$) engage in consumption. Let agent type $b$ comes from the set $L_t$, then the size of the investment agents is $m_t=\int_{b\in L_t}db$, and the size of government workers is $1-m_t$. Based on this, Proposition 1 is proposed:

\begin{proposition}
\textit{When economic uncertainty is high, the government levies a tax of $\tau^H$, and the consumption level of investment agents decreases. When economic uncertainty is low, the government levies a tax of $\tau^L$, and the consumption level of investment agents increases.}
\end{proposition}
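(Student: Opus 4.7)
The plan is to derive the investment agent's optimal consumption rule under the CRRA utility specification and then show that this rule is monotonically decreasing in the tax rate $\tau_t$; the claim then follows from assumption (A), which ties $\tau^H$ ($\tau^L$) to high (low) economic uncertainty by construction.

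First, I would set up the agent's intertemporal budget: at the start of period $t$, agent $b$ has post-tax investable wealth $W_{b,t} = Y_{b,t}(1-\tau_t)$, where $Y_{b,t}$ is given by equation (12). The agent allocates $W_{b,t}$ between current consumption $C_{b,t}$, a self-held share that must exceed the incompleteness bound $\theta_c$, shares of other agents, and risk-free bonds in zero net supply. Because the log-shocks $\varepsilon_{t+1}$ and $\varepsilon_{b,t+1}$ are i.i.d.\ normals with $\mathbb{E}_t[e^{\varepsilon_{t+1}}]=\mathbb{E}_t[e^{\varepsilon_{b,t+1}}]=1$, and the government runs a balanced budget redistributing only to workers (so $G_t$ is independent of the investor's own choices), the problem is homothetic in $W_{b,t}$.

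Second, I would apply standard dynamic-programming arguments: guess a value function of the CRRA form $V_t(W)=A_{b,t}\,W^{1-\gamma_b}/(1-\gamma_b)$, verify it satisfies the Bellman equation given the log-normal returns and the friction $\theta_c$, and read off the optimal consumption policy $C_{b,t} = \kappa_b\,W_{b,t} = \kappa_b\,Y_{b,t}(1-\tau_t)$. The propensity $\kappa_b$ depends on the discount factor, $\gamma_b$, the moments $\sigma^2,\sigma_1^2,\sigma_\mu^2$, and the retention bound, but not directly on the tax rate itself. Differentiating, $\partial C_{b,t}/\partial \tau_t = -\kappa_b\,Y_{b,t} < 0$, so higher taxes strictly reduce the investor's consumption. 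Combined with assumption (A), $\tau^H > \tau^L$ implies $C_{b,t}(\tau^H) < C_{b,t}(\tau^L)$, which is exactly the comparative statics claimed for both $b\in\{Bayes,\,non\text{-}Bayes\}$.

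The main obstacle I anticipate is establishing that $\kappa_b$ is genuinely independent of $\tau_t$ in equilibrium. The government rebates tax revenue to workers (size $1-m_t$) whose consumption $C_{Gov}$ may, through general-equilibrium price effects on the cross-holdings of shares, alter investors' expected returns and hence $\kappa_b$. I would argue that under the partial-equilibrium perspective the paper adopts, and given that the retention constraint $\theta_c$ limits trade in shares so that the redistribution channel is muted, the direct wealth effect through the factor $(1-\tau_t)$ dominates any indirect feedback. If a fully general-equilibrium argument is required, I would close the model by imposing market clearing for bonds and shares, then verify that the resulting fixed-point solution for $\kappa_b$ is either still independent of $\tau_t$ or at worst preserves the sign $\partial C_{b,t}/\partial \tau_t < 0$.
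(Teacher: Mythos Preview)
Your proposal reaches the right conclusion but by a different route than the paper, and it also misreads the model's timing. You cast the problem as an infinite-horizon consumption--savings problem and invoke a Bellman-equation guess $V_t(W)=A_{b,t}W^{1-\gamma_b}/(1-\gamma_b)$ to extract a consumption propensity $\kappa_b$. The paper's model, however, is a static (two-date) portfolio-choice problem in the spirit of \citet{pastor2016income}: there is no consumption--savings margin at date $t$; the agent only decides how to allocate her claim to $(1-\tau)Y_{b,t+1}$ across retained own-shares (at least $\theta_c$), shares of other firms, and zero-net-supply bonds, and then consumes the realized portfolio payoff at $t+1$. The paper solves the share and bond markets explicitly---showing via market clearing that the optimal risky portfolio weight is $\delta(\gamma)=1$ and $N_{bt}^0=0$---and obtains the closed-form consumption
\[
C_{b,t+1}=(1-\tau)\,G_t e^{\mu_b} e^{\varepsilon_{t+1}}\bigl[\theta_c e^{\varepsilon_{b,t+1}}+(1-\theta_c)\bigr].
\]
It then compares \emph{expected utilities} $\mathbb{E}_t[U(C_{b,t+1})\mid\tau]$ under both CRRA ($\gamma_b\neq1$) and log ($\gamma_b=1$) cases, showing the inequality holds iff $\tau^H>\tau^L$. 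So the general-equilibrium feedback you flag as the ``main obstacle'' is exactly what the paper resolves head-on through market clearing, rather than arguing it away as a second-order effect. A second difference is that the paper does not leave the uncertainty--tax link to assumption (A) alone: it supplies an empirical panel regression (Chinese A-share firms, 2000--2023) showing that higher measured uncertainty is associated with significantly higher corporate tax payments, which is treated as part of the proof. Your approach would still deliver the monotonicity in $\tau$, but it imports machinery (recursive value functions, a saving rate $\kappa_b$) that the model does not actually contain, and it omits both the explicit equilibrium construction and the empirical leg that the paper regards as integral to the argument.
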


\begin{proof}
Given $L_t$, the expected total output of the economy is fixed. Part of this output will be allocated to government workers, which equals to tax rate $\tau_t$, and another part will be allocated to investors, which equals to $1-\tau_t$. The consumption of government workers comes from total tax revenue, which depends on total output $Y_{t+1}$:
\[Y_{t+1}=\int_{j\in L_t}Y_{j,t+1}dj\]

For a certain tax rate $\tau$, the total tax revenue is $\tau Y_{t+1}$:
\begin{align}
\text{Tax Revenue}=\tau\int_{j\in L_t}Y_{j,t+1}dj=\tau\left(\int_{j\in L_t}e^{\mu_j+\varepsilon_{t+1}+\varepsilon_{j,t+1}}dj\right)G_t\label{13}
\end{align}

Based on the \textit{Law of Large Numbers}:
\[\begin{gathered}\int_{j\in L_t}e^{\mu_j+\varepsilon_{j,t+1}}dj=m_t\mathbb{E}[e^{\mu_j+\varepsilon_{j,t+1}}|j\in L_t]=m_t\mathbb{E}[e^{\mu_j}|j\in L_t]\mathbb{E}_t[e^{\varepsilon_{j,t+1}}|j\in L_t]\\\int_{j\in L_t}e^{\mu_j+\varepsilon_{j,t+1}}dj=m_t\mathbb{E}[e^{\mu_j}|j\in L_t]\end{gathered}\]

Now we set $\mu_b{\sim}\mathcal{N}(\bar{\mu},\sigma_\mu^2),\bar{\mu}=0,m_t^k=1-m_t$, then:
\[\begin{gathered}m_{t}^{k}=\int_{K(k)}^{\infty}\phi\left(\mu_{b};\overline{\mu},\sigma_{\mu}^{2}\right)d\mu_{b}=1-\Phi\left(K(k);\overline{\mu},\sigma_{\mu}^{2}\right)\\\mathbb{E}[e^{\mu_{j}}|j\in L_{t}]=\frac{1}{m_{t}^{k}}\int_{K(k)}^{\infty}e^{\mu_{j}}\phi\left(\mu_{j};\overline{\mu},\sigma_{\mu}^{2}\right)d\mu_{j}=\frac{e^{\overline{\mu}+\frac{1}{2}\sigma_{\mu}^{2}}\left(1-\Phi\left(K(k);\overline{\mu}+\sigma_{\mu}^{2},\sigma_{\mu}^{2}\right)\right)}{m_{t}^{k}}\end{gathered}\]

We get:
\[Y_{t+1}=\int_{j\in L_t}Y_{j,t+1}=G_te^{\varepsilon_{t+1}}m_t\mathbb{E}[e^{\mu_j}|j\in L_t]=G_te^{\varepsilon_{t+1}}m_te^{\overline{\mu}+\frac{1}{2}\sigma_\mu^2}\frac{\left(1-\Phi(K(k);\overline{\mu}+\sigma_\mu^2,\sigma_\mu^2)\right)}{1-\Phi(K(k);\overline{\mu},\sigma_\mu^2)}\]

Let:
\[\mu_k=K(k)-\overline{\mu}\]

Then:
\[Y_{t+1}(\mu_k)=G_te^{\varepsilon_{t+1}}m_te^{\overline{\mu}+\frac{1}{2}\sigma_\mu^2}\frac{\left(1-\Phi(\mu_k;\sigma_\mu^2,\sigma_\mu^2)\right)}{1-\Phi(\mu_k;0,\sigma_\mu^2)}\]

Denote $f(\mu_k)=\frac{\left(1-\Phi(\mu_k;\sigma_\mu^2,\sigma_\mu^2)\right)}{1-\Phi(\mu_k;0,\sigma_\mu^2)}$, we get:
\[\frac{\partial f(\mu_k)}{\partial\mu_k}=\frac{-\phi(\mu_k;\sigma_\mu^2,\sigma_\mu^2)[1-\Phi(\mu_k;0,\sigma_\mu^2)]+[1-\Phi(\mu_k;\sigma_\mu^2,\sigma_\mu^2)]\phi(\mu_k;0,\sigma_\mu^2)}{\left[1-\Phi(\mu_k;0,\sigma_\mu^2)\right]^2}\]

And:
\[\frac{\phi(\mu_{k};0,\sigma_{\mu}^{2})}{1-\Phi(\mu_{k};0,\sigma_{\mu}^{2})}{>}\frac{\phi(\mu_{k}-\sigma_{\mu}^{2};0,\sigma_{\mu}^{2})}{1-\Phi(\mu_{k}-\sigma_{\mu}^{2};0,\sigma_{\mu}^{2})}=\frac{\phi(\mu_{k};\sigma_{\mu}^{2},\sigma_{\mu}^{2})}{1-\Phi(\mu_{k};\sigma_{\mu}^{2},\sigma_{\mu}^{2})}\]

Since the inequality above always exists, for $\mu_H>\mu_k>\mu_L$, the total output $Y^H_{t+1}(\mu_H)$ of entrepreneur with high investment capacity $\mu_H$ will be greater than the total output $Y^L_{t+1}(\mu_L)$ of entrepreneur with low investment capacity $\mu_L$. Hence, investors with high cognitive abilities have higher economic output than investors with low cognitive abilities, $Y^H_{t+1}(\mu_H)>Y^L_{t+1}(\mu_L)$.

Then, we analyze the Tax Revenue:
\[\text{Tax Revenue}=\tau\int_{j\in L_t}Y_{j,t+1}dj=\tau\left(\int_{j\in L_t}e^{\mu_j+\varepsilon_{t+1}+\varepsilon_{j,t+1}}dj\right)G_t=\tau G_te^{\varepsilon_{t+1}}m_t\mathbb{E}[e^{\mu_j}|j\in L_t]\]

Based on balanced budget constraints, total tax revenue will be distributed equally among government workers with a size of $1-m_t$. The consumption of per government worker is:
\[C_{Gov,t}=\frac{\tau G_te^{\varepsilon_{t+1}}m_t\mathbb{E}[e^{\mu_j}|j\in L_t]}{1-m_t}\]

Then, the utility of government workers is:
\[\begin{gathered}\mathbb{E}_t[U(C_{Gov,t+1},\gamma_b\neq1)|\tau]=\frac{\tau^{1-\gamma_b}}{1-\gamma_b}G_t^{1-\gamma_b}\mathbb{E}_t[e^{(1-\gamma_b)\varepsilon_{t+1}}]\mathbb{E}_t[e^{\mu_j}|j\in L_t]^{1-\gamma_b}\left(\frac{m_t}{1-m_t}\right)^{1-\gamma_b}\\\mathbb{E}_t[U(C_{Gov,t+1},\gamma_b=1)|\tau]=log(\tau)+\mathbb{E}_t\left[log[G_t e^{\varepsilon_{t+1}}m_t\mathbb{E}[e^{\mu_j}|j\in L_t]]\right]-log(1-m_t)\end{gathered}\]

Now, assuming that the enterprise investor size is $m_t$, each agent b of investor sells $1-\theta_c$ shares and retains $\theta_c$ shares. Then, the net income is:
\[M_{b,t}=\mathbb{E}_t[\pi_{t,t+1}Y_{b,t+1}(1-\tau_t)]\]

Where $\pi_{t,t+1}$ is the capital depreciation index and $\tau_t$ is the tax rate in stage $t$. Each investor purchases capital shares in enterprises through investment. Let $N_t^{bj}$ represent the percentage of shares in company $j$ purchased by investor $b$ at time $t$, and let $N_{bt}^0$ be the entrepreneur’s (long or short) position in the bond. Then the investment budget constraint is:
\[(1-\theta_c)M_{bt}=\int_{b\neq j}N_t^{bj}M_{jt}dj+N_{bt}^0\]

Standardize the stock price and fix it at 1. Then, for a tax rate $\tau_t=\tau$, the consumption of the investment agent in stage $t$ is:
\[C_{b,t+1}=\theta_cY_{b,t+1}(1-\tau_t)+\int_{j\in L_t}N_t^{bj}Y_{j,t+1}(1-\tau_t)dj+N_{bt}^0\]

Where $b\in\{Bayes,non-Bayes\}$. Under market equilibrium, the premiums on all firms' risk assets are the same, so the optimal investment strategy for entrepreneurs is to allocate amounts according to the market value weights of risk assets. Let the proportion of firms' investment in risk assets $N_t^{bj}$ be $\delta(\gamma)$ and the proportion of investment in risk-free assets $N_{bt}^0$ be $1-\delta(\gamma)$. Then, under equilibrium conditions:
\[\begin{gathered}N_{t}^{bj}M_{jt}=[\delta(\gamma)(1-\theta_{c})M_{bt}]\times\frac{M_{jt}}{M_{P}}=[\delta(\gamma)(1-\theta_{c})M_{bt}]\times\frac{M_{jt}}{\int_{p\in L_{t}}M_{pt}dp}\\N_{t}^{bj}=\frac{\delta(\gamma)(1-\theta_{c})M_{bt}}{\int_{p\in L_{t}}M_{pt}dp}\\N_{bt}^{0}=(1-\theta_{c})M_{bt}-\int_{b\neq j}N_{t}^{bj}M_{jt}dj=(1-\theta_{c})M_{bt}-\frac{[\delta(\gamma)(1-\theta_{c})M_{bt}]}{\int_{p\in L_{t}}M_{pt}dp}\int_{b\neq j}M_{jt}dj\end{gathered}\]

According to the \textit{Continuum Hypothesis}:
\[\int_{b\neq j}M_{jt}dj=M_P-M_{jt}\approx M_P\]

Hence:
\[N_{bt}^0=(1-\theta_c)M_{bt}-\delta(\gamma)(1-\theta_c)M_{bt}=[1-\delta(\gamma)](1-\theta_c)M_{bt}\]

Based on market clearing conditions (demand from b equals supply from j) and the \textit{Continuum Hypothesis}:
\[\begin{gathered}(1-\theta_c)=\int_{b\neq j}N_t^{bj}db=\delta(\gamma)(1-\theta_c)\int_{b\neq j}\frac{M_{bt}}{\int_{p\in L_t}M_{pt}dp}db=\delta(\gamma)(1-\theta_c)\frac{\int_{b\neq j}M_{bt}db}{\int_{p\in L_t}M_{pt}dp}\\\delta(\gamma)=1\end{gathered}\]

Then, we can obtain the equilibrium conditions:
\[\begin{gathered}N_t^{bj}=(1-\theta_c)\frac{e^{\mu_b}}{\int_{p\in L_t}e^{\mu_p}dp}\\N_{bt}^0=0\end{gathered}\]

The consumption of investment agents is:
\[C_{b,t+1}=(1-\tau)G_te^{\mu_b}e^{\varepsilon_{t+1}}[\theta_ce^{\varepsilon_{b,t+1}}+(1-\theta_c)]\]

Therefore, we can obtain the analytical expression for utility:
\[\begin{gathered}\mathbb{E}_t[U(C_{b,t+1},\gamma_b\neq1)|\tau]=\frac{(1-\tau)^{1-\gamma_b}G_t^{1-\gamma_b}e^{(1-\gamma_b)\mu_b}}{1-\gamma_b}\mathbb{E}_t[e^{(1-\gamma_b)(\varepsilon_{t+1})}]\mathbb{E}[[\theta_ce^{\varepsilon_{i,t+1}}+(1-\theta_c)]^{1-\gamma_b}]\\\mathbb{E}_t[U(C_{b,t+1},\gamma_b=1)|\tau]=log(1-\tau)+log[G_te^{\mu_b}]+\mathbb{E}_t\left[log[e^{\varepsilon_{t+1}}(\theta_c e^{\varepsilon_{i,t+1}}+(1-\theta_c))]\right]\end{gathered}\]

Therefore, for investment agents:
\[\mathbb{E}_t[U(C_{b,t+1},\gamma_b\neq1\&\gamma_b=1)|\tau^L]>\mathbb{E}_t[U(C_{b,t+1},\gamma_b\neq1\&\gamma_b=1)|\tau^H]\]

If and only if:
\[\tau^H>\tau^L\]

The next key question to explore is: What economic conditions will cause the tax rates to increase? This is because it involves the direction of consumption adjustments by investment agents in the corresponding economic environment, i.e., whether to increase or decrease consumption levels.

\noindent\textbf{Empirical Evidence and Analysis}: Our theory has shown that an increase in tax rates will increase the taxes levied on entrepreneurs, which will reduce the utility of enterprise investors and thus reduce consumption. Next, we will design a simple econometric experiment to prove that increased uncertainty will increase the taxes paid by enterprises.

Using data from Chinese A-share listed companies from 2000 to 2023 as the analysis sample: We selected the annual taxes and fees paid by listed companies as the dependent variable (\textbf{Taxation/100 million RMB yuan}). The independent variable needs to measure uncertainty factors. Considering that the uncertainty factors affecting the economic conditions of listed companies mainly stem from information asymmetry (information friction) and economic policy uncertainty, we will comprehensively consider these two indicators to construct the independent variable measuring uncertainty.

(\textbf{A}) Regarding information asymmetry, refer to the studies of \citet{bharath2009does}, \citet{pastor2003liquidity} and \citet{amihud2002illiquidity}: First, calculate the first-order indicators of three variables considered as proxies for information asymmetry: Step one, calculate the Illiquidity Ratio (ILL) based on the relationship between trading volume and price changes. Step two, calculate its inverse indicator, the Liquidity Ratio (LR) (\citealp{amihud2002illiquidity}). Step three, estimate the extent of the impact of order flow on yield reversals using a regression model to obtain the Liquidity Indicator (GAM) (\citealp{pastor2003liquidity}). Second, perform principal component analysis (PCA) on the above three primary liquidity indicators and extract the first principal component as the final proxy variable (\textbf{ASY}) for measuring the degree of information asymmetry between capital providers and firms. The higher the value of this indicator, the greater the degree of information asymmetry.

(\textbf{B}) For economic policy uncertainty, the economic policy uncertainty index constructed by \citet{baker2016measuring} is used. This index is based on news reports and is jointly published by Stanford University and the University of Chicago, covering major economies around the world. \citet{baker2016measuring} selected the South China Morning Post as the news reporting retrieval platform and constructed the China Economic Policy Uncertainty Index using text retrieval and filtering methods. This paper calculates the annual economic policy uncertainty index using eight methods and takes the average of the eight results as the economic policy uncertainty variable (\textbf{EUP}) in the econometric analysis\footnote{The eight methods are as follows: (\textbf{A}) The arithmetic average of the monthly economic policy uncertainty index within the year divided by 100. (\textbf{B}) The natural logarithm of the arithmetic average of the monthly economic policy uncertainty index within the year. (\textbf{C}) The economic policy uncertainty index of the last month of each year divided by 100. (\textbf{D}) The natural logarithm of the economic policy uncertainty index for the last month of each year. (\textbf{E}) The weighted average of economic policy uncertainty at the annual level divided by 100. (\textbf{F}) The natural logarithm of the weighted average of economic policy uncertainty at the annual level. (\textbf{G}) The geometric average of economic policy uncertainty at the annual level divided by 100. (\textbf{H}) The natural logarithm of the geometric average of economic policy uncertainty at the annual level.}.

When constructing the independent variable, it is necessary to comprehensively consider information asymmetry ($ASY\in(-\infty,+\infty)$) and economic policy uncertainty($EUP\in(1,+\infty)$). Therefore, we set variable \textbf{Uncertainty}: $Uncertainty=ASY\div EUP$ if $ASY<0$, and $Uncertainty=ASY\times EUP$ If $ASY>0$. This allows us to thoroughly consider the magnitude of overall uncertainty.

We selected nine commonly used economic indicators of listed companies as control variables for empirical analysis: Debt-to-equity ratio (\textbf{DER}). Return on assets (\textbf{ROA})\footnote{It is necessary to use ROA as one of the control variables because we want to avoid the possibility that the increase in taxes is due to factors related to corporate profit growth.}. Accounts receivable ratio (\textbf{REC}). Inventory ratio (\textbf{INV}). Capital intensity (\textbf{CAP}). Book-to-market ratio (\textbf{BM}). Tobin's Q ratio (\textbf{TobinQ}). Major shareholder fund occupation (\textbf{Occupy}). Gross profit margin (\textbf{GrossProfit})\footnote{Since the control variables at the corporate level are relatively fixed, and to focus on presenting the core content and conclusions of the article, detailed explanations and calculation formulas for the control variables will not be included in the main text. If needed, please contact us for further information.}. The results of the econometric analysis of this part are shown in Table 1\footnote{Standard errors in parentheses: $^{*} p<0.05$, $^{**} p<0.01$, $^{***} p<0.001$. All data is sourced from the official websites of listed companies, stock exchange websites, and the CSMAR database. The same below.}:

\begin{table}[ht]
\centering
\caption{Regression Results A}
\label{tab:results}
\small
\setlength{\tabcolsep}{4pt}
\begin{tabular}{lccccccc}
\toprule
 &  & \multicolumn{3}{c}{Taxation} & \multicolumn{3}{c}{Friction} \\
\cmidrule(lr){3-5} \cmidrule(lr){6-8}
 &  & (1) & (2) & (3) & (4) & (5) & (6) \\
\midrule
Uncertainty &  & $0.101^{***}$ & $0.150^{***}$ & $0.112^{***}$ & $-0.006^{***}$ & $-0.004^{***}$ & $-0.003^{**}$ \\
 &  & (0.022) & (0.023) & (0.024) & (0.001) & (0.001) & (0.001) \\
DER &  &  &  & $-0.081^{***}$ &  &  & $0.028^{***}$ \\
 &  &  &  & (0.021) &  &  & (0.001) \\
ROA &  &  &  & $1.810^{***}$ &  &  & $0.291^{***}$ \\
 &  &  &  & (0.359) &  &  & (0.013) \\
REC &  &  &  & $-0.358$ &  &  & $0.058^{***}$ \\
 &  &  &  & (0.294) &  &  & (0.012) \\
INV &  &  &  & $-0.045$ &  &  & $0.093^{***}$ \\
 &  &  &  & (0.217) &  &  & (0.009) \\
CAP &  &  &  & $-0.052^{***}$ &  &  & $0.010^{***}$ \\
 &  &  &  & (0.012) &  &  & (0.000) \\
BM &  &  &  & $0.502^{***}$ &  &  & $-0.007^{***}$ \\
 &  &  &  & (0.027) &  &  & (0.001) \\
TobinQ &  &  &  & $0.058^{**}$ &  &  & $-0.007^{***}$ \\
 &  &  &  & (0.019) &  &  & (0.001) \\
Occupy &  &  &  & $3.343^{***}$ &  &  & $-0.336^{***}$ \\
 &  &  &  & (0.598) &  &  & (0.026) \\
GrossProfit &  &  &  & $0.308$ &  &  & $0.013$ \\
 &  &  &  & (0.201) &  &  & (0.008) \\
Constant &  & $-0.446^{**}$ & $-0.146$ & $-0.378^{*}$ & $0.050^{***}$ & $0.088^{***}$ & $0.026^{***}$ \\
 &  & (0.144) & (0.118) & (0.161) & (0.001) & (0.003) & (0.005) \\
Year &  & $\surd$ & $\surd$ & $\surd$ & $\times$ & $\surd$ & $\surd$ \\
ID &  & $\times$ & $\surd$ & $\surd$ & $\times$ & $\surd$ & $\surd$ \\
\midrule
$R^{2}$ &  & 0.012 & 0.012 & 0.021 & 0.002 & 0.031 & 0.089 \\
Number &  & 54861 & 54861 & 53750 & 44604 & 44604 & 43583 \\
\bottomrule
\end{tabular}

\vspace{5pt}
\raggedright
\end{table}

From columns (1) to (3) of Table 1, we can see that there is a significant positive correlation between taxation and uncertainty. An increase in overall uncertainty can increase the taxes that enterprises need to pay at a $1\%$ significance level\footnote{The dependent variable \textbf{Friction} in columns (4) to (6) of Table 1 is used to measure firms' financial friction as the “Credit Availability" variable. The experimental results in these three columns will be explained and used in the analysis and applications of section 4.}.

Therefore, we have reason to believe that in a highly uncertain economic environment, the government will increase taxes on enterprises. Theoretical analysis shows that when taxes increase, the utility of enterprise investment agents decreases and consumption levels decline and vice versa.
\end{proof}

\subsection{Utility Acquisition According to CAWF}
\textbf{Consumption Adjustment Weight Function}: Again, the core assumption of our article is that: Cognitive resources are one of the agent's resource endowments, which are similar to wealth or capital. Cognitive resources determine the agent's rationality, the more cognitive resources an agent possesses, the more rational their decisions will be. In this context, \textbf{Theorem 1} and \textbf{Theorem 2} reveal that an agent's cognitive resources will be continuously diluted and reduced as the time and scale of big data interaction increasing. This means that the agent's rationality will gradually shift to lower levels as the interaction with big data increases (i.e., transforming from a rational Bayesian agent to non-Bayesian agents with varying degrees of rationality). \textbf{Theorem 3} measures the value of big data, where the data value $D_t\in(0,1)$ linearly represents the uncertainty of macro information. Now, considering the time attribute of data generation (i.e., data is randomly generated in a continuous-time spacetime), we define the generation of data $D_t$ to follow a mean-reverting process with the mean value $\bar{D}=0.5$:
\[dD(t)=\varphi_D(\bar{D}-D(t))dt+\phi_DdZ_t^D\]

Among them, $\varphi_D>0$ is the regression rate, $\bar{D}=0.5$ is the average data value index, $\phi_D>0$ is the volatility, and $Z_t^D$ is the standard Brownian motion.

Based on \textbf{Theorem 4} and \textbf{Proposition 1}, we know the amount and direction of consumption adjustments made by agents in environments with varying degrees of uncertainty. Therefore, we further summarize: Agents will continue to engage in interactive behavior with big data and undergo rational state transitions. When uncertainty factors in the macro environment are high, the data value $D_t$ is low, and agents tend to reduce consumption. Furthermore, as rationality shifts to lower states, non-Bayesian agents often overestimate the risks of uncertain information, leading to greater reductions in consumption compared to Bayesian agents. Conversely, when uncertainty in the macro environment is low, the level of data value $D_t$ is high, and agents tend to increase consumption. However, as rationality shifts to lower states, non-Bayesian agents underestimate the welfare effects of information, resulting in a smaller increase in consumption compared to Bayesian agents.

More specifically, the Consumption Adjustment Weighting Function (CAWF) is as follows:
\[\mathrm{CAWF}\equiv C_\Delta(t,n)=\left(s_\Delta\times e^{(D(t)-\bar{D})}-1\right)\left(\frac{1}{1+\frac{n}{\omega}}\right)+\left(1-s_\Delta\times e^{(\bar{D}-D(t))}\right)\left(1-\frac{1}{1+\frac{n}{\omega}}\right)\]

In CAWF, $s_{\Delta}\geq1$ indicates the agent's sensitivity to data. $n>0$ indicates the scale of data. $\omega>0$ indicates the dilution weight of big data on individual cognitive resources. And $D(t)\in(0,1)$ indicates data value. Setting $s_{\Delta}=1.15$ and $\omega=100$, the agent's consumption adjustment $C_{\Delta}(t,n)$ is shown in Figure 5. Furthermore, substituting $dD(t)$ and using \textit{Monte Carlo} simulation data to form $D(t){\sim}\mathcal{N}(\bar{D},\frac{\phi_D^2}{2\varphi_D})$ with a sample size of 1000, where $\varphi_D=0.1,\phi_D = 0.8$, and $D(t)\in[D^-_{min},D^+_{max}]$, reflected and boundaries if it ever reaches them. Figure 6 presents the dynamic of the agent's consumption adjustment $C_{\Delta}(t,n)$ as the data size $n$ increases.

\begin{figure}[htbp]
\centering
\includegraphics[width=12cm]{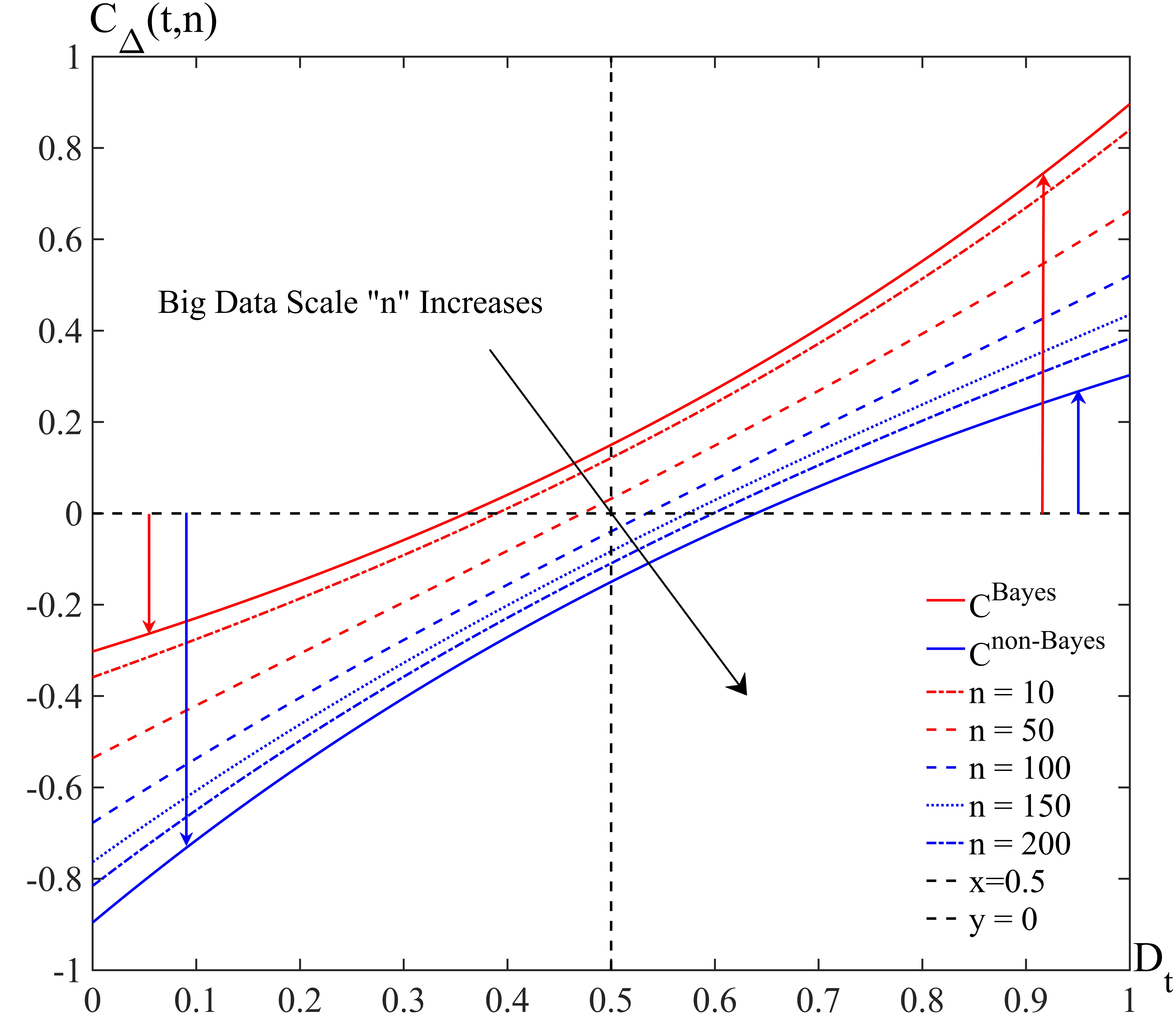}
\caption{\label{fig:F5}The CAWF with Variable $D_t$}
\end{figure}
\begin{figure}[htbp]
\centering
\includegraphics[width=12cm]{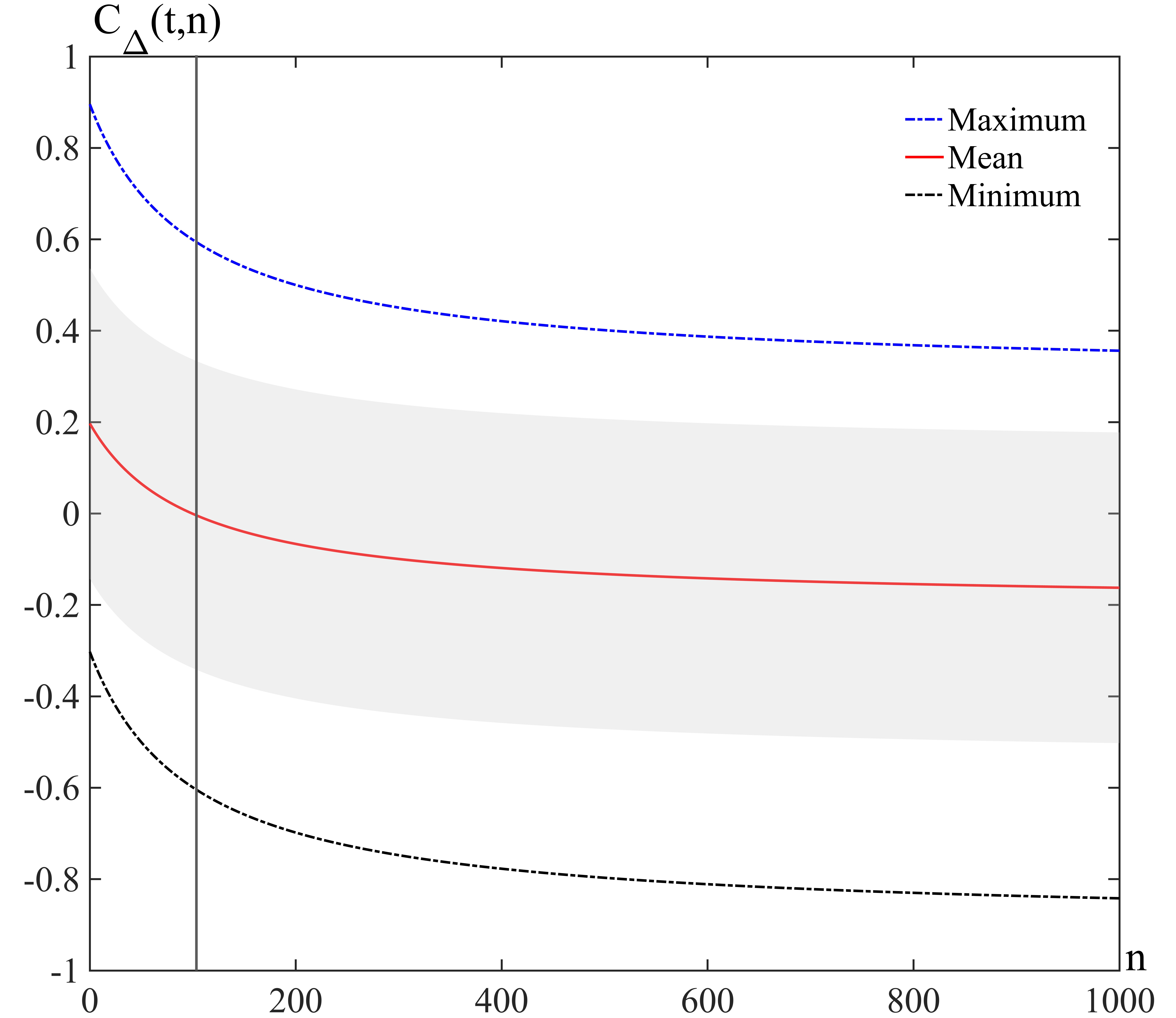}
\caption{\label{fig:F6}The CAWF with Monte Carlo Simulation}
\end{figure}

From Figure 5, we can see that:
\[\begin{gathered}\mathrm{Bayesian}\equiv\lim_{t\to0,n\to0}C_\Delta(t,n)=\left(s_\Delta\times e^{(D(t)-\bar{D})}-1\right)\\\mathrm{non-Bayesian}\equiv\lim_{t\to\infty,n\to\infty}C_\Delta(t,n)=\left(1-s_\Delta\times e^{(\bar{D}-D(t))}\right)\end{gathered}\]

Therefore, according to Figure 5, as the value of data $D_t\in(0,1)$ increases, agents' consumption adjustment strategies gradually shift from reducing consumption to increasing consumption\footnote{Based on \textbf{Proposition 1}, according to \citet{ding2024consumer}, consumer welfare derived from data elements, such as the issuance of digital consumption vouchers or shopping subsidies, has a stimulating effect on consumption growth. However, the substitution effect of data welfare across different consumption categories is limited. Consumers demand consumption welfare, leading to increased consumption in their target consumption categories, rather than displacing consumption expenditures in other categories. Consequently, consumers have no incentive to reduce expenditures in other consumption categories to the same extent. Therefore, when applying \textbf{Proposition 1} of this article, we believe that the impact of data value determined by information entropy on consumer behavior is straightforward: When data value is high, information uncertainty is low, and consumers are clearer about which areas should be further strengthened in consumption and which should be avoided. In such cases, there will always be areas where consumers increase consumption, and total consumption should increase. Conversely, when information uncertainty is high, consumers have a blurred perception of the boundaries between areas where they should increase consumption and those where they should avoid it. Additionally, due to limited funds, they struggle to determine in which areas increasing consumption would effectively enhance utility. As a result, the overall consumption level decreases.}. The red solid line represents the adjustment decisions of Bayes agents in the absence of big data interaction, while the blue solid line represents the adjustment decisions of non-Bayes agents under deep big data interaction. When the data value is low, non-Bayes agents will reduce consumption to a greater extent than Bayes agents. When the value of data is high, the extent to which non-Bayes agents increase consumption will be lower than that of Bayes agents. The red solid line, red dashed line, blue dashed line, blue dotted line, and blue solid line represent the transition of agents' consumption adjustment decisions with the scale of big data interaction $n$ increasing.

We restricted the threshold for data value generation in the \textit{Monte Carlo} simulation to $D(t)\in(0,1)$, getting the results shown in Figure 6, where the blue dotted line shows changes in consumption adjustment decisions in a high-value data ($D(t)\rightarrow1$) environment, and the black dotted line shows changes in consumption adjustment decisions in a low-value data ($D(t)\rightarrow0$) environment. The red solid line represents the average of the previous two situations, which better reflects the randomness of data value generation in reality\footnote{In the absence of any exogenous shocks, such as policy-making which will affect the generation of data value, high-value and low-value data are randomly generated.}. Therefore, we focus on the results presented by the average. As shown in Figure 6, an increase in the scale of big data interaction $n$ causes consumption adjustment decisions to shift toward lower states and eventually converge. At approximately $\frac{1}{10}n$, consumption adjustment decisions are reduced to the initial level of $\frac{1}{2}C_{\Delta}(t_0,n_0)$, indicating that the impact of big data interaction on consumption adjustment decisions is highly evident in the early stages but gradually converges toward an equilibrium value later. We treat the transfer curve of consumption adjustment decisions as the transfer function of the utility that consumption can obtain, representing the proportion of consumption that is effective and can generate utility for agents in big data interactions. As shown by the red curve, during the initial stage of big data interactions $n_0<\frac{1}{10}n$, agents' effective consumption will exceed actual consumption, i.e., $C_{\Delta}(t,n)>0$, indicating that the welfare effect of big data is significant during this period. When $n_0>\frac{1}{10}n$, the effective consumption adjustment function gradually converges to an equilibrium value, i.e., $C_{\Delta}(t^*,n^*)<0$, indicating that during this process, agents' effective consumption will be less than actual consumption, resulting in agents' actual utility being less than the utility corresponding to actual consumption. This is due to the irrational consumption caused by the dilution effect of big data on the agent's cognitive resources, causing the effective consumption $C^{utility}_t$ that can obtain utility $U^{net}_t$, gradually become a weight of actual total consumption $C^{total}_t$, as shown below:
\[\begin{gathered}C^{utility}_t=C^{total}_t\times\left(1+C_{\Delta}(t,n)\right)\\U^{net}_t=\frac{\left(C^{utility}_t\right)^{1-\gamma_t}}{1-\gamma_t}=\frac{\left(C^{total}_t\times(1+C_{\Delta}(t,n))\right)^{1-\gamma_t}}{1-\gamma_t}\end{gathered}\]

In summary, we create a static framework of consumption but with a dynamic transition of cognitive resources, the rationality of consumption decisions is commensurate with the cognitive resources that the agent possesses. Therefore, the agent only consumes once after a period of interacting with big data, and the overarching purpose of our analysis is to find the weight of total consumption which is effective and forms utility after the cognitive resources are diluted by big data at that time. Then, the CAWF measures the weight of effective consumption that can provide utility to total consumption.

\section{Application: Wealth Distribution with Financial Friction}
\textbf{Heterogeneity}: For the fourth part, we apply the CAWF model to the issue of firm wealth distribution with financial frictions. Based on the analysis results in Figure 6, we can identify two key foundational points: (\textbf{A}) When the value of data decreases from high to low, the uncertainty of the economic environment (information entropy) increases, causing the CAWF curve of the agent to shift downward overall. This indicates that the agent's consumption adjustments will decrease, leading to a decline in effective consumption. (\textbf{B}) For each category of economic uncertainty (high, moderate, or low uncertainty), the CAWF curves of agents decrease and eventually converge as the degree of interaction with the scale of big data $n$ increases. This indicates that interaction with big data dilutes agents' cognitive resources, reduces their rationality, and leads to a decline in effective consumption. 

Based on these above, we will define two types of agents in the firm wealth distribution problem with financial frictions according to the CAWF model, whose heterogeneity is from whether they interact with big data and perceive economic uncertainty: (\textbf{A}) The first type of agent does not engage in big data interaction or perceive economic uncertainty, and their total consumption can be fully converted into utility. (\textbf{B}) The second type of agent engages in big data interaction and perceives economic uncertainty, and the portion of their consumption that can be converted into utility is a weight of their total consumption.

A closely related paper to the discussion of this section is given by \citet{achdou2022income}, which is the first paper in the macroeconomic literature that studies how to better build income and wealth distribution models both theoretically and numerically. For more related literature on the topic of wealth distribution, we can see: \citet{fernandez2023financial}, \citet{bilal2023solving}, \citet{ahn2018inequality}, \citet{gabaix2016dynamics}, \citet{brunnermeier2014macroeconomic}, \citet{adrian2010liquidity}, \citet{krusell1998income}, \citet{aiyagari1994uninsured}, \citet{huggett1993risk}.

\subsection{Wealth Distribution with the First Type of Agent}
\textbf{The Prerequisite Model}: We first set up the basic framework of the wealth distribution in an entrepreneur-worker model where financial frictions are to be introduced. In this model, we assume that productivity $z$ is constant. Each entrepreneur owns a private firm which uses $k$ units of capital and $l$ units of labor to produce $y=(zk)^{\alpha}l^{1-\alpha}$ units of output, where $\alpha\in(0,1)$. Capital depreciates at the rate $\delta$. Define the entrepreneur’s profit function as:
\begin{align}
\pi(a)=\max_{k,l}(zk)^\alpha l^{1-\alpha}-wl-(r+\delta)k\label{14}
\end{align}

Function (14) is subject to collateral constraints: $k\leq\lambda a$, where $\lambda\geq1$. Here $\lambda$ reflects financial frictions. Note that as \citet{moll2014productivity} discusses in his paper that this formulation of capital market imperfections is analytically convenient: For $\lambda\geq1$, as $\lambda\rightarrow+\infty$, this indicates that the financial market is approaching a perfect state, and entrepreneurs will face minimal borrowing resistance. When $\lambda=1$, the financial market will be completely closed, and the funds required for enterprise production will be provided entirely by the entrepreneurs themselves. When $\lambda\in(0,+\infty)$, there will be an upper limit on entrepreneurs' borrowing capacity, namely $\lambda$ times their personal net assets. Therefore, it can be concluded that financial friction in imperfect markets decreases as $\lambda$ increases. Moreover, by placing a restriction on an entrepreneur’s leverage ratio $k/a$, it captures the common intuition that the amount of capital available to an entrepreneur is limited by his personal assets. Different underlying frictions can give rise to such collateral constraints.

Unlike the setup in \citet{moll2014productivity}, this section of our paper further assumes entrepreneurs now have access to a risky asset $\kappa_t$ in addition to the riskless bond denoted by $b_t$ (\citealp{achdou2022income}). Therefore, the entrepreneur’s budget constraint becomes:
\[da_t=d\kappa_t+db_t=(\pi(a_t)+R_t\kappa_t+rb_t-c_t)dt\]

Where $R_t$ is the return on the risky asset, $c_t$ is the consumption of the entrepreneur, and $a_t$ denotes the total wealth of the entrepreneur. The return of the risky asset is stochastic and given by:
\[R_tdt=\theta dt+\sigma dW_t\]

Where $\theta$ denotes the average return of the risky asset, $\sigma$ is the diffusion (volatility) coefficient of the risky asset return process, and $W_t$ are the standard Brownian motions.

By plugging in the stochastic process of the risky asset return to the budget constraint, the entrepreneurs’ budget constraint can be rewritten as:
\begin{align}
da_t=(\pi(a_t)+ra_t+(\theta-r)\kappa_t-c_t)dt+\sigma\kappa_tdW_t\label{15}
\end{align}

Assume productivity $z$ is greater than or equal to the productivity cutoff so that all the entrepreneurs are active, then:
\begin{align}
z\geq\frac{r+\delta}{\alpha\left(\frac{1-\alpha}{w}\right)^{\frac{1-\alpha}{\alpha}}}\label{16}
\end{align}

Since this problem is linear, it follows immediately that $k$ is either zero or $\lambda a$, and therefore, at optimum the collateral constraint will be binding, i.e., $k=\lambda a$, then according to \citet{moll2014productivity}, the entrepreneur’s profit maximization problem above can be solved as:
\[\begin{gathered}k(a_t)=\lambda a_t\\l(a_t)=\left(\frac{1-\alpha}{w}\right)^{\frac{1}{\alpha}}z\lambda a_t\\\pi(a_t)=\left(\alpha z{\left(\frac{1-\alpha}{w}\right)}^{\frac{1-\alpha}{\alpha}}-r-\delta\right)\lambda a_t\\y(a_t)=(z\lambda a_t)^\alpha{\left(\left(\frac{1-\alpha}{w}\right)^{\frac{1}{\alpha}}z\lambda a_t\right)^{1-\alpha}}\end{gathered}\]
\textbf{Entrepreneur’s Problem with HJB Equation}: Next, we are ready to solve for this model with two assets. Given the utility function for the entrepreneur takes the form of CRRA: $u(c)=\frac{c^{1-\gamma}}{1-\gamma}$, the discount factor is denoted by $\rho=\rho_0+\beta$, and $r=r_0+\beta$, where $\beta$ denotes rate of wealth dissipation shock (\citealp{moll2022uneven}), $\rho_0$ is the pure time-preference discount factor and $r_0$ is the interest rate, the HJB equation for the entrepreneur’s problem can be written as:
\begin{align}
\rho v(a)=\max_{c,\kappa}\frac{c^{1-\gamma}}{1-\gamma}+(\pi(a)+ra+(\theta-r)\kappa-c)v^{\prime}(a)+\frac{1}{2}\sigma^2\kappa^2v^{\prime\prime}(a)\label{17}
\end{align}

Plugging in the expression for $\pi(a)$ from the solution of the entrepreneur’s profit maximization problem, then the HJB equation can be rewritten as:
\[\rho v(a)=\max_{c,\kappa}u(c)+\left(\left(\alpha z\left(\frac{1-\alpha}{w}\right)^{\frac{1-\alpha}{\alpha}}-r-\delta\right)\lambda a+ra+(\theta-r)\kappa-c\right)v^{\prime}(a)+\frac{1}{2}\sigma^2\kappa^2v^{\prime\prime}(a)\]

Conjecture value function takes the form of $v(a)=Ba^{1-\gamma}$, then the policy functions can be solved as:
\[\begin{gathered}\kappa(a)=\frac{\theta-r}{\gamma\sigma^2}a\\c(a)=\frac{1}{\gamma}\left(\rho-(1-\gamma){\left({\left(\alpha z{\left(\frac{1-\alpha}{w}\right)^{\frac{1-\alpha}{\alpha}}-r-\delta}\right)\lambda+r}\right)-\frac{1}{2}(1-\gamma)\frac{(\theta-r)^2}{\gamma\sigma^2}}\right)a\end{gathered}\]

We now can find the stochastic process of the entrepreneur’s wealth $a$ by plugging the solutions of the policy functions $\pi(a),\kappa(a),c(a)$ into the entrepreneur’s budget constraint which is given by:
\begin{align}
da=(\pi(a)+ra+(\theta-r)\kappa(a)-c(a))dt+\sigma\kappa(a)dW_t\label{18}
\end{align}

Let $\Pi=\left(\alpha z{\left(\frac{1-\alpha}{w}\right)}^{\frac{1-\alpha}{\alpha}}-r-\delta\right)\lambda$, i.e., $\pi(a)=\Pi a$, then the entrepreneur’s budget constraint can be rewritten as: 
\[da=\left(\Pi+r+\frac{\left(\theta-r\right)^{2}}{\gamma\sigma^{2}}-\frac{1}{\gamma}{\left(\rho-(1-\gamma)(\Pi+r)-\frac{1}{2}(1-\gamma)\frac{\left(\theta-r\right)^{2}}{\gamma\sigma^{2}}\right)}\right)adt+\frac{\theta-r}{\gamma\sigma}adW_{t}\]

Note that above stochastic process suggests that the entrepreneur’s wealth a follows \textit{Geometric Brownian Motion} (GBM), and therefore, if we let $x=log(a)$, i.e., $x$ denotes logarithmic wealth, and $\Sigma=\frac{\theta-r}{\gamma\sigma}$, then the corresponding stochastic process of $x$ can be written as:
\[dx=\left(\Pi+r+\frac{(\theta-r)^2}{\gamma\sigma^2}-\frac{1}{\gamma}{\left(\rho-(1-\gamma)(\Pi+r)-\frac{1}{2}(1-\gamma)\frac{(\theta-r)^2}{\gamma\sigma^2}\right)-\frac{1}{2}\frac{(\theta-r)^2}{\gamma^2\sigma^2}}\right)dt+\Sigma dW_t\]

Let:
\[\mu=\left(\Pi+r+\frac{(\theta-r)^2}{\gamma\sigma^2}-\frac{1}{\gamma}{\left(\rho-(1-\gamma)(\Pi+r)-\frac{1}{2}(1-\gamma)\frac{(\theta-r)^2}{\gamma\sigma^2}\right)-\frac{1}{2}\frac{(\theta-r)^2}{\gamma^2\sigma^2}}\right)\]

Hence, we get the stochastic process of logarithmic wealth in real time:
\begin{align}
dx=\mu dt+\Sigma dW_t\label{19}
\end{align}
\textbf{Wealth Distribution}: The steady-state system of equations according to \textit{Mean Field Game} (MFG) in the entrepreneur-worker model is given by:
\[\begin{gathered}\rho v(a)=\max_{c,\kappa}\frac{c^{1-\gamma}}{1-\gamma}+(\pi(a)+ra+(\theta-r)\kappa-c)v^{\prime}(a)+\frac{1}{2}\sigma^2\kappa^2v^{\prime\prime}(a)\\0=-\frac{d}{da}[(\pi(a)+ra+(\theta-r)\kappa-c)\bar{p}(a)]+\frac{1}{2}\frac{d}{da^2}\left[\sigma^2\kappa^2\bar{p}(a)\right]-\beta\bar{p}(a)+\beta\delta(a-1)\\\int_0^\infty\lambda a\bar{p}(a)da=\int_0^\infty{\left(a-\left(\frac{\theta-r}{\gamma\sigma^2}\right)a\right)}\bar{p}(a)da\\\int_0^\infty\left(\frac{1-\alpha}{w}\right)^{\frac{1}{\alpha}}z\lambda a\bar{p}(a)da=1\end{gathered}\]

Where the \textbf{first} equation is HJB equation. The \textbf{second} equation is \textit{Kolmogorov Forward Equation} (KFE) in steady state. The \textbf{third} equation is capital market clearing condition. The \textbf{fourth} equation is the labor market clearing condition given total number of labor supply is assumed to be 1. And $\bar{p}(a)$ is the invariant density function of wealth $a$. To present and simplify the core analysis of wealth distribution in this section, we use the \textbf{first} and the \textbf{second} equations to calculate the analytical solutions of the invariant density of wealth $\bar{p}(a)$\footnote{MFG in the steady state of the entrepreneur-worker model can calculate equilibrium prices of the model, i.e., equilibrium wage rate and equilibrium interest rate. Therefore, the MFG determines the equilibrium interest rate $r^*$ and the equilibrium wage rate $w^*$. We have obtained the analytical forms of $r^*$ and $w^*$, then, we attempted to substitute them into the final result of $\bar{p}(a)$ to obtain $\bar{p}^*(a)$, which is the invariant density function of wealth in equilibrium state. However, we found that the result was too complex and did not produce clear and accurate numerical simulation results. Furthermore, considering that $\bar{p}(a)$ in the equilibrium and non-equilibrium states does not have a fundamental influence on the topic we are discussing, this section only uses the first two equations of MFG to obtain $\bar{p}(a)$ in non-equilibrium situation.}. Hence, when $x=log(a)$, $\bar{p}(x)$ satisfies the following \textit{Kolmogorov Forward Equation} (KFE): 
\begin{align}
0=-\mu\bar{p}^{\prime}(x)+\frac{1}{2}\Sigma^2\bar{p}^{\prime\prime}(x)-\beta\bar{p}(x)+\beta\delta(x)\label{20}
\end{align}

Where the term $\delta\bar{p}(x)$ representing the proportion of the entrepreneurs that loses all of their wealth due to the wealth dissipation shock at the rate of $\beta$ is subtracted from the distribution and hence exit the distribution, and $\beta\delta(x)$, i.e., $\delta(x)$ is the \textit{Dirac Delta Function} centered at zero, represents those entrepreneurs who just suffered an exit now reenter the distribution with zero wealth at the rate of $\beta$. The KFE can be easily solved as follows:

\[\begin{gathered}\bar{p}(x)=K\exp\left(\frac{\mu+\sqrt{\mu^2+2\beta\Sigma^2}}{\Sigma^2}x\right),x<0\\\bar{p}(x)=K\exp\left(\frac{\mu-\sqrt{\mu^2+2\beta\Sigma^2}}{\Sigma^2}x\right),x>0\end{gathered}\]

By normalization of $\bar{p}(x)$, we get:
\[K{\left[\int_{-\infty}^0\exp{\left(\frac{\mu+\sqrt{\mu^2+2\beta\Sigma^2}}{\Sigma^2}x\right)}dx+\int_0^\infty\exp{\left(\frac{\mu-\sqrt{\mu^2+2\beta\Sigma^2}}{\Sigma^2}x\right)}dx\right]}=1\]

We can get $K$ as:
\[K=\frac{\beta}{\sqrt{\mu^2+2\beta\Sigma^2}}\]

Hence, the invariant distribution of the first type of agent is given by:
\begin{align}
\bar{p}(x)=\frac{\beta}{\sqrt{\mu^2+2\beta\Sigma^2}}\exp{\left(\frac{\mu+\sqrt{\mu^2+2\beta\Sigma^2}}{\Sigma^2}x\right)},x<0\label{21}\\\bar{p}(x)=\frac{\beta}{\sqrt{\mu^2+2\beta\Sigma^2}}\exp{\left(\frac{\mu-\sqrt{\mu^2+2\beta\Sigma^2}}{\Sigma^2}x\right)},x>0\label{22}
\end{align}

\subsection{Wealth Distribution with the Second Type of Agent}
For the second type of agent, the only difference is that in the HJB equation of function (17), the consumption variable $c^{utility}$ in the CRRA utility function of the second type of agent is a weight of total consumption $c$. Since we set that the second type of agent engages in big data interactions, cognitive resources are diluted, and the effective consumption which can be converted into utility is reduced are exogenous assumptions in this scenario. Furthermore, CAWF$=C_{\Delta}(t,n)$ is essentially a function of information entropy $\sigma_t$. Therefore, we define the consumption adjustment weight of the second type of agent as a decreasing function of information entropy $f(\sigma_t)$, i.e., the higher the information entropy $\sigma_t$, the higher the uncertainty, the lower the consumption adjustment weight $f(\sigma_t)$, and the less effective consumption could be converted into utility. We denote it as $f_{\sigma}\in(0,1)$:
\[c^{utility}=c\times(1+C_{\Delta}(t,n))=c\times f(\sigma_t)=cf_{\sigma}\]

Then, for the second type of agent, the HJB equation could be  written as follows:
\begin{align}
\rho v(a)=\max_{c,\kappa}\frac{(cf_{\sigma})^{1-\gamma}}{1-\gamma}+(\pi(a)+ra+(\theta-r)\kappa-cf_{\sigma})v^{\prime}(a)+\frac{1}{2}\sigma^2\kappa^2v^{\prime\prime}(a)\label{23}
\end{align}

Hence, according to the solution process in 4.1, we solve function (23):
\[\begin{gathered}c(a)=\frac{1}{\gamma f_{\sigma}}\left(\rho-(1-\gamma){\left({\left(\alpha z{\left(\frac{1-\alpha}{w}\right)^{\frac{1-\alpha}{\alpha}}-r-\delta}\right)\lambda+r}\right)-\frac{1}{2}(1-\gamma)\frac{(\theta-r)^2}{\gamma\sigma^2}}\right)a\\da=\left(\Pi+r+\frac{\left(\theta-r\right)^{2}}{\gamma\sigma^{2}}-\frac{1}{\gamma f_{\sigma}}{\left(\rho-(1-\gamma)(\Pi+r)-\frac{1}{2}(1-\gamma)\frac{\left(\theta-r\right)^{2}}{\gamma\sigma^{2}}\right)}\right)adt+\frac{\theta-r}{\gamma\sigma}adW_{t}\end{gathered}\]

Then $\Pi=\left(\alpha z{\left(\frac{1-\alpha}{w}\right)}^{\frac{1-\alpha}{\alpha}}-r-\delta\right)\lambda$ and $\Sigma=\frac{\theta-r}{\gamma\sigma}$, we get: 
\[dx=\left(\Pi+r+\frac{(\theta-r)^2}{\gamma\sigma^2}-\frac{1}{\gamma f_{\sigma}}{\left(\rho-(1-\gamma)(\Pi+r)-\frac{1}{2}(1-\gamma)\frac{(\theta-r)^2}{\gamma\sigma^2}\right)-\frac{1}{2}\Sigma^2}\right)dt+\Sigma dW_t\]

Therefore, we set:
\[\mu^{\dagger}=\left(\Pi+r+\frac{(\theta-r)^2}{\gamma\sigma^2}-\frac{1}{\gamma f_{\sigma}}{\left(\rho-(1-\gamma)(\Pi+r)-\frac{1}{2}(1-\gamma)\frac{(\theta-r)^2}{\gamma\sigma^2}\right)-\frac{1}{2}\frac{(\theta-r)^2}{\gamma^2\sigma^2}}\right)\]

Then, for the second type agent, we get:
\begin{align}
dx=\mu^{\dagger} dt+\Sigma dW_t\label{24}
\end{align}

As we can see from the equation (24), the drift term of $dx$ is changed from the first type's $\mu$ to the second type's $\mu^{\dagger}$ after introducing the CAWF in our analysis. Now, we solve invariant density function of logarithmic wealth $\bar{p}^{\dagger}(x)$ for the second type of agent according to \textit{Kolmogorov Forward Equation} (KFE):
\[\begin{gathered}0=-\mu^{\dagger}\bar{p}^{{\dagger}\prime}(x)+\frac{1}{2}\Sigma^2\bar{p}^{{\dagger}\prime\prime}(x)-\beta\bar{p}^{\dagger}(x)+\beta\delta(x)\\\bar{p}^{\dagger}(x)=K^{\dagger}\exp\left(\frac{\mu^{\dagger}+\sqrt{{\mu^{\dagger}}^2+2\beta\Sigma^2}}{\Sigma^2}x\right),x<0\\\bar{p}^{\dagger}(x)=K^{\dagger}\exp\left(\frac{\mu^{\dagger}-\sqrt{{\mu^{\dagger}}^2+2\beta\Sigma^2}}{\Sigma^2}x\right),x>0\\K^{\dagger}{\left[\int_{-\infty}^0\exp{\left(\frac{\mu^{\dagger}+\sqrt{{\mu^{\dagger}}^2+2\beta\Sigma^2}}{\Sigma^2}x\right)}dx+\int_0^\infty\exp{\left(\frac{\mu^{\dagger}-\sqrt{{\mu^{\dagger}}^2+2\beta\Sigma^2}}{\Sigma^2}x\right)}dx\right]}=1\\K^{\dagger}=\frac{\beta}{\sqrt{{\mu^{\dagger}}^2+2\beta\Sigma^2}}\end{gathered}\]

Then, we get the invariant distribution of the second type of agent: 
\begin{align}
\bar{p}^{\dagger}(x)=\frac{\beta}{\sqrt{{\mu^{\dagger}}^2+2\beta\Sigma^2}}\exp{\left(\frac{\mu^{\dagger}+\sqrt{{\mu^{\dagger}}^2+2\beta\Sigma^2}}{\Sigma^2}x\right)},x<0\label{25}\\\bar{p}^{\dagger}(x)=\frac{\beta}{\sqrt{{\mu^{\dagger}}^2+2\beta\Sigma^2}}\exp{\left(\frac{\mu^{\dagger}-\sqrt{{\mu^\dagger{}}^2+2\beta\Sigma^2}}{\Sigma^2}x\right)},x>0\label{26}
\end{align}

\subsection{The Numerical Simulation of Wealth Distribution}

Based on equations (21),(22),(25) and (26), the parameter values for the numerical simulations must satisfy two constraints: (\textbf{A}) According to equation (16), the firm's productivity $z$ must be greater than or equal to the productivity threshold $z_{min}$, ensuring that the firm does not go bankrupt and remains active in production activities. (\textbf{B}) Based on columns (4) to (6) of Table 1, we use the credit availability indicator Friction of Chinese listed companies as a proxy variable for financial friction. It is calculated as “new corporate debt (difference between total debt at the beginning and end of the year)/total corporate assets.” The magnitude of the Friction variable measures the level of annual available capital for firms, i.e., the larger Friction is, the lower the financial friction, and the larger $\lambda$ is. As shown in the results of columns (4) to (6) in Table 1, uncertainty significantly reduces corporate credit availability. That means firms face greater financial friction in environments with higher uncertainty. Therefore, based on the empirical results, we establish a corresponding relationship between the consumption adjustment weight $f_{\sigma}$ and financial friction $\lambda$ for the second type of agent, where a higher $f_{\sigma}$ corresponds to a higher $\lambda$, indicating that financial friction is lower in environments with lower uncertainty and vice versa. The parameters and their values required for the simulation are shown in Table 2:

\begin{table}[ht]
\centering
\caption{Parameter Settings}
\label{tab:parameters}
\small
\renewcommand{\arraystretch}{1.2}
\begin{tabular}{lccc}
\toprule
Variable & Parameter & The First Type & The Second Type \\
\midrule
Discount Factor & $\rho$ & 0.05 & 0.05 \\ \addlinespace[0.5em]
Elasticity of Utility & $\gamma$ & 2 & 2 \\ \addlinespace[0.5em]
Capital Share & $\alpha$ & 0.3 & 0.3 \\ \addlinespace[0.5em]
Capital Depreciation & $\delta$ & 0.6 & 0.6 \\ \addlinespace[0.5em]
Wealth Dissipation Rate & $\beta$ & 0.3 & 0.3 \\ \addlinespace[0.5em]
Wage Rate & $w$ & 1 & 1 \\ \addlinespace[0.5em]
Interest Rate & $r$ & 0.01 & 0.01 \\ \addlinespace[0.5em]
\multirow{2}{*}{Average Return Rate} & $\theta^{-}$ & 0.05 & 0.05 \\
 & $\theta^{+}$ & 0.5 & 0.5 \\ \addlinespace[0.5em]
\multirow{2}{*}{Volatility Rate} & $\sigma^{-}$ & 0.05 & 0.05 \\
 & $\sigma^{+}$ & 0.5 & 0.5 \\ \addlinespace[0.5em]
\multirow{3}{*}{Financial Friction} & $\lambda^{L}$ & 5 & 5 \\
 & $\lambda^{M}$ & 25 & 25 \\
 & $\lambda^{H}$ & 50 & 50 \\ \addlinespace[0.5em]
\multirow{3}{*}{Utility Weight} & $f_{\sigma}^{L}$ & 1 & 0.2 \\
 & $f_{\sigma}^{M}$ & 1 & 0.5 \\
 & $f_{\sigma}^{H}$ & 1 & 0.8 \\ \addlinespace[0.5em]
\multirow{2}{*}{Productivity} & $z_{\min}$ & 4.674 & 4.674 \\
 & $z$ & 5 & 5 \\
\bottomrule
\end{tabular}
\end{table}

According to the parameters settings above, Figure 7 shows the logarithmic wealth distribution density of the first type $\bar{p}(x)$ with $(\theta^-,\sigma^-)$, and Figure 8 shows the logarithmic wealth distribution density of the first type $\bar{p}(x)$ with $(\theta^+,\sigma^+)$. Figure 9 shows the logarithmic wealth distribution density of the second type $\bar{p}^{\dagger}(x)$ with $(\theta^-,\sigma^-)$, and Figure 10 shows the logarithmic wealth distribution density of the second type $\bar{p}^{\dagger}(x)$ with $(\theta^+,\sigma^+)$. In each figure, the black, blue, and red solid lines represent different sizes of financial friction $\lambda^i,i\in\{L,M,H\}$ or utility conversion weights $f_{\sigma}^i,i\in\{L,M,H\}$.

\begin{figure}[htbp]
\centering
\includegraphics[width=11cm]{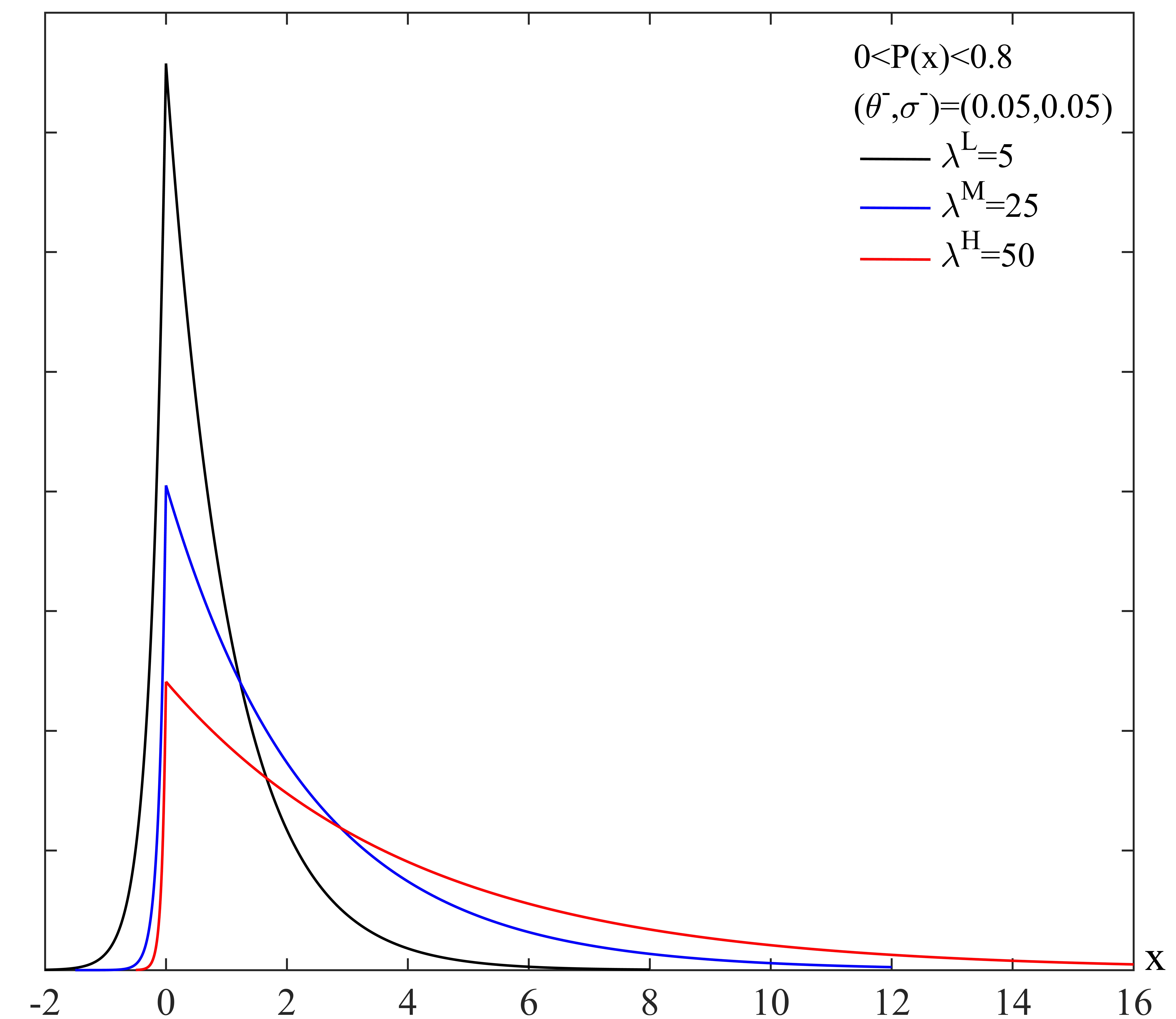}
\caption{\label{fig:F7}Logarithmic Wealth Distribution $\bar{p}(x)$ with $(\theta^-,\sigma^-)$}
\end{figure}
\begin{figure}[htbp]
\centering
\includegraphics[width=11cm]{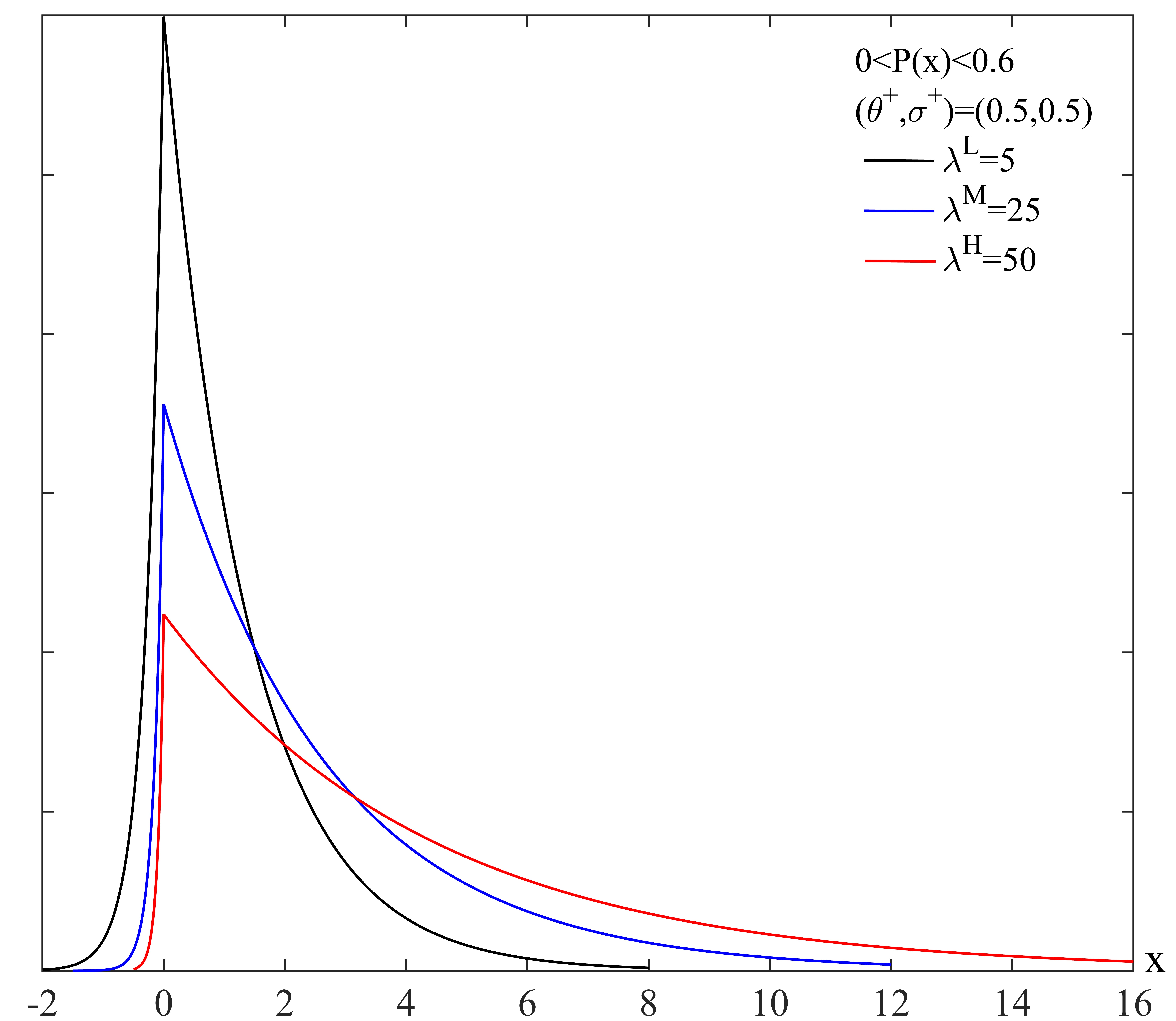}
\caption{\label{fig:F8}Logarithmic Wealth Distribution $\bar{p}(x)$ with $(\theta^+,\sigma^+)$}
\end{figure}
\begin{figure}[htbp]
\centering
\includegraphics[width=11cm]{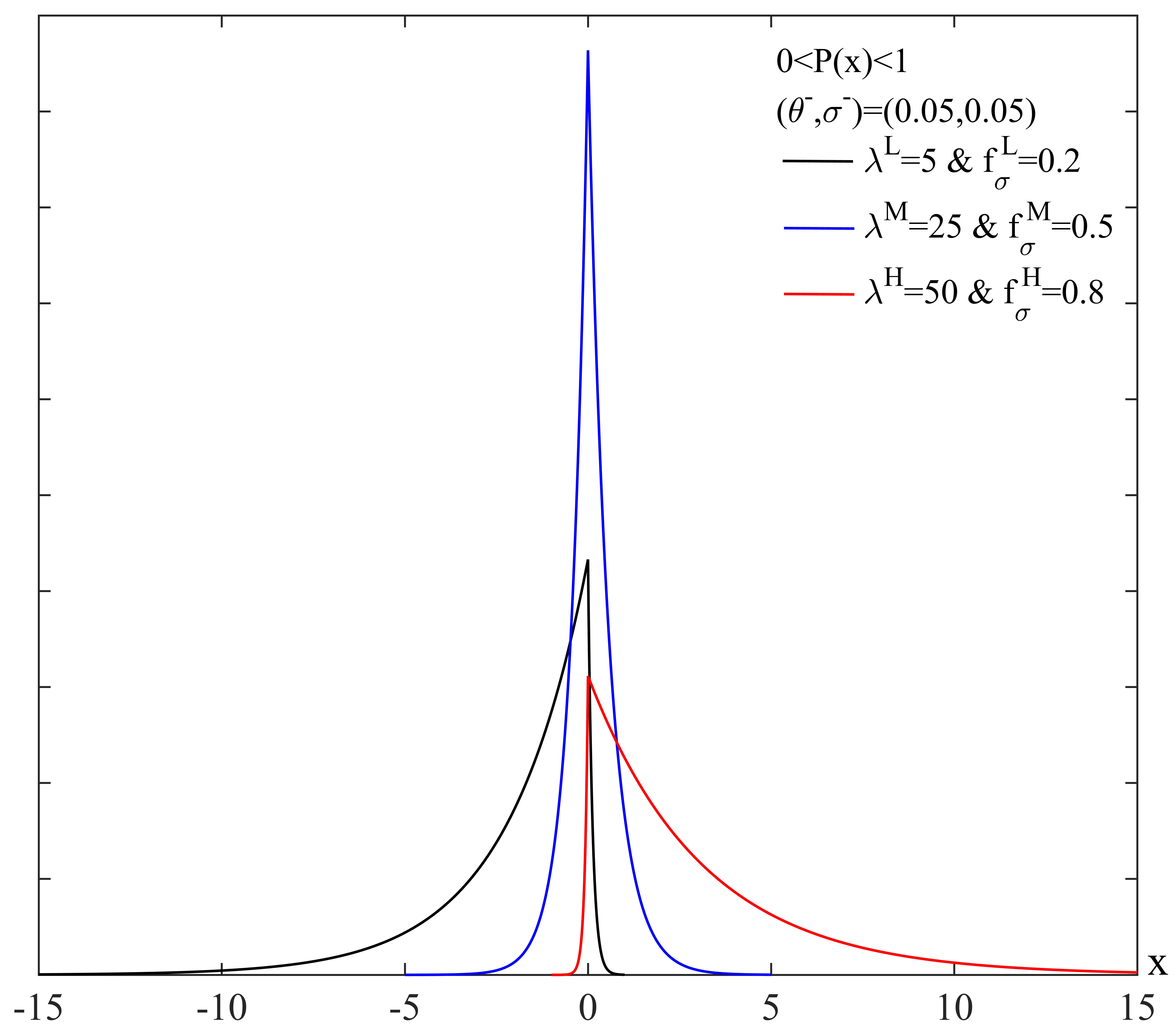}
\caption{\label{fig:F9}Logarithmic Wealth Distribution $\bar{p}^{\dagger}(x)$ with $(\theta^-,\sigma^-)$}
\end{figure}
\begin{figure}[htbp]
\centering
\includegraphics[width=11cm]{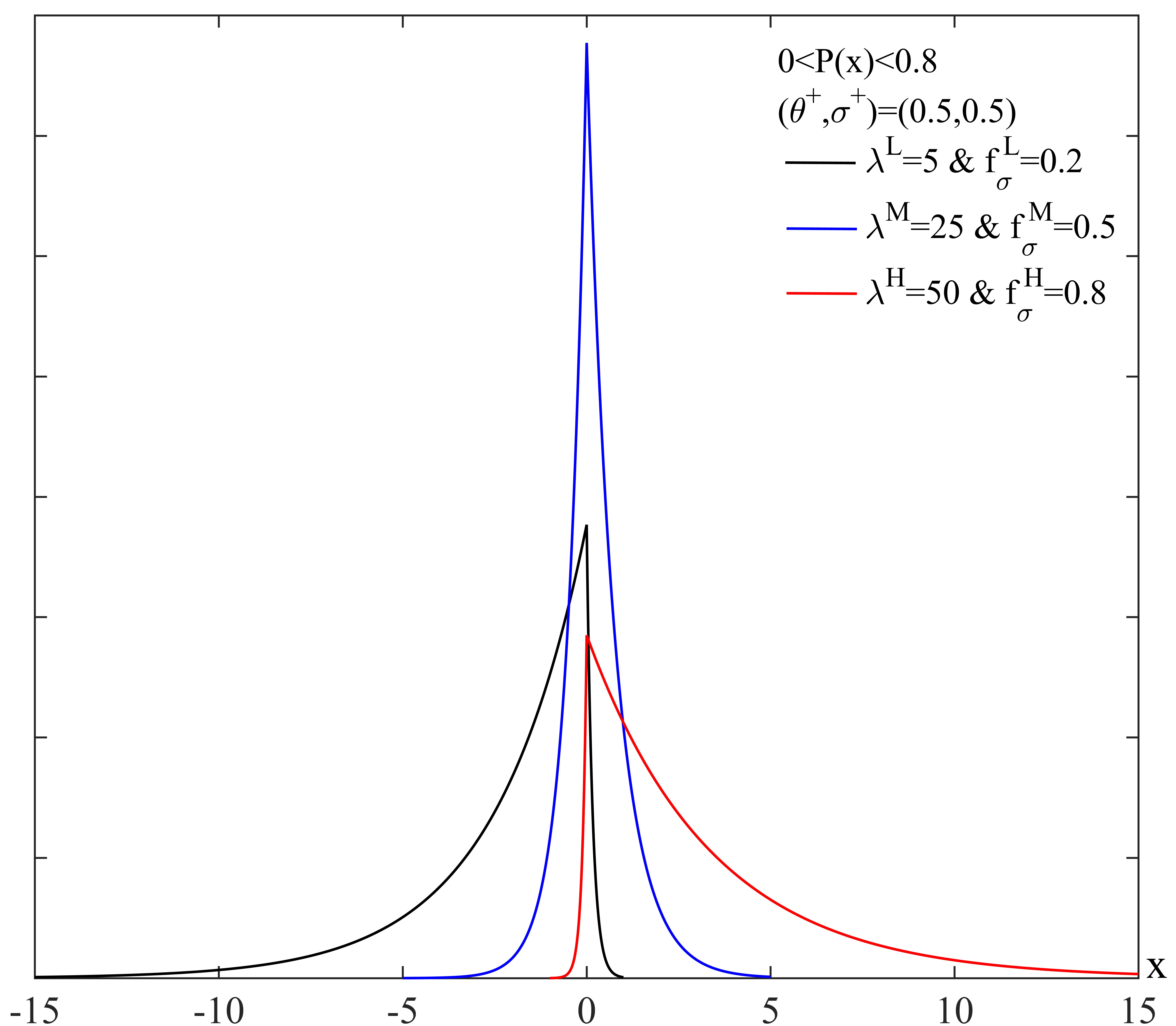}
\caption{\label{fig:F10}Logarithmic Wealth Distribution $\bar{p}^{\dagger}(x)$ with $(\theta^+,\sigma^+)$}
\end{figure}

For all Figures from 7 to 10, note that $\lambda$ stands for the degree of financial frictions: The larger the $\lambda$ is, the smaller the financial friction it stands for. That is, larger $\lambda$ means smaller financial frictions. Meanwhile, $\theta$ and $\sigma$ stand for the average return of the risky assets and the volatility of the risky assets, respectively. $f_{\sigma}\in(0,1)$ is a weight factor for total consumption. The cognitive resources of the second type of agent are diluted when interacting with big data, resulting in a decrease in the rationality of decision-making. In our discussion, this is an exogenous condition that distinguishes the second type of agent from the first type. Furthermore, information uncertainty (big data value) increases with the growth of financial friction. Therefore, the consumption of the second type of agent that can be converted into utility is a weight of total consumption, measured by $f_{\sigma}$, and the larger the $\lambda$ is, the greater the $f_{\sigma}$ is. In this change of $\lambda$ and $f_{\sigma}$, the cognitive resources of the second type of agent will be continuously diluted through interaction with big data. Therefore, the change in $\lambda$ ensures the external condition of $f_{\sigma}$, while the dilution of cognitive resources ensures the internal condition. Under these conditions, $c^{utility}_t=f_{\sigma}c_t<c_t$ can be established.

(\textbf{A}) The First Type of Agent: Figure 7 and Figure 8 plot the (logarithmic) wealth distribution when $\lambda^L=5,\lambda_M=25,\lambda^H=50$ with $(\theta^-,\sigma^-)=(0.05,0.05)$ and $(\theta^+,\sigma^+)=(0.5,0.5)$. We see that as $\lambda$ gets larger (i.e., as financial frictions get smaller), the mean of the (logarithmic) wealth distribution ($\mathbb{E}(x)$) becomes larger and the Pareto tail (Var(x)) of the (logarithmic) wealth gets thick, which means that as financial frictions get smaller, although the average wealth will increase, the (top) wealth inequality gets larger.

(\textbf{B}) The Second Type of Agent: Figure 7 and Figure 8 plot the (logarithmic) wealth distribution when $(\lambda^L,f_{\sigma}^L)=(5,0.2),(\lambda^M,f_{\sigma}^M)=(25,0.5),(\lambda^H,f_{\sigma}^H)=(50,0.8)$ with $(\theta^-,\sigma^-)=(0.05,0.05)$ and $(\theta^+,\sigma^+)=(0.5,0.5)$. Unlike the (logarithmic) wealth distribution of the first type of agent, which is all right-skewed, after introducing $f_{\sigma}$, the joint change of $\lambda$ and $f_{\sigma}$ will cause the (logarithmic) wealth distribution graph to shift from a left-skewed distribution to a normal distribution and then to a right-skewed distribution. And we can see that as $\lambda$ and $f_{\sigma}$ get larger, the mean of the (logarithmic) wealth distribution ($\mathbb{E}(x)$) becomes larger. However, for the Pareto tail (Var(x)) of the (logarithmic) wealth, it does not change linearly with $(\lambda^i,f_{\sigma}^i),i\in\{L,M,H\}$, as we can see, the Pareto tail (Var(x)) of the (logarithmic) wealth is thick under the condition of $(\lambda^L,f_{\sigma}^L)$ and $(\lambda^H,f_{\sigma}^H)$ while the the Pareto tail (Var(x)) is thin under the condition of $(\lambda^M,f_{\sigma}^M)$. That means: Although the reduction in financial friction and the increase in utility conversion weights have led to an increase in the average wealth, when we introduce the consumption-utility conversion weight $f_{\sigma}$, we find that wealth distribution inequality is significantly large under both high weight ($f_{\sigma}^H$) and low weight ($f_{\sigma}^L$). Only under a medium weight, i.e., $f_{\sigma}^M=0.5$, the wealth distribution is relatively concentrated, which is similar to a short-tailed normal distribution.

Now, if we fix $\lambda$ at the levels $\lambda^L=5,\lambda^M=25,\lambda^H=50$, we can explore how the volatility rate $\sigma$ and average return rate $\theta$ of risk assets affect the wealth inequality by observing the thickness of the respective Pareto tail. By analyzing Figures 7, 8, 9, and 10, we can see that in both the first and second types of situations, when the coefficient group of risk assets increases from $(\theta^-,\sigma^-)$ to $(\theta^+,\sigma^+)$, the peak of all logarithmic wealth distributions decreases, which means that the Pareto tail of all distributions becomes thicker. Therefore, it seems like a larger coefficient combination of $(\theta,\sigma)$ will lead to greater wealth inequality. Based on this, we will next use the \textit{Control Variable Method} on $(\theta,\sigma)$ to conduct further robust research. The robust tests see Figure 11, 12, 13 and 14:

\begin{figure}[htbp]
\centering
\includegraphics[width=13cm]{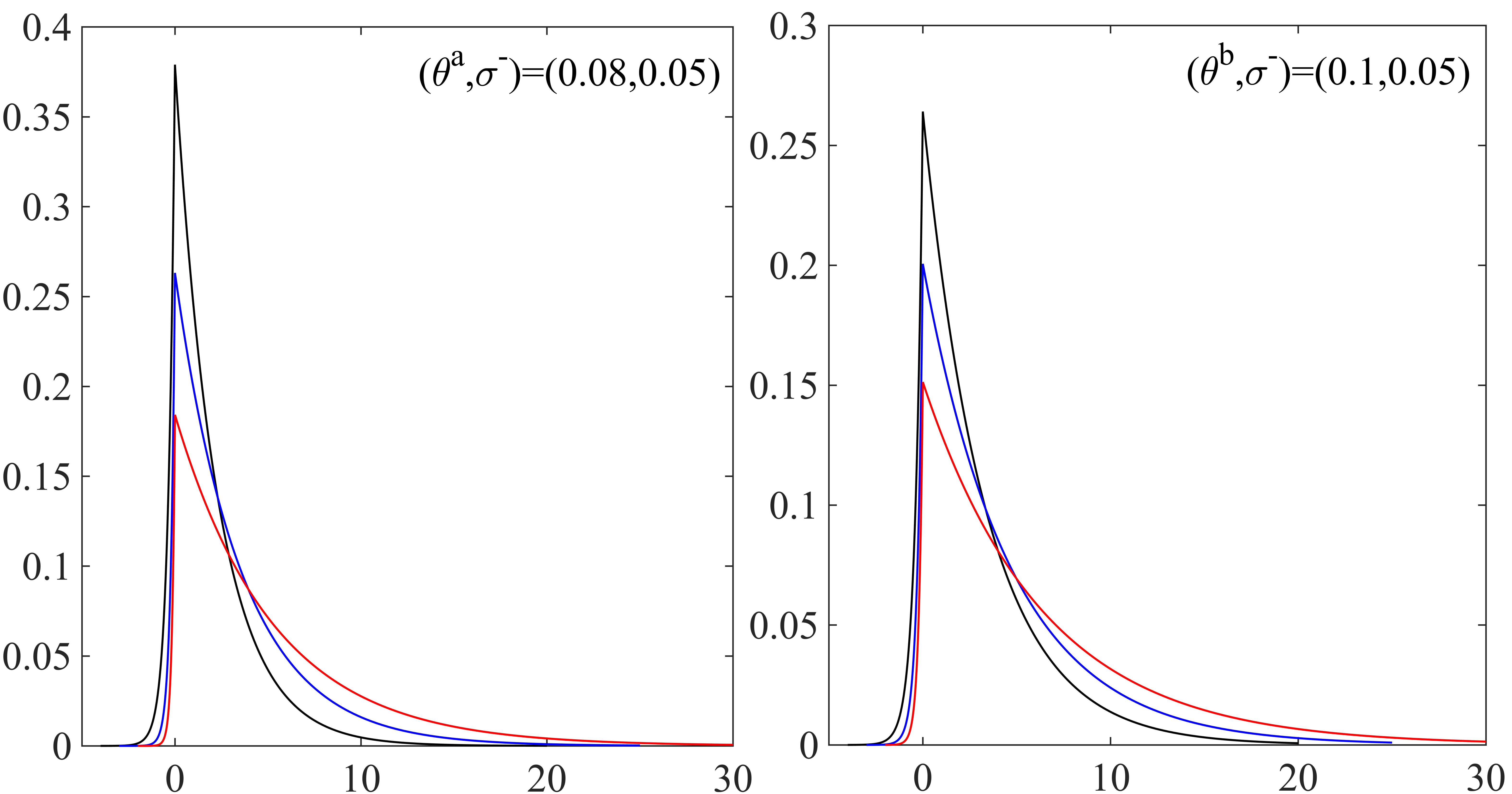}
\caption{\label{fig:F11}Logarithmic Wealth Distribution $\bar{p}(x)$ with $(\theta^a,\sigma^-)$ and $(\theta^b,\sigma^-)$}
\end{figure}
\begin{figure}[htbp]
\centering
\includegraphics[width=13cm]{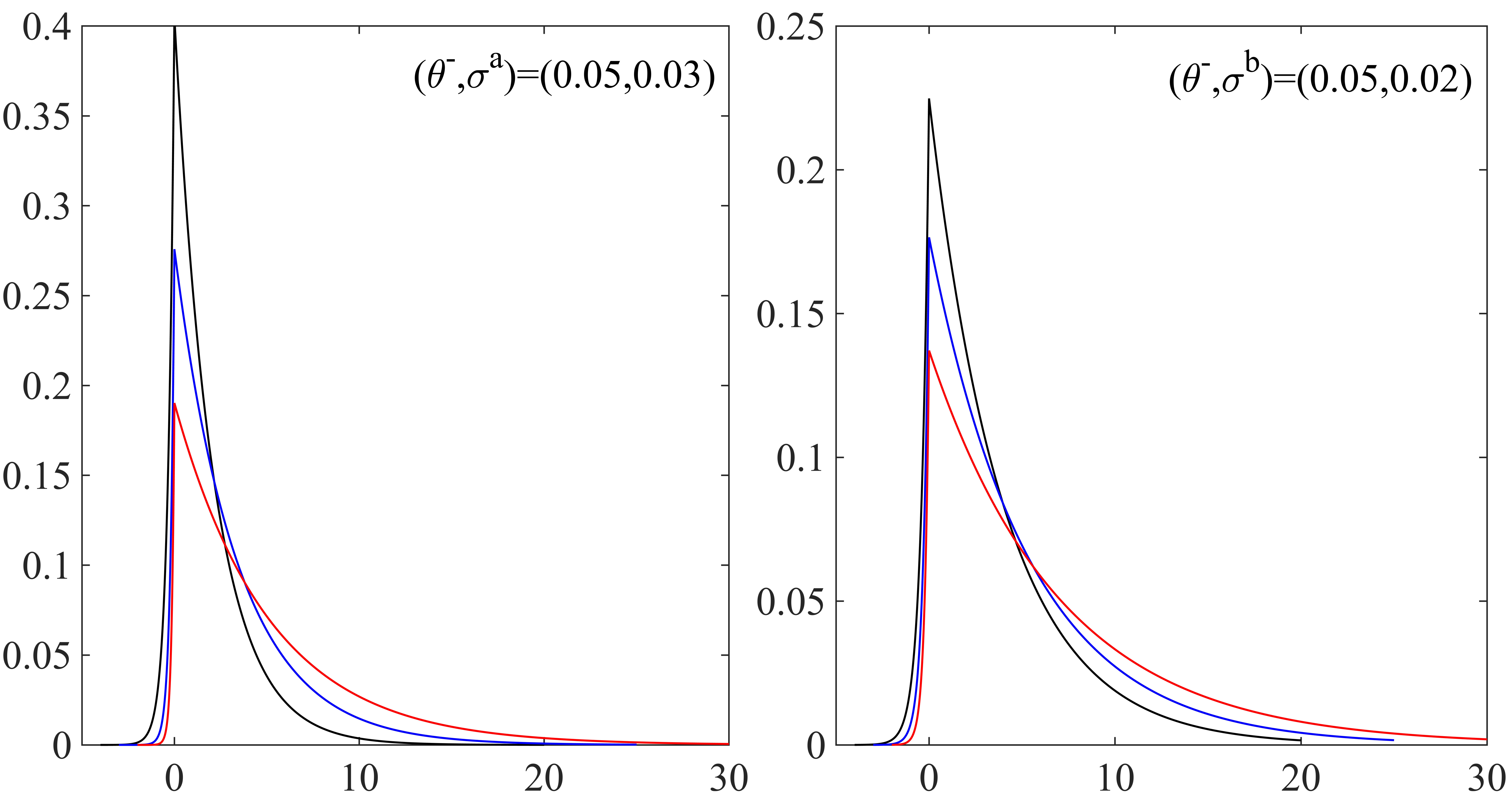}
\caption{\label{fig:F12}Logarithmic Wealth Distribution $\bar{p}(x)$ with $(\theta^-,\sigma^a)$ and $(\theta^-,\sigma^b)$}
\end{figure}
\begin{figure}[htbp]
\centering
\includegraphics[width=13cm]{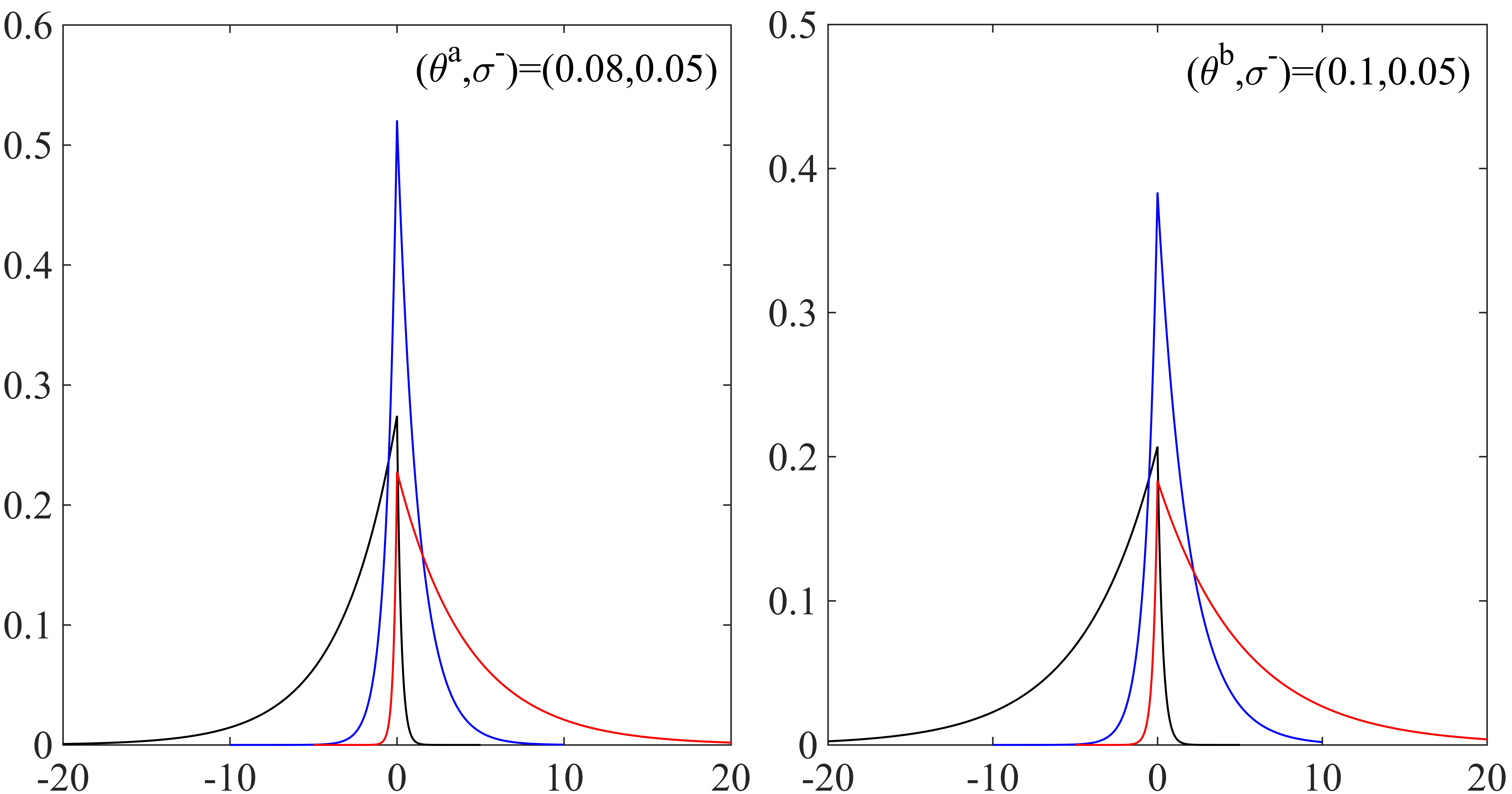}
\caption{\label{fig:F13}Logarithmic Wealth Distribution $\bar{p}^{\dagger}(x)$ with $(\theta^a,\sigma^-)$ and $(\theta^b,\sigma^-)$}
\end{figure}
\begin{figure}[htbp]
\centering
\includegraphics[width=13cm]{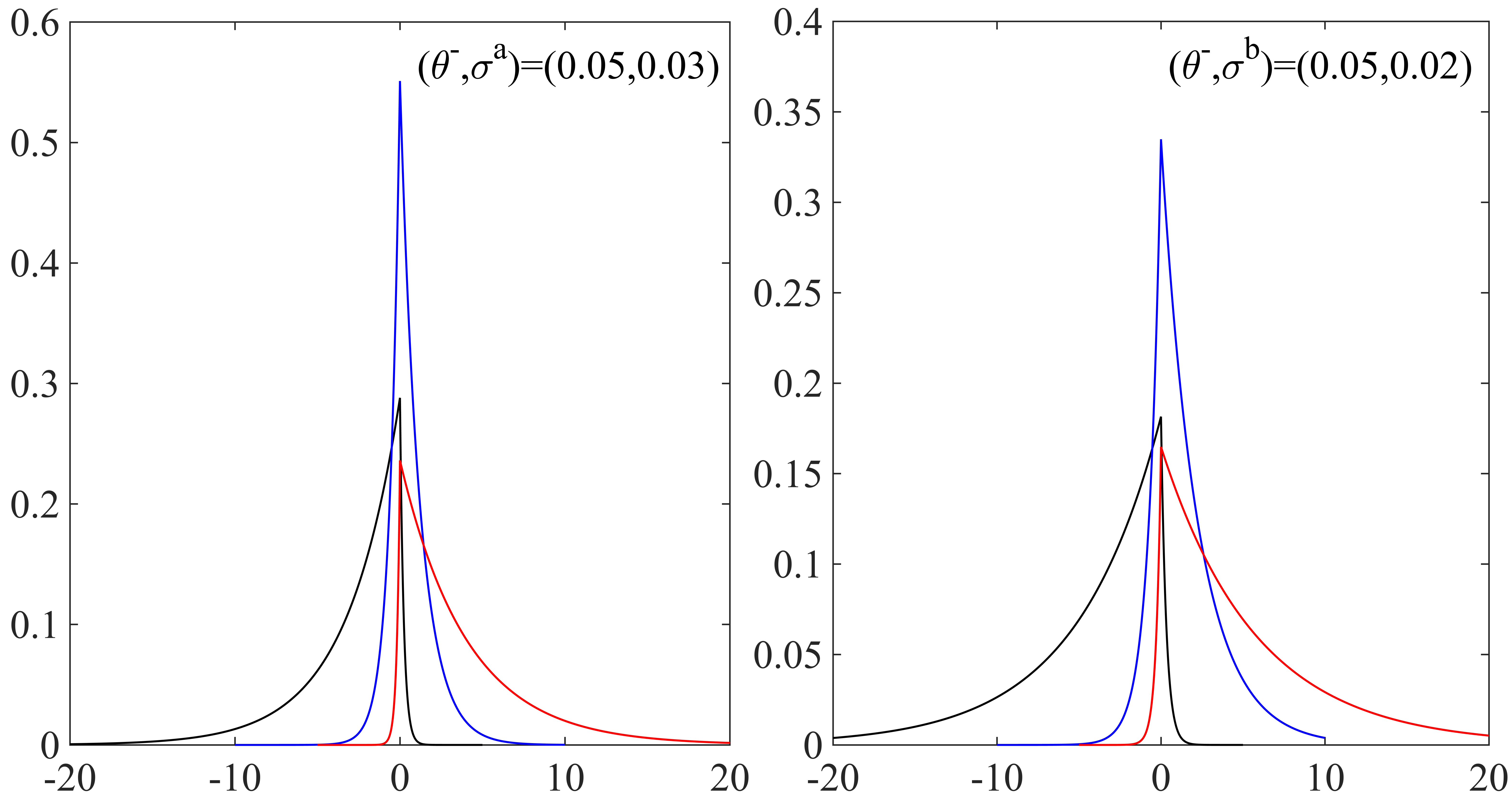}
\caption{\label{fig:F14}Logarithmic Wealth Distribution $\bar{p}^{\dagger}(x)$ with $(\theta^-,\sigma^a)$ and $(\theta^-,\sigma^b)$}
\end{figure}

The robust tests are shown\footnote{The color labels for the curves in Figures 11 to 14 are consistent with those in Figures 7 to 9, with black representing high financial friction $\lambda^L$, blue representing medium financial friction $\lambda^M$, and red representing low financial friction $\lambda^H$.} by Figure 11,12,13 and 14. We ultimately found that for both types of agents: when the average return rate $\theta$ of risk assets is fixed, the Pareto tail becomes thicker as the volatility rate $\sigma$ decreases. When the volatility rate $\sigma$ of risk assets is fixed, the Pareto tail becomes thicker as the average return rate $\theta$ increases. Therefore, we can summarize the third conclusion:

(\textbf{C}) Both Types of Agents: Comparing Figures 7 and 8 or Figures 9 and 10, we can see that investing in risk assets with higher average returns and volatility leads to greater wealth inequality. At the same time, comparing Figures 7 and 11, 12, or Figures 9, 13, and 14, we can see that investing only in assets with higher average returns leads to greater wealth inequality, and investing only in risk assets with lower volatility also leads to greater wealth inequality.

\noindent\textbf{Empirical Analysis}:According to the analysis in Section 4.3 on numerical simulation, we can see that for both the first and second types of agents, a reduction in financial friction (increase in $\lambda$) will be beneficial to the growth of average wealth and lead to greater wealth inequality. For the second type of agent, when we introduce the utility conversion weight $f_{\sigma}$, we find that changes from f and lambda have similar effects on the firm's wealth. Specifically, when $f_{\sigma}$ and $\lambda$ are high, the financial friction in the economic environment is low and the proportion of agents’ total consumption converted into utility is high, which will result in a higher average wealth. Based on this, we further propose Proposition 2:

\begin{proposition}
\textit{Lower financial friction and information uncertainty will be beneficial to firm wealth accumulation, increase market share, and promote more firms to enter the high market share group. However, high financial friction and information uncertainty will be negative for firm market share growth.}
\end{proposition}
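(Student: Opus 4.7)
The plan is to establish Proposition 2 empirically, in the same spirit as Table 1 of the paper, since the theoretical side has already been handled by the analytical mean–field distributions $\bar{p}(x)$ and $\bar{p}^{\dagger}(x)$ from which the average-wealth and tail-thickness comparative statics in $\lambda$ and $f_{\sigma}$ were read off. Using the same Chinese A-share sample (2000--2023), I would first construct a firm-level market-share variable, $\text{MS}_{i,t}=\text{Sales}_{i,t}/\sum_{j\in I(i)}\text{Sales}_{j,t}$, where $I(i)$ is the CSRC industry of firm $i$, and in parallel build a binary indicator $\text{High}_{i,t}=\mathbf{1}\{\text{MS}_{i,t}\ge q_{0.9}(\text{MS}_{\cdot,t}\mid I(i))\}$ for belonging to the top-decile (``high market share'') group within industry-year. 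The two key regressors are carried over directly from Section 3: \textbf{Friction} (credit availability, which is inversely related to $\lambda^{-1}$, so larger \textbf{Friction} corresponds to lower financial friction) and \textbf{Uncertainty} (the composite index of information asymmetry and economic policy uncertainty already validated in columns (1)--(6) of Table 1).

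The core specification would be a two-way fixed-effects regression
\begin{equation*}
\text{MS}_{i,t}=\alpha+\beta_{1}\text{Friction}_{i,t}+\beta_{2}\text{Uncertainty}_{i,t}+\Gamma' X_{i,t}+\mu_{i}+\lambda_{t}+\varepsilon_{i,t},
\end{equation*}
with $X_{i,t}$ the same nine firm-level controls (DER, ROA, REC, INV, CAP, BM, TobinQ, Occupy, GrossProfit), clustered standard errors at the firm level, and a parallel probit (or conditional logit, to absorb firm fixed effects cleanly) for $\text{High}_{i,t}$. The sign predictions dictated by Proposition 2 are $\beta_{1}>0$ (lower financial friction, i.e., higher credit availability, raises market share and the probability of joining the high-share group) and $\beta_{2}<0$ (higher information uncertainty depresses both). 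I would present results in stages mirroring Table 1: (i) pooled OLS, (ii) adding year fixed effects, (iii) adding firm fixed effects, and (iv) adding the controls, to show the coefficient signs are stable. I would then add an interaction term $\text{Friction}_{i,t}\times\text{Uncertainty}_{i,t}$ to test whether the two frictions reinforce one another, which is what Figures 9--10 suggest once $f_{\sigma}$ and $\lambda$ comove.

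For robustness, I would redefine $\text{High}_{i,t}$ with alternative thresholds ($q_{0.75}$, $q_{0.95}$), redefine industries at the two-digit rather than three-digit CSRC level, re-run on sub-periods before and after the acceleration of the digital economy (say pre- and post-2013), and replace \textbf{Uncertainty} with its two components (ASY and EPU) separately to confirm both drive the negative sign. A dynamic specification with lagged \textbf{Friction} and \textbf{Uncertainty} would address reverse causality from current market share to current credit availability. To further guard against endogeneity, I would use the lagged industry-average \textbf{Friction} (excluding firm $i$) as an instrument for firm-level \textbf{Friction}, and the \citet{baker2016measuring} policy-uncertainty component (which is plausibly exogenous to any single firm) as the ``clean'' part of \textbf{Uncertainty}.

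The main obstacle will be the endogeneity of \textbf{Friction}: larger, higher-market-share firms are precisely the ones that receive easier credit, so a naive positive $\hat{\beta}_{1}$ could reflect reverse causality rather than the structural channel predicted by $\bar{p}^{\dagger}(x)$. The leave-one-out industry instrument partially addresses this but will need a first-stage $F$-statistic check and an overidentification discussion if more than one instrument is used. A secondary obstacle is that ``entering the high market share group'' is a transition, not a stock: ideally I would estimate a hazard/transition model for the event $\{\text{High}_{i,t}=0,\text{High}_{i,t+1}=1\}$, but firm fixed effects then become infeasible and I would fall back on industry fixed effects plus a Mundlak correction. If, despite these fixes, the signs on $\beta_{1}$ and $\beta_{2}$ come out as predicted at conventional significance, Proposition 2 is supported; the argument then closes by noting that the same monotonicity was already visible analytically in the shift of mass of $\bar{p}^{\dagger}(x)$ toward higher $x$ as $(\lambda,f_{\sigma})$ jointly rise in Figures 9--10.
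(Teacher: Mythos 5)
Your proposal is essentially the paper's own proof: the paper establishes Proposition 2 empirically on the same Chinese A-share sample (2000--2023) by regressing a revenue-based market-share variable (\textbf{Wealth}) and a median-split high-share indicator (\textbf{Group}, via Logit) on \textbf{Friction} and \textbf{Uncertainty} with the same nine controls and two-way (Year, ID) fixed effects, obtaining the predicted signs $\beta_1>0$ and $\beta_2<0$ at the $1\%$ level, exactly as you specify. Your only deviations are refinements rather than a different route --- the paper enters \textbf{Friction} and \textbf{Uncertainty} in separate regressions rather than jointly with an interaction, uses a sample-median cutoff rather than a top-decile industry-year cutoff for the group indicator, and does not pursue your IV, lagged-regressor, or hazard-model extensions for endogeneity and transitions, which would strengthen but are not part of the paper's argument.
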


According to the econometric analysis in Section 3.2, we will use data from Chinese A-share listed companies from 2000 to 2023 as the analysis sample. This section has two dependent variables. The first is market share variable (\textbf{Wealth}) that measures firm wealth, which is measured by the ratio of a firm's operating revenue to the total revenue of the industry. The second is a grouping variable (\textbf{Group}) used to determine whether a firm is in the high market share group. Firm market share groups are defined based on whether a firm’s market share exceeds the median of the sample. If a firm is in the high market share group, \textbf{Group}=1, if not, \textbf{Group}=0. For the binary variable \textbf{Group}, we will use the \textit{Logit} regression model to analyze. For the independent variables, the first is the credit availability variable (\textbf{Friction}) to measure financial friction. It is calculated as “new corporate debt (difference between beginning-of-year and end-of-year debt) / total corporate assets.” The magnitude of \textbf{Friction} measures the level of annual available capital for the company, meaning that the larger \textbf{Friction} is, the lower the financial friction. The second is the uncertainty variable (\textbf{Uncertainty}), which in this section serves as a proxy for the utility conversion weight $f_{\sigma}$, meaning that in an environment of higher uncertainty, the proportion of agents’ total consumption converted into utility is lower. The calculation method for \textbf{Uncertainty} is consistent with Section 3.2, requiring consideration of both information uncertainty and economic policy uncertainty. The control variables are consistent with those used in Table 1 of Section 3.2. The regression results are shown in Table 3\footnote{Standard errors in parentheses: $^{*} p<0.05$, $^{**} p<0.01$, $^{***} p<0.001$.}:

\begin{table}[ht]
\centering
\caption{Regression Results B}
\label{tab:results B}
\small
\setlength{\tabcolsep}{3pt}
\begin{tabular}{l*{8}{c}}
\toprule
 & \multicolumn{2}{c}{Wealth} & \multicolumn{2}{c}{Group} & \multicolumn{2}{c}{Wealth} & \multicolumn{2}{c}{Group} \\
\cmidrule(lr){2-3} \cmidrule(lr){4-5} \cmidrule(lr){6-7} \cmidrule(lr){8-9}
 & (1) & (2) & (3) & (4) & (5) & (6) & (7) & (8) \\
\midrule
Fraction & $0.007^{***}$ & $0.005^{***}$ & $1.540^{***}$ & $1.677^{***}$ &  &  &  &  \\
 & (0.001) & (0.001) & (0.164) & (0.192) &  &  &  &  \\
Uncertainty &  &  &  &  & $-0.002^{***}$ & $-0.002^{***}$ & $-0.635^{***}$ & $-0.745^{***}$ \\
 &  &  &  &  & (0.000) & (0.000) & (0.036) & (0.041) \\
DER &  & $0.001^{***}$ &  & $0.154^{***}$ &  & $0.001^{***}$ &  & $0.146^{***}$ \\
 &  & (0.000) &  & (0.028) &  & (0.000) &  & (0.024) \\
ROA &  & $0.018^{***}$ &  & $6.438^{***}$ &  & $0.022^{***}$ &  & $6.289^{***}$ \\
 &  & (0.002) &  & (0.505) &  & (0.002) &  & (0.435) \\
REC &  & $-0.000$ &  & $0.047$ &  & $-0.000$ &  & $0.847^{*}$ \\
 &  & (0.002) &  & (0.428) &  & (0.002) &  & (0.335) \\
INV &  & $-0.004^{**}$ &  & $-0.500$ &  & $-0.003$ &  & $0.061$ \\
 &  & (0.002) &  & (0.317) &  & (0.001) &  & (0.250) \\
CAP &  & $-0.001^{***}$ &  & $-0.543^{***}$ &  & $-0.001^{***}$ &  & $-0.511^{***}$ \\
 &  & (0.000) &  & (0.024) &  & (0.000) &  & (0.018) \\
BM &  & $0.000$ &  & $0.545^{***}$ &  & $0.001^{***}$ &  & $0.644^{***}$ \\
 &  & (0.000) &  & (0.039) &  & (0.000) &  & (0.035) \\
TobinQ &  & $-0.001^{***}$ &  & $-0.288^{***}$ &  & $-0.001^{***}$ &  & $-0.237^{***}$ \\
 &  & (0.000) &  & (0.027) &  & (0.000) &  & (0.024) \\
Occupy &  & $-0.006$ &  & $-1.954^{*}$ &  & $-0.021^{***}$ &  & $-2.821^{***}$ \\
 &  & (0.005) &  & (0.966) &  & (0.004) &  & (0.681) \\
GrossProfit &  & $-0.008^{***}$ &  & $-0.956^{***}$ &  & $-0.009^{***}$ &  & $-1.213^{***}$ \\
 &  & (0.001) &  & (0.287) &  & (0.001) &  & (0.235) \\
Constant & $0.026^{***}$ & $0.032^{***}$ &  &  & $0.036^{***}$ & $0.044^{***}$ &  &  \\
 & (0.001) & (0.001) &  &  & (0.001) & (0.001) &  &  \\
Year & $\surd$ & $\surd$ & $\surd$ & $\surd$ & $\surd$ & $\surd$ & $\surd$ & $\surd$ \\
ID & $\surd$ & $\surd$ & $\surd$ & $\surd$ & $\surd$ & $\surd$ & $\surd$ & $\surd$ \\
\midrule
$R^{2}$ & 0.043 & 0.057 &  &  & 0.063 & 0.079 &  &  \\
Number & 44577 & 43559 & 16587 & 16036 & 54781 & 53682 & 23525 & 22890 \\
\bottomrule
\end{tabular}

\vspace{5pt}
\raggedright
\end{table}

In Table 3, the result is consistent with \textbf{Proposition 2}: The independent variables in columns (1) to (4) are financial friction (\textbf{Friction}). According to the regression results, after controlling for the control variables and using a two-way fixed effects model, the coefficients of the independent variables are all significantly positive at the $1\%$ level, indicating that a reduction in financial friction, i.e., an increase in credit availability (\textbf{Friction}), effectively promotes firms' wealth growth. Furthermore, as shown by the Logit regression results, an increase in \textbf{Friction} significantly enhances the probability of \textbf{Group}=1, meaning that in an environment with lower financial friction, more firms enter the high market share group. The independent variables in columns (5) to (8) are uncertainty (\textbf{Uncertainty}). From the regression results, after controlling for variables and using a two-way fixed effects model, the coefficients of the independent variables are all significantly negative at the $1\%$ level, indicating that an increase in uncertainty (\textbf{Uncertainty}) reduces the weight of agents' consumption converted into utility, which inhibits firms' wealth growth. Furthermore, as indicated by the Logit regression results, an increase in uncertainty significantly reduces the probability of \textbf{Group}=1, meaning that in a higher uncertainty environment, fewer firms enter the high market share group.

\subsection{Extended Analysis: The Equilibrium Situation according to MFG}
\textbf{The Analytical Solution of Equilibrium}: Again, the steady-state system according to \textit{Mean Field Game} (MFG) (\citealp{achdou2022income}) in the entrepreneur-worker model is given by:
\[\begin{gathered}\rho v(a)=\max_{c,\kappa}\frac{c^{1-\gamma}}{1-\gamma}+(\pi(a)+ra+(\theta-r)\kappa-c)v^{\prime}(a)+\frac{1}{2}\sigma^2\kappa^2v^{\prime\prime}(a)\\0=-\frac{d}{da}[(\pi(a)+ra+(\theta-r)\kappa-c)\bar{p}(a)]+\frac{1}{2}\frac{d}{da^2}\left[\sigma^2\kappa^2\bar{p}(a)\right]-\beta\bar{p}(a)+\beta\delta(a-1)\\\int_0^\infty\lambda a\bar{p}(a)da=\int_0^\infty{\left(a-\left(\frac{\theta-r}{\gamma\sigma^2}\right)a\right)}\bar{p}(a)da\\\int_0^\infty\left(\frac{1-\alpha}{w}\right)^{\frac{1}{\alpha}}z\lambda a\bar{p}(a)da=1\end{gathered}\]

According to the equilibrium conditions above, i.e., the capital market clearing condition and the labor market clearing condition, we can solve equilibrium interest rate $r^*$ and equilibrium wage rate $w^*$.

It follows directly from capital market clearing condition:
\[\int_0^\infty\lambda a\bar{p}(a)da=\int_0^\infty{\left(a-\left(\frac{\theta-r}{\gamma\sigma^2}\right)a\right)}\bar{p}(a)da\]

Then, we get $r^*$:
\begin{align}
r^*=\theta-\gamma\sigma^2(1-\lambda)\label{27}
\end{align}

To determine the equilibrium wage rate $w^*$, we have to use the labor market clearing condition, when let $x=log(a)$, we get:
\begin{align}
\left(\frac{1-\alpha}{w}\right)^{\frac{1}{\alpha}}z\lambda\int_{-\infty}^\infty e^x\bar{p}(x)dx=1\label{28}
\end{align}

And:
\[\int_{-\infty}^\infty\bar{p}(x)dx=1\]

Plugging equations (21) and (22) into equation (28), and solving the new equation:
\[\begin{gathered}
\frac{\left(\frac{1-\alpha}{w}\right)^{\frac{1}{\alpha}}\beta z\lambda}{\sqrt{\mu^2+2\beta\Sigma^2}}{\left[\int_{-\infty}^0e^x\exp\left(\frac{\mu+\sqrt{\mu^2+2\beta\Sigma^2}}{\Sigma^2}x\right)dx+\int_0^\infty e^x\exp\left(\frac{\mu-\sqrt{\mu^2+2\beta\Sigma^2}}{\Sigma^2}x\right)dx\right]}=1\\\frac{\left(\frac{1-\alpha}{w}\right)^{\frac{1}{\alpha}}\beta z\lambda}{\sqrt{\mu^2+2\beta\Sigma^2}}\times\frac{2\sqrt{\mu^2+2\beta\Sigma^2}}{2\beta-2\mu-\Sigma^2}=1\\\frac{\left(\frac{1-\alpha}{w}\right)^{\frac{1}{\alpha}}\beta z\lambda}{2\beta-2\mu-\Sigma^2}=\frac{1}{2}\end{gathered}\]

Then, we set a series of algebraic equations and insert them into the solution:
\[\begin{gathered}\mu=\left(\Pi+r+\frac{\left(\theta-r\right)^2}{\gamma\sigma^2}-\frac{1}{\gamma}{\left(\rho-(1-\gamma)(\Pi+r)-\frac{1}{2}(1-\gamma)\frac{\left(\theta-r\right)^2}{\gamma\sigma^2}\right)}-\frac{1}{2}\frac{\left(\theta-r\right)^2}{\gamma^2\sigma^2}\right)\\\Pi=\left(\alpha z\left(\frac{1-\alpha}{w}\right)^{\frac{1-\alpha}{\alpha}}-r-\delta\right)\lambda\\t=\left(\frac{1-\alpha}{w}\right)^{\frac{1}{\alpha}}\\\Sigma=\frac{\theta-r}{\gamma\sigma}\end{gathered}\]

And We get:
\begin{align}
2\beta z\lambda\gamma t+2\alpha z\lambda t^{1-\alpha}+2(r-\lambda r-\lambda\delta-\rho)+\frac{(1+\gamma)(\theta-r)^2}{\gamma\sigma^2}-2\beta\gamma=0\label{29}
\end{align}

In general, the analytic formula for $t$ and hence the equilibrium wage rate $w^*$ is hard to find. However, if we let $\alpha=0.5$ which is not very far from $\alpha=0.3$ that is widely used in literature, we then can apply the root formula for quadratic function to solve for $t$ and hence the equilibrium wage rate $w^*$. That is, let $\alpha=0.5$, then solve equation (29), we get:
\[\begin{gathered}2\beta z\lambda\gamma\left(\sqrt{t}\right)^2+z\lambda\sqrt{t}+2(r-\lambda r-\lambda\delta-\rho)+\frac{(1+\gamma)(\theta-r)^2}{\gamma\sigma^2}-2\beta\gamma=0\\\sqrt{t}=\frac{-z\lambda+\sqrt{z^2\lambda^2-8\beta z\lambda\gamma\left(2(r-\lambda r-\lambda\delta-\rho)+\frac{(1+\gamma)(\theta-r)^2}{\gamma\sigma^2}-2\beta\gamma\right)}}{4\beta z\lambda\gamma}\\2(r-\lambda r-\lambda\delta-\rho)+\frac{(1+\gamma)(\theta-r)^2}{\gamma\sigma^2}-2\beta\gamma<0\end{gathered}\]

Hence:
\[\left(\frac{1-\alpha}{w}\right)^{\frac{1}{\alpha}}=\left(\frac{-z\lambda+\sqrt{z^2\lambda^2-8\beta z\lambda\gamma\left(2(r-\lambda r-\lambda\delta-\rho)+\frac{(1+\gamma)(\theta-r)^2}{\gamma\sigma^2}-2\beta\gamma\right)}}{4\beta z\lambda\gamma}\right)^2\]

Then, given $\alpha=0.5$, and compute the equilibrium wage rate $w^*$:
\[\begin{gathered}w^*=(1-\alpha)\left(\frac{-z\lambda+\sqrt{z^2\lambda^2-8\beta z\lambda\gamma{\left(2(r-\lambda r-\lambda\delta-\rho)+\frac{(1+\gamma)(\theta-r)^2}{\gamma\sigma^2}-2\beta\gamma\right)}}}{4\beta z\lambda\gamma}\right)^{-2\alpha}\\w^*=\frac{2\beta z\lambda\gamma}{-z\lambda+\sqrt{z^2\lambda^2-8\beta z\lambda\gamma\left(2(r-\lambda r-\lambda\delta-\rho)+\frac{(1+\gamma)(\theta-r)^2}{\gamma\sigma^2}-2\beta\gamma\right)}}\end{gathered}\]

Now, plugging in the equilibrium interest rate $r^*$, we finally get $w^*$ clearly:
\begin{align}
w^*=\frac{2\beta z\lambda\gamma}{-z\lambda+\sqrt{\begin{bmatrix}z^2\lambda^2\\-16\beta z\lambda\gamma(\theta(1-\lambda)-\gamma\sigma^2(1-\lambda)^2-\lambda\delta-\rho)\\-8\beta z\lambda\gamma(\gamma(1+\gamma)\sigma^2(1-\lambda)^2-2\beta\gamma)\end{bmatrix}}}\label{30}
\end{align}

To summarize, the \textit{Mean Field Game} determines the equilibrium interest rate $r^*$ and the equilibrium wage rate $w^*$, which are given by equation (27) and (30) respectively. Now, according to the equilibrium interest rate $r^*$ and equilibrium wage rate $w^*$, our algebraic equations of wealth distribution with financial friction could be rewritten as:
\[\begin{gathered}\mu^*=\left(\Pi^*+r^*+\frac{\left(\theta-r^*\right)^2}{\gamma\sigma^2}-\frac{1}{\gamma}{\left(\rho-(1-\gamma)(\Pi^*+r^*)-\frac{1}{2}(1-\gamma)\frac{\left(\theta-r^*\right)^2}{\gamma\sigma^2}\right)}-\frac{1}{2}\frac{\left(\theta-r^*\right)^2}{\gamma^2\sigma^2}\right)\\\Pi^*=\left(\alpha z{\left(\frac{1-\alpha}{w^*}\right)}^{\frac{1-\alpha}{\alpha}}-r^*-\delta\right)\lambda\\\Sigma^*=\frac{\theta-r^*}{\gamma\sigma}\end{gathered}\]

Then, equation (19) becomes:
\begin{align}
dx=\mu^*dt+\Sigma^*dW_t\label{31}
\end{align}

 Therefore, we can solve the wealth distribution density function under the equilibrium situation according to MFG:
 \begin{align}
\bar{p}^*(x)=\frac{\beta}{\sqrt{{\mu^{*}}^2+2\beta{\Sigma^*}^2}}\exp{\left(\frac{\mu^*+\sqrt{{\mu^{*}}^2+2\beta{\Sigma^*}^2}}{{\Sigma^*}^2}x\right)},x<0\label{32}\\\bar{p}^{*}(x)=\frac{\beta}{\sqrt{{\mu^{*}}^2+2\beta{\Sigma^*}^2}}\exp{\left(\frac{\mu^{*}-\sqrt{{\mu^{*}}^2+2\beta{\Sigma^*}^2}}{{\Sigma^*}^2}x\right)},x>0\label{33}
\end{align}

Theoretically, equations (32) and (33) prove the existence of an equilibrium wealth distribution for firms.

\noindent\textbf{Further Prospect}: In Section 4, the setting of heterogeneity is exogenous, which means that the utility conversion weight $f_{\sigma}$ for the second type of agent is exogenously given. This is because we assume that the second type of agent interacts with big data, resulting in the dilution of their cognitive resources, then, the consumption that can be converted into utility is a weight of total consumption, and information uncertainty (data value) is commensurate with the magnitude of financial friction $\lambda$. Next, we will preliminarily attempt to propose a conceptual approach that makes effective consumption of agents with their cognitive resources endogenous\footnote{The rationality degree of agents' consumption stems from the quantity of cognitive resources they possess, more cognitive resources lead to higher state of rationality.}. Referring to the methods used by \citet{chang2024heterogeneity}: Based on the analysis in Section 2.2 (\textbf{Theorem 2}) of our paper, we obtained the dynamic distribution of cognitive resources $R_t$ and the equilibrium solution of cognitive resources $R^*$ in big data interaction. Now, we assume that the agent's effective consumption is a function of cognitive resources, $C_t(R_t)=\mathbb{F}(R_t)$, so different cognitive resources lead to different effective consumption levels, and the equilibrium effective consumption is $C^*(R^*)=\mathbb{F}(R^*)$. Based on the analysis in Section 4.4, let the equilibrium wealth of the firm be denoted as $W^*$. We further decompose the effective consumption $C_t$ and firm wealth $W_t$ under general conditions into deterministic equilibrium states $(C^*,W^*)$ and uncertain fluctuations $(\tilde{C}_t,\tilde{W}_t)$:
\[\begin{gathered}W_t=W^*+\tilde{W}_t\\C_t=C^*+\tilde{C}_t\end{gathered}\]

Then, $(W_t,C_t(R))$ evolve jointly according to the following linear functional VAR:

\[\begin{gathered}\tilde{W}_t=\mathcal{B}_{ww}\tilde{W}_{t-1}+\int \mathcal{B}_{wc}(\tilde{R})\tilde{C}_{t-1}(\tilde{R})d\tilde{R}+v_{w,t}\\\tilde{C}_t(R)=\mathcal{B}_{cw}(R)\tilde{W}_{t-1}+\int \mathcal{B}_{cc}(R,\tilde{R})\tilde{C}_{t-1}(\tilde{R})d\tilde{R}+v_{c,t}(R)\end{gathered}\]

Among these two functions above: $\mathcal{B}_{ww}$ represents\footnote{This is different from \citet{chang2024heterogeneity}, where $\mathcal{B}_{ww}$ is a multidimensional matrix, i.e., $\mathcal{B}_{ww}\in\mathbb{R}^n$, representing the impact of lagged values of various macroeconomic variables on their own and other variables' current values. In our discussion, we only focus on the single macro variable of firm wealth $W_t$.} the impact of the previous period's firm wealth $W_{t-1}$ on the current period's wealth $W_{t}$, $\mathcal{B}_{ww}\in\mathbb{R}^1$. $\mathcal{B}_{wc}$ is a function that takes cognitive resource points $\tilde{R}$ as input\footnote{For convenience, when we talk about a specific cognitive resource level $R_0$, we are talking about the consumption $C_t(R_0)$ determined by the cognitive resource point $R_0$, and the same applies below.}. Its value at each point $\tilde{R}$ quantifies the marginal impact of a small change in the distribution at the cognitive resource level $\tilde{R}$ on the macro variable $\tilde{W}_t$. $\mathcal{B}_{cw}$ is a function that takes cognitive resource point $R$ as its input\footnote{This can be understood as the heterogeneity of the impact of exogenous shocks contained in $\tilde{W}_{t-1}$ on consumption determined by cognitive resources, i.e., different shocks have different effects on agents’ consumption with varying levels of cognitive resources.}. Its value at each point $R$ quantifies the marginal impact of the lagged macroeconomic variable $\tilde{W}_{t-1}$ on the distribution density of the cognitive resource at level $R$. $\mathcal{B}_{cc}$ is a kernel density function that describes how changes in the density at each cognitive resource point $\tilde{R}$ in the previous period's cognitive resource distribution affect the density at another cognitive resource level point $R$ in the current period's cognitive resource distribution. That is, agents at the $\tilde{R}$ level in the previous period have a certain “probability” of moving to the $R$ level in the current period, and hence changing the density at the $R$ point\footnote{For example, if $R>\tilde{R}$ and $\mathcal{B}_{cc}$ is positive, then agents with a consumption level of $C_{t-1}(\tilde{R})$ in the previous period have a certain probability of shifting to a consumption level of $C_t(R)$ in the current period. When $R=\tilde{R}$ and $\mathcal{B}_{cc}$ is at its peak, this indicates that agents have a very high probability of maintaining the same consumption level in the current period as in the previous period.}. Then, $v_{w,t}$ and $v_{c,t}(R)$ both represent simplified stochastic shocks. Here $v_{w,t}$ is mean-zero random vector with covariance $\Omega_{ww}$ and $v_{c,t}(R)$ is a random element in a \textit{Hilbert Space} with covariance function $\Omega_{cc}(R,\tilde{R})$. Now, according to the function (5), which could be shifted to a \textit{Geometric Brownian Motion} (GBM), we can get the restrictions of $R_t$:
\[\frac{dR_i(t)}{R_i(t)}=\left[\mu^c(\theta^c-1)-\eta^c\sigma^c(n-1)^{1-\gamma^c}\right]dt+\psi^c\left[\theta^c-\frac{\mu^c\theta^c}{\mu^c+\eta^c\sigma^c(n-1)^{1-\gamma^c}}\right]dW(t)\]

Let:
\[\begin{gathered}X=\left(\mu^c(\theta^c-1)-\eta^c\sigma^c(n-1)^{1-\gamma^c}\right)\\Y=\psi^c\left(\theta^c-\frac{\mu^c\theta^c}{\mu^c+\eta^c\sigma^c(n-1)^{1-\gamma^c}}\right)
\end{gathered}\]

We get:
\[R_t{\sim}\mathcal{N}\left(R_0e^{Xt},R_0^2e^{2Xt}\left(e^{Y^2t}-1\right)\right)\]

Based on the initial analysis above, we can further transform the research framework of this paper from static to dynamic. This is the basis for introducing policy shocks that may affect agents' cognitive resources in the future research, so that we can explore the impact of policies on effective consumption and wealth distribution.

\section{Conclusion and Marginal Contribution}

\subsection{Summary}
\textbf{Overarching Issue}: Our article defines the third category of information beyond the signals from macroeconomic fluctuations (the state of the economic cycle and market) and microeconomic factor (the attributes of the transactions, auctions and products): Big data composed of general information entropy. This type of data is a creation of the rapid development of platform technology and artificial intelligence in the digital era. It is typically based on underlying algorithmic designs and disseminated through networks and platforms in professional or entertainment-oriented formats, continuously and on a large scale, to be received by agents.

Unlike conventional information economics analysis frameworks, since this type of big data is entirely composed of information entropy, its value lies solely in the information uncertainty measured by the magnitude of information entropy, rather than being a form of knowledge or signal that serves as an actual economic variable proxy. This is different from all “Knowledge Economy” based on \citet{hayek1945use}, so acquiring this type of big data does not require agents to pay material costs. Instead, we treat “cognitive resources” as one of the agents' resource endowments, viewing the formation and circulation of big data as an interaction between algorithm-driven intelligent entities and agents' “cognitive resources.” Agents obtain big data and engage in interactive behaviors by expending cognitive resources. Theoretical analysis demonstrates that as the time and scale of agents' interactions with big data expand, their cognitive resources will decrease. This ensures that agents' rational state will be influenced by big data, and thus affect the effectiveness of consumption. Based on this, we further employ prospect theory, an exogenous “investment-tax” model, and empirical evidence to respectively demonstrate the direction and specific amount of agents’ consumption adjustments under big data interaction, hence concretely constructing the Consumption Adjustment Weight Function “CAWF” for agents under big data interaction. We attempt to summarize the issues we have explored, including CAWF, using the following mathematical analysis:

Assuming that the accumulation of big data doesn't exceed the time dimension\footnote{Big data interaction is an optional behavior to agents, meaning that the influence of big data interaction on agents' consumption behavior occurs only if the agent engaged in big data interaction in the previous stage. For this reason, we define that the accumulation of big data does not exceed the time dimension. This differs from normal belief updating. In standard information economics, since macroeconomic or microeconomic signals are objectively present, agents inevitably receive information and undergo belief shifts while demanding and consuming, resulting in continuous consumption with belief updating.}. Hence, under infinite time, the accumulation of big data does not deviate from the existence of time. Thus, we denote time as $t$ and big data as $n(t)$:
\[\int_0^\infty n(t)dt<\int_0^\infty tdt\]

Let:
\[n(t)=t^{\frac{1}{k}},k\in(1,+\infty)\]

Hence:
\[\lim_{n\to+\infty,t\to+\infty}\frac{n}{t}=\lim_{t\to+\infty}\frac{1}{k}{\left(\frac{1}{t}\right)}^{\left(1-\frac{1}{k}\right)}\to0\]

We introduce the \textit{Dirac Delta Function} $\delta$ and denote consumption as a continuous function $C(t)$ with time:
\[\begin{gathered}\delta(t-t^*)=\left\{\begin{array}{cc}0&t-t^*\neq0\\\infty&t-t^*=0\end{array}\right.\\\int_0^\infty\delta(t-t^*)C(t)dt=\int_0^\infty\delta(t-t^*)C(t^*)dt=C(t^*)\end{gathered}\]

Therefore, under the influence of the $\delta(t-t^*)$ function, the consumption $C(t)$ at continuous time $t$ is changed to the consumption $C(t^*)$ at a specific time $t^*$. Based on the CRRA utility function $(\gamma\neq1)$ and CAWF, the core content of our article can be summarized by the following equation:

\[\frac{C(t^*)^{utility}}{C(t^*)^{total}}=\frac{\sqrt[1-\gamma]{U_{t^*}(1-\gamma)}}{C(t^*)^{total}}=(1+\mathrm{CAWF})\]

This may explains why the big data interaction in our article differs from the setting in conventional information economics, where macro-level and micro-level information drive changes in agents' beliefs, thus influencing consumption, as discussed in \citet{acemoglu2025big} or \citet{bhandari2025survey} and other similar information economics articles, in their discussions, agents' consumption can typically be described by a function $C(t)$ that does not explicitly incorporate time with the probability measure of subjective beliefs and rational expectations, making their consumption dynamic and continuous. In contrast, our big data interaction shock is analogous to the \textit{Dirac Delta Function} mentioned above, and its effect on agents is manifested in their consumption at time $t^*$. Simply explain: During a period of time $t$, the agents engage in big data interaction behavior. After their cognitive resources are diluted and reduced, their rational states change. At time $t^*=t+1$, the agents make consumption decision, how much of this consumption decision is effective? That is, how much utility does it provide? The weight of effective consumption, 1+CAWF, is the consequence of the impact of big data on the agent's cognitive resources.

Finally, we use CAWF to connect uncertainty and financial friction together, exploring two types of agents: Ordinary investors and investors who interact with big data, perceive environmental uncertainty, and convert effective consumption into utility as a weight of total consumption. Their wealth distributions are affected by financial friction or utility conversion weight. Combining numerical simulations and empirical analysis, we find that : (\textbf{A}) A reduction in financial friction significantly increases the average wealth of investors, but this also exacerbates wealth inequality. (\textbf{B}) When investors can convert consumption into utility as a weight of total consumption, this weight is commensurate with changes in financial friction. The average wealth of investors will increase as the utility conversion weight rises. Furthermore, the impact of the utility conversion weight on wealth inequality follows a U-shaped trend, with the smallest wealth inequality occurring when the weight approaches to 0.5.

\subsection{Supplementary Hypothesis according to Lucas Critique}
\textbf{The Critique}: Lucas Critique is an economic theory proposed by Robert Lucas Jr. in 1976. According to his article (\citealp{lucas1976econometric}). Lucas pointed out that once macroeconomic policies are implemented, people incorporate policy factors into their expectations, thereby influencing current behavior. This alters the policy variables referenced when the macroeconomic policies were initially formulated. Therefore, macroeconomic policies are inherently endogenous variables, and policies cannot achieve their intended effects.

According to the analysis of our article, cognitive resources are treated as the endowments of agents\footnote{We consider that an agent's rationality is not evaluated by the economic results of their behaviors, but determined by the cognitive resources the agent possesses. That is, Cognitive Resources$\rightarrow$Rationality$\rightarrow$Economic Results, rather than Economic Results$\rightarrow$Rationality. Compared to economic results, cognitive resource is a more micro-level and smaller variable.}, their rational behaviors essentially occur through the allocation of cognitive resources. Therefore, the allocation of cognitive resources is similar to the allocation of other general economic resources and can be influenced by policy shocks. Based on this and the Incentive Theory of Asymmetric Information (\citealp{mirrlees1999theory}; \citealp{mirrlees1976optimal})\footnote{They argued that markets can achieve incentive compatibility even if all economic agents are self-interested, provided that effective rules are designed to guide them, thereby achieving Pareto optimal allocation or other social objectives.}, it is necessary to model and make agents' decision-making endogenous in order to reasonably predict their responses, hence, we propose the following hypothesis as a supplement to Lucas Critique:

\begin{Hypothesis}
\textit{When a group of economic policies could be effective, it must include at least two dimensions of policies to achieve one economic goal: The first dimension is called the “Master Policy”, which defines a quantifiable economic objective by generating and allocating material resources of economy. The second dimension is called the “Auxiliary Policy”, which influences the rational state of economic agents, guiding the allocation of their cognitive resources to ensure that changes in their current economic behavior based on future expectations are conducive to achieving the economic objective.}
\end{Hypothesis}

In a short explanation: Policies that include both the \textbf{Master Policy} and \textbf{Auxiliary Policy} dimensions have a stronger “directionality” than quantitative target policies that only include the \textbf{Master Policy} dimension. The former can better ensure that economic resources (material and immaterial) are generated and allocated in a direction which is conducive to achieving policy objectives, and hence effectively achieve economic goal as expected.

\newpage

\singlespacing
\setlength{\bibsep}{2pt}
\bibliographystyle{plainnat}
\bibliography{main}

\end{document}